\newcommand{\ignore}[1]{}
\newcommand{\supp}{\mathsf{supp}}
\providecommand{\lnorm}[2]{\|{#1}\|_{\ell^{#2}}}
\providecommand{\lone}[1]{\lnorm{#1}{1}}
\providecommand{\linfty}[1]{\lnorm{#1}{\infty}}
\providecommand{\ltwo}[1]{\lnorm{#1}{2}}
\newcommand{\zo}{\bits}
\newcommand{\adv}{\calA}
\def\calA{{\mathcal A}} 
\def\calB{{\mathcal B}}
\def\calS{{\mathcal S}}
\def\calT{{\mathcal T}}
\def\dbC{{\mathbb C}}
\def\dbR{{\mathbb R}}
\def\dbZ{{\mathbb Z}}
\DeclareMathOperator{\nm}{\nmExt} 
\DeclareMathOperator{\expect}{E}
\def\chibar{{\overline \chi}} 
\newcommand{\dlog}{\log}
\newcommand{\purify}{\mathsf{purify}}
\theoremstyle{definition}
\newcommand{\set}[1]{\left \{{#1} \right \}}
\newcommand{\eps}{\epsilon}
\newcommand{\Cond}{\mathsf{Cond}}
\newcommand{\Ext}{\mathsf{Ext}}
\newcommand{\nmExt}{\mathsf{nmExt}}
\newcommand{\mac}{\mathsf{MAC}}
\newcommand{\lrmac}{\mathsf{lrMAC}}
\newcommand{\BI}{\begin{itemize}}
\newcommand{\EI}{\end{itemize}}
\newcommand{\BE}{\begin{enumerate}}
\newcommand{\EE}{\end{enumerate}}
\newtheorem{thm}{Theorem}      
\newcommand{\BT}{\begin{theorem}}   \newcommand{\ET}{\end{theorem}}
\newcommand{\BD}{\begin{definition}}   \newcommand{\ED}{\end{definition}}
\newcommand{\BCR}{\begin{corollary}} \newcommand{\ECR}{\end{corollary}}
\newtheorem{constr}[thm]{Construction}
\newcommand{\BCT}{\begin{constr}} \newcommand{\ECT}{\end{constr}}
\newcommand{\BL}{\begin{lemma}}   \newcommand{\EL}{\end{lemma}}
\newcommand{\BP}{\begin{proposition}}   \newcommand{\EP}{\end{proposition}}
\newcommand{\BCM}{\begin{claim}}   \newcommand{\ECM}{\end{claim}}
\newcommand{\BF}{\begin{fact}}   \newcommand{\EF}{\end{fact}}
\newcommand{\BA}{\begin{assumption}}   \newcommand{\EA}{\end{assumption}}
 \def\Tet{{\Theta}}
\def\lam{{\lambda}}
\def\psibar{{\overline \psi}}
\def\chibar{{\overline \chi}}
\def\eps{\varepsilon}
\def\le{\leqslant} \def\ge{\geqslant}
\def\ExtendSymbol#1#2#3#4#5{\ext@arrow 0099{\arrowfill@#1#2#3}{#4}{#5}}
\def\RightExtendSymbol#1#2#3#4#5{\ext@arrow 0359{\arrowfill@#1#2#3}{#4}{#5}}
\def\LeftExtendSymbol#1#2#3#4#5{\ext@arrow 6095{\arrowfill@#1#2#3}{#4}{#5}}
\newcommand\llrightarrow[2][]{\RightExtendSymbol{-}{-}{\rightarrow}{#1}{#2}}
\newcommand\llleftarrow[2][]{\RightExtendSymbol{\leftarrow}{-}{-}{#1}{#2}}
\newcommand{\hinf}{H_\infty}
\newcommand{\thinf}{\widetilde{H}_\infty}
\begin{document}

\begin{titlepage}
\def\thepage{}

\date{}
\title{
Privacy Amplification and Non-Malleable Extractors\\ Via Character Sums
}

\author{
Yevgeniy Dodis\thanks{
Partially supported by
NSF Grants CNS-1065288, CNS-1017471, CNS-0831299 and Google Faculty Award.}\\
Department of Computer Science\\
New York University\\
New York, NY 10012, U.S.A.\\
dodis@cs.nyu.edu\\
\and
Xin Li\thanks{
Partially supported by
NSF Grants CCF-0634811 and CCF-0916160 and THECB ARP Grant 003658-0113-2007.}\\
Department of Computer Science\\
University of Texas at Austin\\
Austin, TX 78701, U.S.A.\\
lixints@cs.utexas.edu\\
\and
Trevor D. Wooley\thanks{Supported
by a Royal Society Wolfson Research Merit Award.}\\
School of Mathematics \\
University of Bristol \\
University Walk, Clifton\\
Bristol BS8 1TW, United Kingdom\\
matdw@bristol.ac.uk\\
\and
David Zuckerman\footnotemark[2]
\\
Department of Computer Science\\
University of Texas at Austin\\
1616 Guadalupe, Suite 2.408\\
Austin, TX 78701, U.S.A.\\
diz@cs.utexas.edu
}

\maketitle \thispagestyle{empty}

\begin{abstract}
In studying how to communicate over a public channel with an active adversary, Dodis and Wichs introduced the notion of a non-malleable extractor. A non-malleable extractor dramatically strengthens the notion of a strong extractor.  A strong extractor takes two inputs, a weakly-random $x$ and a uniformly random seed $y$, and outputs a string which appears uniform, even given $y$.  For a non-malleable extractor $\nm$, the output $\nm(x,y)$ should appear uniform given $y$ as well as $\nm(x,\adv(y))$, where $\adv$ is an arbitrary function with $\adv(y) \neq y$.

We show that an extractor introduced by Chor and Goldreich is non-malleable when the entropy rate is above half.
It outputs a linear number of bits when the entropy rate is $1/2 + \alpha$, for any $\alpha>0$.
Previously, no nontrivial parameters were known for any non-malleable extractor.
To achieve a polynomial running time when outputting many bits, we rely on a widely-believed conjecture about the distribution of prime numbers
in arithmetic progressions.\ Our analysis involves character sum estimates, which may be of independent interest.

Using our non-malleable extractor, we obtain protocols for ``privacy amplification":  key agreement between two parties who share a weakly-random secret. Our protocols work in the presence of an active adversary with unlimited computational power, and have asymptotically optimal entropy loss. When the secret has entropy rate greater than $1/2$, the protocol follows from a result of Dodis and Wichs, and takes two rounds.  When the secret has entropy rate $\delta$ for any constant~$\delta>0$, our new protocol takes a constant (polynomial in $1/\delta$) number of rounds. Our protocols run in polynomial time under the above well-known conjecture about primes.
\end{abstract}
\end{titlepage}

\section{Introduction}
Bennett, Brassard, and Robert \cite{bbr} introduced the basic cryptographic question of \emph{privacy amplification}.
Suppose Alice and Bob share an $n$-bit secret key~$X$, which is weakly random.
This could occur because the secret is a password or biometric data, neither of which is uniformly random, or because an adversary Eve managed
to learn some information about a secret which previously was uniformly random.
How can Alice and Bob communicate over a public
channel to transform $X$ into a nearly uniform secret key, 
about which Eve has negligible information?
We measure the randomness in $X$ using min-entropy.

\begin{definition}
The \emph{min-entropy} of a random variable~$X$ is
\[ \hinf(X)=\min_{x \in \supp(X)}\log_2(1/\Pr[X=x]).\]
For $X \in \zo^n$, we call $X$ an $(n,\hinf(X))$-source, and we say $X$ has
\emph{entropy rate} $\hinf(X)/n$.
\end{definition}

We assume Eve has unlimited computational power.
If Eve is passive, i.e., cannot corrupt the communication
between Alice and Bob, then it is not hard to use randomness extractors~\cite{nz} to solve this problem.
In particular, a strong extractor suffices.

\smallskip\noindent
{\bf Notation.} We let $[s]$ denote the set $\set{1,2,\ldots,s}$.
For $\ell$ a positive integer,
$U_\ell$ denotes the uniform distribution on $\zo^\ell$, and
for $S$ a set,
$U_S$ denotes the uniform distribution on $S$.
When used as a component in a vector, each $U_\ell$ or $U_S$ is assumed independent of the other components.
We say $W \approx_\eps Z$ if the random variables $W$ and $Z$ have distributions which are $\eps$-close in variation distance.

\begin{definition}\label{def:strongext}
A function $\Ext : \bits^n \times \bits^d \rightarrow \bits^m$ is  a \emph{strong $(k,\eps)$-extractor} if for every source $X$ with min-entropy $k$
and independent $Y$ which is uniform on $\zo^d$,
\[ (\Ext(X, Y), Y) \approx_\eps (U_m, Y).\]
\end{definition}

Using such an extractor, the case when Eve is passive can be solved as follows. Alice chooses a fresh random string $Y$ and sends it to Bob. They then both compute $\Ext(X, Y)$. The property of the strong extractor guarantees that even given $Y$, the output is close to uniform.

The case when Eve is active, i.e., can corrupt the communication, has recently received attention. Maurer and Wolf \cite{MW97} gave a one-round protocol which works when the entropy rate of the weakly-random secret $X$ is bigger than $2/3$. This was later improved by Dodis, Katz, Reyzin, and Smith \cite{dkrs} to work for entropy rate bigger than $1/2$. However in both cases the resulting nearly-uniform secret key $R$ is significantly shorter than the min-entropy of $X$.
Dodis and Wichs \cite{DodW} showed that there is no one-round protocol for entropy rate less than $1/2$.
Renner and Wolf \cite{RW03} gave the first protocol which works for entropy rate below $1/2$.
Kanukurthi and Reyzin \cite{kr:agree-close} simplified their protocol and showed that the protocol
can run in $O(s)$ rounds and achieve entropy loss $O(s^2)$ to achieve security parameter~$s$.
(Recall that a protocol achieves security parameter~$s$ if Eve cannot predict with advantage
more than $2^{-s}$ over random.  For an active adversary, we further require that Eve cannot force Alice and Bob to output different secrets and not abort
with probability more than $2^{-s}$.)
Dodis and Wichs~\cite{DodW} improved the number of rounds to 2 but did not improve the entropy loss.
Chandran, Kanukurthi, Ostrovsky, and Reyzin \cite{ckor} improved the entropy loss to $O(s)$ but the number of rounds remained $O(s)$.
The natural open question is therefore whether there is a 2-round protocol with entropy loss $O(s)$.

Dodis and Wichs showed how such a protocol could be built using
\emph{non-malleable extractors}, which they defined.
In the following definition of (worst-case) non-malleable extractor, think of an adversary changing the value of the seed via the function $\adv$.

\begin{definition}
\label{nmdef}
A function $\nm:[N] \times [D] \to [M]$ is a $(k,\eps)$-non-malleable extractor if,
for any source $X$ with $\hinf(X) \geq k$ and any function $\adv:[D] \to [D]$ such that $\adv(y) \neq y$ for all~$y$,
the following holds.
When $Y$ is chosen uniformly from $[D]$ and independent of $X$,
\[
(\nm(X,Y),\nm(X,\adv(Y)),Y) \approx_\eps (U_{[M]},\nm(X,\adv(Y)),Y).
\]
\end{definition}

Note that this dramatically strengthens the definition of strong extractor.
In a strong extractor, the output must be indistinguishable from uniform, even given the random seed.
For a non-malleable extractor, a distinguisher is not only given a random seed, but also the output of the extractor with the given
input and an arbitrarily correlated random seed.  Note that $\nm(X,\adv(Y))$ need not be close to uniform.
The above ``worst-case'' definition is slightly weaker than the ``average-case'' definition needed by applications, but Dodis and Wichs showed that any worst-case $(k,\eps)$-non-malleable extractor is also an average-case $(k-\log(1/\eps),2\eps)$-non-malleable extractor. See Subsection~\ref{avgcase}.

Unfortunately, Dodis and Wichs were not able to construct such non-malleable extractors.
Instead, they constructed ``look-ahead extractors," which are weaker than non-malleable extractors, but nevertheless yielded
the two-round, $O(s^2)$-entropy loss protocol mentioned above.

Dodis and Wichs also showed the existence of non-malleable extractors.
The existence of excellent standard randomness extractors can be shown by the probabilistic method in a straightforward way.
For non-malleable extractors, the argument requires more work. Nevertheless, Dodis and Wichs showed that non-malleable extractors exist
with $k>2m+3\log(1/\eps) + \log d + 9$ and $d>\log(n-k+1) + 2\log (1/\eps) + 7$, for $N=2^n$, $M=2^m$, and $D=2^d$.

The definition of non-malleable extractor is so strong that before our work, no explicit construction was known for any length seed achieving
a one-bit output, even for min-entropy $k=.99n$.
For example, a first attempt might be $f(x,y) = x \cdot y$, where the inner product is taken over GF(2).  However, this fails, even for min-entropy $n-1$.  To see this, take $X$ to be the bit 0 concatenated with $U_{n-1}$.  Let $\adv(y)$ be $y$ with the first bit flipped.  Then for all $x$ in the support of $X$, one has $f(x,y) = f(x,\adv(y))$.

Although general Hadamard codes don't work, we nevertheless show that a specific near-Hadamard code that comes from the Paley graph
works for min-entropy $k>n/2$.  The Paley graph function is $\nm(x,y) = \chi(x-y)$, where $x$ and $y$ are viewed as elements in a finite field $\F$ of odd order~$q$ and $\chi$ is the
quadratic character $\chi(x) = x^{(q-1)/2}$.
(The output of $\chi$ is in $\set{\pm 1}$, which we convert to an element of $\zo$.)
The function $\nm(x,y) = \chi(x+y)$ works equally well.
The proof involves estimating a nontrivial character sum.

We can output $m$ bits by computing the discrete logarithm $\log_g (x+y) \mod M$.  
This extractor was originally introduced by Chor and Goldreich \cite{cg:weak} in the context of two-source extractors.
To make this efficient, we need $M$ to divide $q-1$.
A widely-believed conjecture about primes in arithmetic progressions implies that such a $q$ is not too large (see Conjecture~\ref{conj-primes}).
Our result is stated as follows.

\BT \label{thm:nm}
For any constants $\alpha,\beta,\gamma>0$ with $\beta + \gamma < \alpha/2$,
there is an explicit $(k=(1/2 + \alpha)n,\eps)$-non-malleable extractor
$\nm:\zo^n \times \zo^d \to \zo^m$ for $\eps = 2^{-\gamma n}$ and any $m \leq \beta n$.
It has seed length $d=n$ and runs in polynomial time if
Conjecture~\ref{conj-primes} holds or $m=O(\log n)$.
\ET

As a direct corollary of Theorem~\ref{thm:nm} and the protocol of Dodis and Wichs, we obtain a 2-round protocol for privacy amplification with optimal entropy loss,
when the entropy rate is $1/2+\alpha$ for any $\alpha > 0$.  This improves the significant entropy loss in the one-round protocols
of Dodis, Katz, Reyzin, and Smith \cite{dkrs} and Kanukurthi and Reyzin \cite{kr:robust-fuzzy}.

Next, we use our non-malleable extractor to give a constant-round privacy amplification protocol with optimal entropy loss, when the entropy rate is $\delta$ for any constant $\delta>0$.
This significantly improves the round complexity of  \cite{kr:agree-close} and \cite{ckor}. It also significantly improves the entropy loss of \cite{DodW}, at the price of a larger, but still comparable ($O(1)$ vs. 2), round complexity.
Our result is stated as follows.

\BT \label{thm:privacy}
Under conjecture~\ref{conj-primes}, for any constant $0<\delta<1$ and error $2^{-\Omega(\delta n)} < \e < 1/n$, there exists a
polynomial-time, constant-round $(k=\delta n, m=\delta n - O(\log(1/\e)), \eps)$-secure
protocol for privacy amplification. More specifically, the protocol takes number of rounds $\poly(1/\delta)=O(1)$, and achieves entropy loss $k-m = \poly(1/\delta)\log(1/\e)=O(\log(1/\e))$.
\ET

\noindent{\bf Subsequent work.}
Following the preliminary version of our work \cite{dlwz-focs},
Cohen, Raz, and Segev \cite{CRS11} gave an alternative construction of a non-malleable extractor
for min-entropy rate $1/2 + \alpha$.  Their construction has the advantage that it works for any seed length $d$ with
$2.01 \log n \leq d \leq n$, although their output length $m$ remains small if $d$ is small, i.e., $m = \Theta(d)$.
They further do not rely on any unproven assumption.
Our construction, or at least the one-bit version, appears to be a special case of their construction.

Inspired by their elegant work, we subsequently used ideas related to \cite{CRS11} and \cite{Raz05}
to strengthen our character sum and show that our non-malleable extractor works even if the seed has entropy only $\Theta(m + \log n)$.
In particular, this implies that our extractor can also use a seed as small as $O(\log n)$.
We believe that their proof can also be modified to show that their construction works for weak seeds.

To state our results, we define non-malleable extractors for weak seeds.

\begin{definition}\label{def:weak-seed}
A function $\nm:[N] \times [D] \to [M]$ is a $(k,k',\eps)$-non-malleable extractor if,
for any source $X$ with $\hinf(X) \geq k$, any seed $Y$ with $\hinf(Y) \geq k'$,
and any function $\adv:[D] \to [D]$ such that $\adv(y) \neq y$ for all $y$,
the following holds:
\[
(\nm(X,Y),\nm(X,\adv(Y)),Y) \approx_\eps (U_{[M]},\nm(X,\adv(Y)),Y).
\]
\end{definition}

We can now state our theorem for weak seeds.  We stress that we proved this theorem only after seeing \cite{CRS11}.

\begin{theorem}
\label{nm-weakseed}
For any $\epsilon > 0$ and constant $\alpha>0$, there is a constant $c \leq 8/\alpha$ such that
there is an explicit $(k=(1/2+\alpha)n, k' , \eps)$-non-malleable extractor
$\nm:\zo^n \times \zo^d \to \zo^m$
for $d=n$ and $k' = c(m+ \log \eps^{-1} + \log n)$. In particular, we can reduce the seed length $d$ of our $(k=(1/2+\alpha)n, \eps)$-non-malleable extractor to $d=c(m+ \log \eps^{-1} + \log n)$.
Our extractor runs in polynomial time if Conjecture~\ref{conj-primes} holds or $m=O(\log n)$.
\end{theorem}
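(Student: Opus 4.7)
The plan is to extend the character sum estimate used to prove Theorem~\ref{thm:nm} so that it continues to hold when the seed $Y$ is only required to be a weak source with min-entropy $k'$, rather than being uniform. By the standard decomposition of a weak source as a convex combination of flat sources, it suffices to treat flat $X$ and $Y$ supported on sets $S, T \subseteq \mathbb{F}_q$ of sizes $|S|=2^k$ and $|T|=2^{k'}$, where $q\approx 2^n$ is the prime used to instantiate $\nm(x,y)=\log_g(x+y) \bmod M$. By the Vazirani XOR lemma, the non-malleability condition reduces, for each nontrivial multiplicative character $\chi$ of $\mathbb{F}_q^{\ast}$ whose order divides $M$, and for every $\calA:\mathbb{F}_q \to \mathbb{F}_q$ with $\calA(y)\neq y$, to bounding the correlation
\[
\Phi_\chi \;=\; \frac{1}{|S|\,|T|}\sum_{x \in S}\sum_{y \in T}\chi\!\Big(\frac{x+y}{x+\calA(y)}\Big)
\]
in absolute value by $\eps/2^m$.

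Following the approach introduced in \cite{CRS11,Raz05}, I would combine repeated Cauchy--Schwarz steps with Weil-type estimates for character sums of rational functions. A first Cauchy--Schwarz in the $y$-variable reduces the problem to estimating
\[
|\Phi_\chi|^2 \;\le\; \frac{1}{|T|^2}\sum_{y_1, y_2 \in T}\Big|\frac{1}{|S|}\sum_{x \in S}\chi\!\Big(\frac{(x+y_1)(x+\calA(y_2))}{(x+y_2)(x+\calA(y_1))}\Big)\Big|.
\]
For pairs $(y_1,y_2)$ outside a thin ``exceptional'' set, the rational function in the exponent is not a perfect $M$-th power, and a Weil bound combined with a Fourier (P\'olya--Vinogradov) expansion of the indicator $\mathbf 1_S$ bounds the inner average by a quantity exponentially small in $\alpha n$, exploiting $|S|=2^{(1/2+\alpha)n}$. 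A second Cauchy--Schwarz step in the $x$-variable may be needed to convert the bound from one involving $P_X$ to one involving its collision probability $2^{-k}$, but its effect is merely to raise $\Phi_\chi$ to a higher even power and leave the structure of the argument unchanged.

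The main obstacle, and the technical heart of the proof, is controlling the exceptional pairs $(y_1,y_2)$ on which the rational function degenerates: namely $y_1=y_2$, $y_1=\calA(y_2)$, $y_2=\calA(y_1)$, or $\calA(y_1)=\calA(y_2)$. For each of the first three conditions, fixing one coordinate determines the other (using $\calA(y)\neq y$), so together they contribute at most $3/|T|=3\cdot 2^{-k'}$ to the outer average. The fourth is more subtle because $\calA$ can have large fibers; it is handled by slicing $T$ according to the level sets of $\calA$, on which the rational function collapses to a pure M\"obius transformation $(x+y_1)/(x+y_2)$, whose character sum is in turn controlled by the same Weil-plus-Fourier estimate. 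Aggregating the contributions yields $|\Phi_\chi|^2 = O\!\big(2^{-k'} + 2^{-\alpha n/C}\big)$ for a constant $C$ depending only on $\alpha$, and choosing $k' \ge c(m+\log\eps^{-1}+\log n)$ for a suitable constant $c \le 8/\alpha$ makes both summands at most $(\eps/2^m)^2$. Summing over the $\le 2^m$ nontrivial characters $\chi$ yields total bias at most $\eps$, establishing the first part of the theorem. The ``in particular'' clause is then immediate: a uniform seed of length $k'$ can be padded by zeros (or any fixed injection) into $\{0,1\}^n$ to yield a flat distribution of min-entropy $k'$ on $\{0,1\}^n$, to which the first part applies verbatim.
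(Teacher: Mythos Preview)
Your outline has the right high-level shape (reduce to a character sum via the XOR lemma, apply Cauchy--Schwarz, invoke Weil on the non-degenerate part, and control the exceptional set), but two of the key steps would not go through as written.

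First, a single Cauchy--Schwarz step producing pairs $(y_1,y_2)$ is not enough when $|T|=2^{k'}$ is small. If you carry out the calculation you propose (even after extending the $x$-sum to all of $\F_q$ so that Weil applies), the ``bad'' tuples contribute a term of order $q^2/(|T|\,|S|^2)$, which forces $|T|\gtrsim q^{1-2\alpha}$---essentially full entropy in the seed. The paper instead applies Cauchy--Schwarz once (in $y$) and then H\"older's inequality with exponent $2r$, producing $2r$-tuples $(y_1,\dots,y_{2r})\in T^{2r}$; choosing $r\approx (\log q)/\log|T|$ is exactly what lets $|T|$ be as small as $(\log q)^{O(1)}(1/\eta)^{O(1/\alpha)}$. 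Correspondingly, the exceptional-set analysis is not for pairs but for $2r$-tuples $\mathbf y$ for which the associated polynomial $h_{\mathbf y}$ is a perfect $(q-1)$st power, and the paper bounds this set by a combinatorial (graph-based) argument showing $|\mathcal B|\le (2r)^{4r}|T|^r$.

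Second, your plan to ``use a Weil bound combined with a Fourier (P\'olya--Vinogradov) expansion of $\mathbf 1_S$'' does not work for an arbitrary set $S$: P\'olya--Vinogradov-type completion is effective for intervals or arithmetic progressions, but for a generic subset $S\subseteq\F_q$ the Fourier $\ell_1$-mass of $\mathbf 1_S$ can be as large as $\sqrt{q|S|}$, which gives nothing. The paper avoids this entirely: after Cauchy--Schwarz/H\"older it uses positivity to extend the sum over $s,t$ from $S$ to all of $\F_q$, so that Weil can be applied directly to a complete sum $\sum_{s\in\F_q}\chi(h_{\mathbf y}(s))$. (A minor additional point: your reduction writes the correlation as $\chi\big((x+y)/(x+\calA(y))\big)$, but the XOR lemma actually requires handling two independent characters $\chi_a,\chi_b$ with only $\chi_a$ nontrivial; the paper's polynomial $h_{\mathbf y}$ has exponents $a,b,q{-}1{-}a,q{-}1{-}b$ for precisely this reason.)
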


\medskip
\noindent{\bf Organization.}
Since our first proof is for the non-malleable extractor, we begin with an overview of our privacy amplification protocol in Section~\ref{overview}.  (Readers interested only in the non-malleable extractor can skip this section.)  We discuss some preliminaries in Section~\ref{prelim}, the non-malleable extractor in Section~\ref{ext}, and the character sum estimate in Section~\ref{csum}.  Finally, we give full details of the privacy amplification protocol in Section~\ref{privacy}. In Appendix~\ref{gen} we give a generalization of the non-malleable extractor.

\section{Overview of the Protocol for Privacy Amplification}\label{overview}

We first describe Dodis and Wichs' optimal two-round protocol using a non-malleable extractor. The protocol also uses a cryptographic primitive: a one-time message authentication code (MAC). Roughly speaking, a MAC uses a private uniformly random key $R$ to produce a tag $T$ for a message $m$, such that without knowing the key, the probability that an adversary can guess the correct tag $T'$ for another message $m' \neq m$ is small, even given $m$ and $T$.

Now assume that we have a non-malleable extractor $\nmExt$ that works for any $(n,k)$-source~$X$. Then there is a very natural two-round privacy amplification protocol. In the first round Alice chooses a fresh random string $Y$ and sends it to Bob. Bob receives a possibly modified string $Y'$. They then compute $R=\nmExt(X, Y)$ and $R'=\nmExt(X, Y')$ respectively. In the second round, Bob chooses a fresh random string $W'$ and sends it to Alice, together with $T'=\mac_{R'}(W')$ by using $R'$ as the MAC key. Alice receives a possibly modified version $(W, T)$, and she checks if $T=\mac_R(W)$. If not, then Alice aborts; otherwise they compute outputs $Z=\Ext(X, W)$ and $Z'=\Ext(X, W')$ respectively, where $\Ext$ is a seeded strong extractor. The protocol is depicted  in Figure~\ref{fig:AKA1}.

The analysis of the above protocol is also simple. If Eve does not change $Y$, then $R=R'$ and is (close to) uniform. Therefore by the property of the MAC the probability that Eve can change $W'$ without being detected is very small. On the other hand if Eve changes $Y$, then by the property of the non-malleable extractor, one finds that $R'$ is (close to) independent of $R$. Thus in this case, again the probability that Eve can change $W'$ without being detected is very small. In fact, in this case Eve cannot even guess the correct MAC for $W'$ with a significant probability.

The above protocol is nice, except that we only have non-malleable extractors for entropy rate $>1/2$. As a direct corollary this gives our 2-round privacy amplification protocol for entropy rate $>1/2$. To get a protocol for arbitrary positive entropy rate, we have to do more work.

We start by converting the shared weak random source $X$ into a somewhere high min-entropy rate source. The conversion uses recent condensers built from sum-product theorems. Specifically, any $n$-bit weak random source with linear min-entropy can be converted into a matrix with a constant number of rows, such that at least one row has entropy rate $0.9$.\footnote{In fact, the result is (close to) a convex combination of such matrices. For simplicity, however, we can assume that the result is just one such matrix, since it does not affect the analysis.} Moreover each row still has $\Theta(n)$ bits. Note that since Alice and Bob apply the same function to the shared weak random source, they now also share the same rows.

Now it is natural to try the two-round protocol for each row and hope that it works on the row with high min-entropy rate. More specifically, for each row $i$ we have a two round protocol that produces $R_i, R'_i$ in the first round and $Z_i, Z'_i$ in the second round. Now let $g$ be the first row that has min-entropy rate $0.9$. We hope that $Z_g=Z'_g$ with high probability, and further that $Z_g,Z'_g$ are close to uniform and private. This is indeed the case if we run the two round protocol for each row sequentially (namely we run it for the first row, and then the second row, the third row, and so on), and can be argued as follows.

Assume the security parameter we need to achieve is $s$, so each of $R_i, R'_i$ has $O(\log n+s)$ bits by the property of the MAC. As long as $s$ is not too large, we can fix all these random variables up to row $g-1$, and argue that row $g$ still has min-entropy rate $>1/2$ (since each row has $\Theta(n)$ bits we can actually achieve a security parameter up to $s=\Omega(n)$). Note that we have essentially fixed all the information about $X$ that can be leaked to Eve. Therefore now for row $g$ the protocol succeeds and thus $Z_g=Z'_g$ with high probability, and $Z_g,Z'_g$  are close to uniform and private.

However, we don't know which row is the good row. We now modify the above protocol to ensure that, once we reach the first good row $g$, for all subsequent rows $i$, with $i >g$, we will have that $Z_i=Z'_i$ with high probability, and further $Z_i, Z'_i$  are close to uniform and private. If this is true then we can just use the output for the last row as the final output.

To achieve this, the crucial observation is that once we reach a row $i-1$ such that $Z_{i-1}=Z'_{i-1}$, and $Z_{i-1},Z'_{i-1}$ are close to uniform and private, then $Z_{i-1}'$ can be used as a MAC key to authenticate $W'_i$ for the next row. Now if $W'_i=W_i$ for row $i$, then $Z_i=Z'_i$ and $Z_i,Z'_i$ will also be close to uniform and private. Therefore, we modify the two-round protocol so that in the second round for row $i$, not only do we use $T'_{i}=\mac_{R'_i}(W'_i)$ to authenticate $W'_i$, but also we use $L'_{i}=\mac_{Z'_{i-1}}(W'_i)$ to authenticate $W'_i$.

This would have worked given that  $Z_{i-1}=Z'_{i-1}$, and $Z_{i-1},Z'_{i-1}$ are close to uniform and private, except for another complication. The problem is that now $T'_{i}=\mac_{R'_i}(W'_i)$ could leak information about $Z_{i-1}$ to Eve, so $Z_{i-1}$ is no longer private. Fortunately, there are known constructions of MACs that work even when the key is not uniform, but instead only has large enough average conditional min-entropy in the adversary's view. Specifically, \theoremref{thm:mac} indicates that the security parameter of this MAC is roughly the average conditional min-entropy of the key minus half the key length, and the key length is roughly twice as long as the length of the tag. Therefore, we can choose a small tag length for  $T'_{i}=\mac_{R'_i}(W'_i)$, and a large tag length for  $L'_{i}=\mac_{Z'_{i-1}}(W'_i)$. For example, if the tag length for $T'_{i}$ is $2s$, and the tag length for $T'_{i2}$ is $4s$, then the key length for $L'_{i}$ is $8s$. Thus the average min-entropy of $Z_{i-1}$ conditioned on $T'_{i}$ is $8s-2s=6s$, and we can still achieve a security parameter of $6s-4s=2s$.

Finally, the discussion so far implicitly assumed that Eve follows a natural ``synchronous'' scheduling, where she never tries to get one party out-of-sync with another party. To solve this  problem, after each Phase $i$ Bob performs a ``liveness'' test, where Alice has to respond to a fresh extractor challenge from Bob to convince Bob that Alice is still ``present'' in this round. This ensures that if Bob completes the protocol, Alice was ``in-sync'' with Bob throughout. However, Eve might be able to make Alice be out-of-sync with Bob, causing Alice to output a non-random key (and Bob reject). To solve this last problem, we add one more round at the end which ensures that Alice always outputs a random key (and Bob either outputs the same key or rejects).

With this modification, the complete protocol is depicted in Figure~\ref{fig:AKA2}. Essentially, for the first good row, the property of the non-malleable extractor guarantees that Eve cannot change $W'_g$ with significant probability. For all subsequent rows, by using the output $Z'_{i-1}$ from the previous row as the MAC key, the property of the MAC guarantees that Eve cannot change $W'_i$ with significant probability. Therefore, the output for the last row can be used to authenticate the last seed of the extractor chosen by Alice (for the reason mentioned above) to produce the final output.

Finally, we note that our final protocol has $O(1)$ rounds and achieves asymptotically optimal
entropy loss is $O(s+\log n)$, for security parameter $s$.

\section{Preliminaries} \label{prelim}
We often use capital letters for random variables and corresponding small letters for their instantiations. Let $|S|$ denote the cardinality of the set~$S$.
Let $\dbZ_r$ denote the cyclic group $\dbZ/(r\dbZ)$,
and let $\F_q$ denote the finite field of size $q$.
All logarithms are to the base 2.

\subsection{Probability distributions}
\begin{definition} [statistical distance]Let $W$ and $Z$ be two distributions on
a set $S$. Their \emph{statistical distance} (variation distance) is
\begin{align*}
\Delta(W,Z) \eqdef \max_{T \subseteq S}(|W(T) - Z(T)|) = \frac{1}{2}
\sum_{s \in S}|W(s)-Z(s)|.
\end{align*}
\end{definition}

We say $W$ is $\eps$-close to $Z$, denoted $W \approx_\eps Z$, if $\Delta(W,Z) \leq \eps$.
For a distribution $D$ on a set $S$ and a function $h:S \to T$, let $h(D)$ denote the distribution on $T$ induced by choosing $x$ according to $D$ and outputting $h(x)$.
We often view a distribution as a function whose value at a sample point is the probability of that sample point.
Thus $\lone{W-Z}$ denotes the $\ell_1$ norm of the difference of the distributions specified by the random variables $W$ and $Z$, which equals $2\Delta(W,Z)$.
\subsection{Average conditional min-entropy}
\label{avgcase}

Dodis and Wichs originally defined non-malleable extractors with respect to average conditional min-entropy, a notion defined by
Dodis, Ostrovsky, Reyzin, and Smith \cite{dors}.

\begin{definition}
The \emph{average conditional min-entropy} is defined as
\[ \thinf(X|W)= - \log \left (\expect_{w \leftarrow W} \left [ \max_x \Pr[X=x|W=w] \right ] \right )
= - \log \left (\expect_{w \leftarrow W} \left [2^{-\hinf(X|W=w)} \right ] \right ).
\]
\end{definition}

Average conditional min-entropy tends to be useful for cryptographic applications.
By taking $W$ to be the empty string, we see that average conditional min-entropy is at least as strong as min-entropy.
In fact, the two are essentially equivalent, up to a small loss in parameters.
We have the following lemmas.

\begin{lemma} [\cite{dors}]
\label{entropies}
For any $s > 0$,
$\Pr_{w \leftarrow W} [\hinf(X|W=w) \geq \thinf(X|W) - s] \geq 1-2^{-s}$.
\end{lemma}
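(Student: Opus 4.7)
The plan is to recognize this lemma as a one-line reformulation of Markov's inequality applied on a logarithmic scale. First I would introduce the nonnegative random variable $f(W) := 2^{-\hinf(X|W=W)}$, which by definition equals $\max_{x}\Pr[X=x \mid W]$. The defining identity for average conditional min-entropy then gives
\[
\expect[f(W)] \;=\; 2^{-\thinf(X|W)}.
\]

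Next, I would pass the event $\hinf(X|W=w) \ge \thinf(X|W) - s$ through the order-reversing map $t \mapsto 2^{-t}$, rewriting it equivalently as
\[
f(w) \;\le\; 2^{s}\cdot \expect[f(W)].
\]
Since $f(W)$ is nonnegative, Markov's inequality directly yields
\[
\Pr_{w \leftarrow W}\bigl[f(w) > 2^{s}\cdot \expect[f(W)]\bigr] \;\le\; 2^{-s},
\]
so the complementary event has probability at least $1-2^{-s}$, which is the desired statement.

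No genuine obstacle arises: the whole argument is a single application of Markov's inequality to the random variable whose expectation is built into the definition of $\thinf$. The only thing to watch carefully is the sign conventions when passing between $\hinf$ and $2^{-\hinf}$, since the exponential transformation reverses the direction of the inequality and shifts additive deviations into multiplicative ones.
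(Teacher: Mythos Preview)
Your argument is correct and is exactly the standard proof: apply Markov's inequality to the nonnegative random variable $2^{-\hinf(X\mid W=w)}$, whose expectation equals $2^{-\thinf(X\mid W)}$ by definition. The paper does not supply its own proof of this lemma (it is quoted from \cite{dors}), but the proof there is precisely the Markov-inequality argument you give.
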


\BL [\cite{dors}] \label{lem:amentropy}
If a random variable $B$ has at most $2^{\ell}$ possible values, then $\thinf(A|B) \geq \hinf(A)-\ell$.
\EL

To clarify which notion of min-entropy and non-malleable extractor we mean, we use the term \emph{worst-case non-malleable extractor} when we refer to
our Definition~\ref{nmdef}, which is with respect to traditional (worst-case) min-entropy, and \emph{average-case non-malleable extractor} to refer to
the original definition of Dodis and Wichs, which is with respect to average conditional min-entropy.

\begin{corollary}
A $(k,\eps)$-average-case non-malleable extractor is a $(k,\eps)$-worst-case non-malleable extractor.
For any $s>0$, a $(k,\eps)$-worst-case non-malleable extractor is a $(k+s,\eps + 2^{-s})$-average-case non-malleable extractor.
\end{corollary}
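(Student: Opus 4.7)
The first direction is essentially tautological. If $X$ satisfies $\hinf(X) \geq k$ in the worst-case sense, then taking $W$ to be a constant we have $\thinf(X \mid W) = \hinf(X) \geq k$, so an average-case $(k,\eps)$-non-malleable extractor already handles such $X$ (the extra side information $W$ in the average-case definition only strengthens the requirement on the extractor, so its conclusion specializes to the worst-case conclusion). Thus no work is required here beyond unpacking definitions.

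For the second (more substantive) direction, fix a worst-case $(k,\eps)$-non-malleable extractor $\nmExt$, and let $X$, $W$ satisfy $\thinf(X \mid W) \geq k + s$. The plan is to split $W$ into a ``good'' and ``bad'' event and apply the worst-case guarantee pointwise on the good event. Concretely, invoke Lemma~\ref{entropies} with threshold $s$: with probability at least $1-2^{-s}$ over $w \leftarrow W$, the conditional source $X \mid W = w$ satisfies $\hinf(X \mid W=w) \geq \thinf(X \mid W) - s \geq k$. Call such $w$ good.

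For each good $w$, the seed $Y$ is still uniform and independent of $X \mid W = w$ (since $Y$ is independent of $(X,W)$ by definition), and the adversary function $\adv$ is fixed and still satisfies $\adv(y) \neq y$. Therefore the worst-case guarantee applies directly to the source $X \mid W = w$, yielding
\[
\bigl(\nmExt(X,Y), \nmExt(X,\adv(Y)), Y\bigr) \,\bigm|\, W = w \;\approx_\eps\; \bigl(U_m, \nmExt(X,\adv(Y)), Y\bigr) \,\bigm|\, W = w.
\]
Averaging this pointwise bound over $w \leftarrow W$, the good values of $w$ contribute at most $\eps$ in total variation to the joint distribution (which now includes $W$ as a component), and the bad values contribute at most $2^{-s}$, since two distributions can differ by at most the probability mass of the event on which we make no guarantee. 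The triangle-inequality-style accounting gives a total statistical distance of at most $\eps + 2^{-s}$, which is exactly the claimed average-case parameters.

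There is no real obstacle: the only thing to be careful about is that $\adv$ and $Y$ are shared across all conditionings on $W=w$, so that conditioning on $W=w$ genuinely yields an instance of the worst-case definition for each good $w$. Once this is observed, the proof is a direct combination of Lemma~\ref{entropies} with an averaging argument.
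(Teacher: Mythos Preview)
Your proposal is correct and is precisely the intended argument. The paper does not give an explicit proof of this corollary; it is stated immediately after Lemma~\ref{entropies} as a direct consequence, and your proof spells out that consequence in the standard way (specializing $W$ to a constant for the first direction, and for the second direction conditioning on the good event from Lemma~\ref{entropies} and paying $2^{-s}$ for the bad event).
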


Throughout the rest of our paper, when we say non-malleable extractor, we refer to the worst-case non-malleable extractor of Definition~\ref{nmdef}.

\subsection{Primes in arithmetic progressions}

To output more than $\log n$ bits, we will rely on a well-known conjecture about primes in arithmetic progressions.  We begin with a definition.

\begin{definition}
Let $p(r,a)$ denote the least prime in the arithmetic progression $a$ modulo $r$.
\end{definition}

We can now state a special case of a well-known conjecture.

\begin{conjecture}
\label{conj-primes}
There exists a constant $c>0$, such that for $r$ a power of 2 and $a=1$, one has $p(r,a) = O(r\log^c r)$.
\end{conjecture}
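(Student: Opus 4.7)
The statement is not really a theorem to be proved but rather a widely-believed conjecture from analytic number theory; the authors use it only as a hypothesis to obtain polynomial running time when outputting many bits, and they make no attempt to prove it. Nevertheless, here is how one would approach it.

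The plan is to use the standard analytic machinery of Dirichlet $L$-functions. Fix $r = 2^k$, and for each Dirichlet character $\chi$ modulo $r$, consider $L(s, \chi)$. Letting $\psi(x; r, 1)$ denote the Chebyshev sum of $\Lambda(n)$ over $n \leq x$ with $n \equiv 1 \pmod{r}$, orthogonality of characters gives
\[
\psi(x; r, 1) \;=\; \frac{1}{\phi(r)} \sum_{\chi} \sum_{n \leq x} \chi(n) \Lambda(n),
\]
where the outer sum runs over all characters modulo $r$. By the explicit formula, each inner sum is controlled by the nontrivial zeros of $L(s,\chi)$. The target is to show $\psi(x; r, 1) > 0$ at some $x = Cr \log^c r$, which produces a prime $p \equiv 1 \pmod{r}$ with $p \leq x$.

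Under the Generalized Riemann Hypothesis (GRH), this strategy yields $p(r,1) = O(\phi(r)^2 \log^2 r) = O(r^2 \log^2 r)$, still quadratic in $r$. To push down to linear in $r$, I would try to exploit the special structure of $r$ being a power of $2$: the group $(\dbZ/r\dbZ)^\times \cong \{\pm 1\} \times \dbZ/2^{k-2}\dbZ$ is almost cyclic, the characters have very restricted conductors within this tower, and Siegel zeros are ruled out unconditionally for the tiny family of real primitive characters attached to $r$. One would then combine GRH with a pair-correlation or zero-density hypothesis for the low-lying zeros of $L(s,\chi)$ across $\chi \bmod r$ to save the remaining factor of $r$.

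The hard part is that the conjecture sits well beyond the current state of the art. Unconditionally, Linnik's theorem gives $p(r,1) = O(r^L)$ with the current record $L \approx 5$ (Xylouris), and even under GRH one only reaches $L = 2$. Closing the remaining gap down to essentially $L = 1$ would demand either a breakthrough on the horizontal distribution of zeros of Dirichlet $L$-functions, or an entirely new technique such as a sieve or character-sum input specialized to the $2$-power tower. This is precisely why the authors assume the statement rather than prove it, and why the option $m = O(\log n)$ in Theorem~\ref{thm:nm} is retained as the unconditional fallback.
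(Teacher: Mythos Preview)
Your assessment is correct: the paper does not prove this statement. It is explicitly labeled a \emph{Conjecture}, and the surrounding text says only that it is widely believed (for $c=2$, all $r$, and all $a$ coprime to $r$), refers the reader to the literature for discussion, and notes that the best unconditional bound is $p(r,a)=O(r^{5.2})$. The conjecture is used purely as a hypothesis to guarantee polynomial running time when $m$ is large; the $m=O(\log n)$ case is the unconditional fallback, exactly as you observe.

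Your sketch of the analytic-number-theory landscape is accurate and appropriately cautious: GRH alone gives only a quadratic bound, and reaching essentially linear in $r$ is well beyond current technology. Since there is no proof in the paper to compare against, there is nothing further to review here.
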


We don't really need $r$ to be a power of 2; it would suffice if the conjecture held for integers~$r_n$,
where $r_n$ is a smooth integer of about $n$ bits computable in time polynomial in~$n$.
This conjecture is widely believed for $c=2$, all $r$, and all $a$ relatively prime to~$r$.
For more on this conjecture,
see, for example, the discussion following equation (1) of \cite{hb78}.
The best unconditional conclusion is
substantially weaker. Thus, one has $p(r,a) =O(r^{5.2})$ (see \cite{Xyl,hb92}.)

\subsection{Fourier analysis}

The following definitions from Fourier analysis are standard (see e.g., \cite{Ter}) , although we normalize differently than in many computer science papers, such as \cite{Rao:bourgain}.
For functions $f,g$ from a set $S$ to $\dbC$, we define the inner product $\angles{f,g} = \sum_{x \in S} f(x) \overline{g(x)}$.
Let $D$ be a distribution on $S$, which we also view as a function from $S$ to $\dbR$.
Note that $\expect_D [f(D)] = \angles{f,D}$.
Now suppose we have functions $h:S \to T$ and $g:T \to \dbC$.
Then
\[ \angles{g \circ h, D} = \expect_D [g(h(D))] = \angles{g,h(D)}.\]

Let $G$ be a finite abelian group, and let $\phi$ a character of $G$, i.e., a homomorphism from $G$ to $\dbC^\times$.
We call the character that maps all elements to 1 the trivial character.
Define the Fourier coefficient $\widehat{f}(\phi) = \angles{f,\phi}$.  We let $\widehat{f}$ denote the vector with entries $\widehat{f}(\phi)$ for all $\phi$.
Note that for a distribution $D$, one has $\widehat{D}(\phi) = \expect_D[\phi(D)]$.

Since the characters divided by $\sqrt{|G|}$ form an orthonormal basis,
the inner product is preserved up to scale: $\angles{\widehat{f},\widehat{g}} = |G| \angles{f,g}$.  As a corollary, we obtain
Parseval's equality:
\[ \ltwo{\widehat{f}}^2=\angles{\widehat{f},\widehat{f}} = |G| \angles{f,f}=|G|\ltwo{f}^2.\]
Hence by Cauchy-Schwarz,
\begin{equation}
\label{fourier-bound}
\lone{f} \leq \sqrt{|G|} \ltwo{f} = \ltwo{\widehat{f}} \leq \sqrt{|G|} \linfty{\widehat{f}}.
\end{equation}

For functions $f,g:S \to \dbC$, we define the function $(f,g):S \times S \to \dbC$ by $(f,g)(x,y) = f(x)g(y)$.
Thus, the characters of the group $G \times G$ are the functions $(\phi,\phi')$, where $\phi$ and $\phi'$ range over all characters of $G$.
We abbreviate the Fourier coefficient $\widehat{(f,g)}((\phi,\phi'))$ by $\widehat{(f,g)}(\phi,\phi')$.  Note that
\[ \widehat{(f,g)}(\phi,\phi') = \sum_{(x,y) \in G \times G} f(x)g(y)\phi(x)\phi'(y) = \left ( \sum_{x \in G} f(x)\phi(x) \right ) \left ( \sum_{y \in G} g(x)\phi'(x) \right )
= \widehat{f}(\phi) \widehat{g}{(\phi')}. \]

\subsection{A non-uniform XOR lemma}

We'll need the following extension of Vazirani's XOR lemma.  We can't use traditional versions of the XOR lemma, because our output
may not be uniform.  Our statement and proof parallels Rao \cite{Rao:bourgain}.

\begin{lemma}
\label{special-case}
Let $(W,W')$ be a random variable on $G \times G$ for a finite abelian group $G$, and suppose
that for all characters $\phi,\phi'$ on $G$ with $\phi$ nontrivial, one has
$$|\expect_{(W,W')}[\phi(W)\phi'(W')]| \leq \alpha.$$
Then the distribution of $(W,W')$ is $\alpha |G|$ close to $(U,W')$, where $U$ is the uniform distribution on $G$ which is independent of $W'$.
Moreover, for $f:G\times G \to \dbR$ defined as the difference of distributions $(W,W') - (U,W')$,
we have $\linfty{f} \leq \alpha$.
\end{lemma}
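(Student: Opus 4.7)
The plan is to bound both conclusions by Fourier analysis on the product group $G\times G$. Let $f = (W,W') - (U,W')$, viewed as a real-valued function on $G\times G$; the first conclusion amounts to $\lone{f}\le 2\alpha|G|$ (since $\Delta = \lone{f}/2$), and the second is literally $\linfty{f}\le\alpha$. The natural strategy is to first compute $\widehat{f}$ explicitly on all characters of $G\times G$, which by the excerpt are of the form $(\phi,\phi')$ with $\phi,\phi'$ characters of $G$, and then feed the resulting bound into (\ref{fourier-bound}) for the statistical distance and into Fourier inversion for the $\ell^\infty$ bound.

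First I would compute $\widehat{f}(\phi,\phi')$. Because $U$ is independent of $W'$ in the alternative distribution, the product formula noted in the preliminaries gives $\widehat{(U,W')}(\phi,\phi') = \widehat{U}(\phi)\widehat{W'}(\phi')$. By orthogonality of characters, $\widehat{U}(\phi) = \expect_U[\phi(U)]$ equals $1$ if $\phi$ is trivial and $0$ otherwise. So when $\phi$ is trivial, an identical calculation yields $\widehat{(W,W')}(\phi,\phi') = \expect[\phi'(W')] = \widehat{W'}(\phi')$, and the two Fourier coefficients cancel, giving $\widehat{f}(\phi,\phi')=0$. When $\phi$ is nontrivial, $\widehat{(U,W')}(\phi,\phi')=0$, so $\widehat{f}(\phi,\phi') = \expect[\phi(W)\phi'(W')]$, which by hypothesis has magnitude at most $\alpha$. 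In particular $\linfty{\widehat{f}}\le\alpha$.

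For the statistical distance, I would apply (\ref{fourier-bound}) to $f$ on the group $G\times G$ of size $|G|^2$, yielding $\lone{f}\le\sqrt{|G|^2}\,\linfty{\widehat{f}}\le\alpha|G|$, which suffices (and in fact is better by a factor of two). For the pointwise bound, I would invert the Fourier transform: $f(x,y) = \frac{1}{|G|^2}\sum_{\phi,\phi'}\widehat{f}(\phi,\phi')\overline{\phi(x)\phi'(y)}$. Since $|\phi(x)\phi'(y)|=1$, since the $\phi$-trivial terms contribute nothing, and since the remaining at most $|G|(|G|-1)$ terms are bounded by $\alpha$, the triangle inequality gives $|f(x,y)|\le\alpha(|G|-1)/|G|\le\alpha$.

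I do not anticipate a genuine obstacle: the only point requiring care is keeping the character conventions straight and verifying (i) the product structure $\widehat{(U,W')}=\widehat{U}\cdot\widehat{W'}$ stemming from independence, and (ii) the vanishing $\widehat{U}(\phi)=0$ on nontrivial $\phi$ from character orthogonality. Both are direct consequences of the definitions already collected in the Fourier analysis subsection.
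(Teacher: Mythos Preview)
Your argument is correct and follows essentially the same route as the paper: compute $\widehat{f}(\phi,\phi')$, observe it vanishes when $\phi$ is trivial and is bounded by $\alpha$ otherwise, and then apply the inequality $\lone{f}\le\sqrt{|G\times G|}\,\linfty{\widehat f}$ to get the statistical-distance bound.

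The one small difference is in the ``Moreover'' clause. The paper's proof stops after establishing $\linfty{\widehat f}\le\alpha$, which is exactly what is invoked downstream (in Lemma~\ref{lem:rao-plus}); indeed the statement as written likely intends $\widehat f$ rather than $f$. You instead read the clause literally as $\linfty{f}\le\alpha$ and supply a Fourier-inversion argument to get it. That extra step is valid (modulo a harmless conjugate in the inversion formula, which does not affect magnitudes), so either reading is covered.
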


\begin{proof}
As implied in the lemma statement, the value of $f(a,b)$ is the probability assigned to $(a,b)$ by the distribution of $(W,W')$ minus
that assigned by $(U,W')$.
First observe that
\[
\widehat{f}(\phi,\phi') = \angles{f,(\phi,\phi')} = \expect_{(W,W')}[\phi(W)\phi'(W')] - \expect_{(U,W')}[\phi(U)\phi'(W')].
\]
Since $U$ and $W'$ are independent, this last term equals
\[\expect_{(U,W')}[\phi(U)]\expect_{(U,W')}[\phi'(W')]
= \expect_{U}[\phi(U)]\expect_{W'}[\phi'(W')] = 0,\]
since $\phi$ is nontrivial.
Therefore, by hypothesis, when $\phi$ is nontrivial, one finds that $|\widehat{f}(\phi,\phi')| \leq \alpha$.

When $\phi$ is trivial, we get
\[ \widehat{f}(\phi,\phi') = \expect_{(W,W')}[\phi'(W')] - \expect_{(U,W')}[\phi'(W')] = 0.\]

Hence $\lone{f} \leq \sqrt{|G \times G|} \linfty{\widehat{f}} \leq |G| \alpha$.
\end{proof}

\section{The Non-Malleable Extractor}\label{ext}

Our basic extractor was introduced by Chor and Goldreich \cite{cg:weak}.
They showed that it was a two-source extractor for entropy rates bigger than $1/2$.
Dodis and Oliveira \cite{DodO} showed that it was strong.
Neither result implies anything about non-malleability.

To output $m$ bits, we set $M=2^m$ and
choose a prime power $q > M$.
In our basic extractor, we require
that $M|(q-1)$.  Later, we remove this assumption.
Fix a generator $g$ of $\F_q^\times$.
We define $\nm:\F_q^2 \to \dbZ_M$ by $\nm(x,y) = h(\dlog_g(x+y))$.
Here $\dlog_g z$ is the discrete logarithm of~$z$ with respect to $g$, and $h:\dbZ_{q-1} \to \Z_M$ is given by $h(x) = x\mod M$.

In the special case $m=1$, we only require that $q$ is odd.  In this case, $\nm(x,y)$ corresponds to the quadratic character of $x+y$, converted to $\zo$ output.
This is efficient to compute.
Since there is no known efficient deterministic algorithm to find an $n$-bit prime,
we may take $q=3^\ell$, with $3^{\ell - 1} < 2^n < 3^\ell$.

For general~$M$, we use the Pohlig-Hellman algorithm to compute the discrete log mod~$M$.  This runs in polynomial time in the largest prime factor of $M$.
Since in our case $M=2^m$, this is polynomial time.

We still need a prime or prime power $q$ such that $M|(q-1)$.
Unconditionally, we get a polynomial-time algorithm to output $m=c\log n$ bits for any $c>0$.  To output more bits efficiently,
we rely on a widely believed conjecture.
Under Conjecture~\ref{conj-primes}, such a prime can be found efficiently by testing $M+1,2M+1,3M+1,\ldots$
in succession.

Now we prove that $\nm$ is a non-malleable extractor.

\begin{theorem}
\label{basic}
The above function $\nm:\F_q^2 \to \dbZ_M$ is a $(k,\eps)$-non-malleable extractor for $\eps = Mq^{1/4}2^{1-k/2}$. 
\end{theorem}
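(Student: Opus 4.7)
My plan is to reduce the non-malleability of $\nm$ to a character-sum estimate and then invoke the Weil-type bound established in Section~\ref{csum}.

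The key observation is that since $M\mid q-1$, composing any character $\phi$ of $\dbZ_M$ with $h\circ\log_g$ yields a multiplicative character $\psi$ of $\F_q^\times$, nontrivial exactly when $\phi$ is nontrivial; extending $\psi$ to $\F_q$ by $\psi(0)=0$, one has $\phi(\nm(x,y))=\psi(x+y)$ whenever $x+y\ne 0$, so non-malleability becomes a statement about sums of multiplicative characters applied to the shifts $X+Y$ and $X+\adv(Y)$. I would then apply the Parseval/Cauchy--Schwarz argument in the proof of Lemma~\ref{special-case} (which generalizes verbatim from $G\times G$ to any product of finite abelian groups), viewing $(\nm(X,Y),\nm(X,\adv(Y)),Y)$ as a distribution on $\dbZ_M\times\dbZ_M\times\F_q$. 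As in the lemma, the Fourier coefficient $\widehat f(\phi_1,\phi_2,\phi_3)$ of the difference $f$ vanishes when $\phi_1$ is trivial, and when $\phi_1$ is nontrivial it equals
\[
\widehat f(\phi_1,\phi_2,\phi_3) \;=\; \expect_{X,Y}\bigl[\psi_1(X+Y)\,\psi_2(X+\adv(Y))\,\chi(Y)\bigr],
\]
where $\psi_1,\psi_2$ are the multiplicative characters induced by $\phi_1,\phi_2$ (with $\psi_1$ nontrivial) and $\chi$ is the additive character of $\F_q$ indexed by $\phi_3$. Parseval then yields
\[
\lone{(\nm(X,Y),\nm(X,\adv(Y)),Y)-(U_M,\nm(X,\adv(Y)),Y)} \;\le\; M\sqrt q\cdot\linfty{\widehat f},
\]
so the theorem will follow once one proves a character-sum bound of the form
\[
\bigl|\expect_{X,Y}[\psi_1(X+Y)\psi_2(X+\adv(Y))\chi(Y)]\bigr| \;\le\; q^{-1/4}\cdot 2^{1-k/2},
\]
since plugging in gives $\lone f \le M\sqrt q\cdot q^{-1/4}2^{1-k/2} = Mq^{1/4}2^{1-k/2} = \eps$.

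The main obstacle is therefore this last character-sum bound, which is precisely the content of Section~\ref{csum}. The difficulty is that $\adv$ is an arbitrary function, so the inner sum over $Y$ is not directly a Weil sum. The natural attack, which I would pursue, is to apply Cauchy--Schwarz (using the min-entropy bound $\sum_x X(x)^2\le 2^{-k}$ implied by $\hinf(X)\ge k$) to peel off the unstructured source $X$, then expand the resulting second moment and swap the order of summation so that the inner sum becomes a sum over $x$ of a multiplicative character at a rational function whose roots lie among $-y,-y',-\adv(y),-\adv(y')$. Weil's theorem bounds this inner sum by $O(\sqrt q)$ away from a small set of ``degenerate'' pairs $(y,y')$ on which the rational function collapses to a character power; the hypothesis $\adv(y)\ne y$ rules out the worst degeneracy $y=\adv(y)$, and a careful accounting of the remaining degenerate contributions---combined, when necessary, with cancellation from the additive twist $\chi(y-y')$---should deliver the requisite $q^{-1/4}$ saving over the trivial bound and complete the proof.
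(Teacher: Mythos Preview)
Your reduction to a character sum is on the right track, but the way you handle the seed coordinate $Y$ loses a factor of $\sqrt{q}$ and makes the claimed bound unattainable. Concretely, the estimate of Section~\ref{csum} (Theorem~\ref{charsum-uniform}/Corollary~\ref{cor:char1}) gives
\[
\Bigl|\expect_{X,Y}\bigl[\psi_1(X+Y)\,\psi_2(X+\adv(Y))\,\chi(Y)\bigr]\Bigr|
\;\le\; \frac{1}{q|\calS|}\sum_{y}\Bigl|\sum_{s\in\calS}\psi_1(s+y)\psi_2(s+\adv(y))\Bigr|
\;<\; q^{1/4}\,2^{1-k/2},
\]
not $q^{-1/4}2^{1-k/2}$. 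Your Cauchy--Schwarz-over-$X$ attack yields the same $q^{1/4}$ bound: after expanding the square and swapping sums you are left with $\sum_{y,y'}\chi(y-y')\nu(y,y')$, and bounding $|\nu(y,y')|$ by $3\sqrt q$ off a set of size $O(q)$ gives only $O(q^{5/2})$ for the second moment, hence $q^{1/4}2^{-k/2}$ for the coefficient. The hoped-for extra cancellation from the additive twist $\chi(y-y')$ cannot close this $\sqrt q$ gap, because your bound must hold in particular for the \emph{trivial} additive character $\chi$, where there is no twist at all; and for arbitrary $\adv$ there is no structure in $(y,y')\mapsto\nu(y,y')$ to exploit otherwise. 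Plugging the correct coefficient bound into your Parseval inequality $\lone{f}\le M\sqrt q\cdot\linfty{\widehat f}$ yields only $Mq^{3/4}2^{1-k/2}$, which is off by $\sqrt q$.

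The paper avoids this loss by \emph{not} Fourier-analyzing the $Y$-coordinate at all. Instead it conditions on $Y=y$, applies Lemma~\ref{special-case} on the product group $\dbZ_M\times\dbZ_M$ (so the Parseval factor is only $M$, not $M\sqrt q$), obtains a per-$y$ distance bound $M\eta_y$ where
\[
\eta_y \;=\; \max_{\psi_1\ \text{nontrivial},\ \psi_2}\,\Bigl|\expect_X\bigl[\psi_1(X+y)\psi_2(X+\adv(y))\bigr]\Bigr|,
\]
and then averages over $y$. The point of Corollary~\ref{cor:char1} is precisely that it bounds $\expect_y[\eta_y]$, i.e.\ the character sum with the absolute value \emph{inside} the $y$-sum, by $q^{1/4}2^{1-k/2}$. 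This gives $\eps = M\cdot\expect_y[\eta_y]\le Mq^{1/4}2^{1-k/2}$ directly. In short: treat $Y$ by conditioning, not by Fourier.
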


\begin{proof}
The heart of our proof is a new character sum estimate, given in \theoremref{charsum-uniform} (and
Corollary~\ref{cor:char1}).
We now show how to deduce \theoremref{basic} from the character sum estimate and \lemmaref{special-case}.
Let $X$ be a distribution with $\hinf(X) \geq k$, and let $Y$ be uniform on $\F_q$.
As is well-known, we may assume without loss of generality that $X$ is uniform on a set of size $2^k$.
We set $G=\Z_M$, $(W,W') = (\nm (X,Y), \nm (X,\adv(Y)))$, and we condition on
$Y=y$.

Since $M|(q-1)$,
we have that for $\phi$ a character of $G$, the function $\chi(z) = \phi(h(\dlog_g (z)))$ is a multiplicative character of $\F_q$.  Therefore, Corollary~\ref{cor:char1}
shows that $((W,W')|Y=y)$ satisfies the hypotheses of \lemmaref{special-case} for some $\eta_y$,
where $\expect_{y \leftarrow Y} [\eta_y] \leq \eta$ for $\eta < q^{1/4} 2^{1-k/2} $.
Thus, by \lemmaref{special-case}, $((W,W')|Y=y)$ is $M \eta_y$-close to $((U,h(W'))|Y=y)$ for every $y$.
Since this expression is linear in $\eta_y$, we conclude that $(W,W',Y)$ is $M\eta$-close to $(U,h(W'),Y)$, as required.
\end{proof}

Note that this theorem assumes that the seed is chosen uniformly from $\F_q$,
consistent with Definition~\ref{nmdef}.
However, we may desire to have the seed be a uniformly random bit string.
This causes a problem, since we may not be able to choose $q$ close to
a power of 2.
If we use a $d$-bit seed where $2^d \leq q < 2^{d+1}$,
then we can view the seed as an integer between 0 and $2^d - 1$,
or simply as an element of $\F_q$ with min-entropy at least $(\log q) - 1$.
We can handle this, and in fact much lower min-entropy in the seed, as follows.
First, we
recall the Definition~\ref{def:weak-seed} of a non-malleable extractor with a 
weakly-random seed.
The following lemma shows that a non-malleable extractor with small error remains a non-malleable extractor even
if the seed is weakly random.

\begin{lemma}
\label{weekseed}
A $(k,\eps)$-non-malleable extractor $\nm:[N] \times [D] \to [M]$
is also a $(k,k',\eps')$-non-malleable extractor with $\eps' = (D/2^{k'}) \eps$.
\end{lemma}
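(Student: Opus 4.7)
The plan is to reduce the weak-seed case to the uniform-seed case by conditioning on the value of the seed $y$, then using the min-entropy of $Y$ to bound the expected error against the average error under $U_D$.

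First I would rewrite the statistical distance of interest as an average over $y$. Since $X$ is independent of $Y$, for each fixed $y$ we can define
\[
\Delta_y \;\eqd\; \Delta\bigl((\nm(X,y),\,\nm(X,\adv(y))),\,(U_{[M]},\,\nm(X,\adv(y)))\bigr),
\]
and observe that the joint distribution including the seed factorizes so that
\[
\Delta\bigl((\nm(X,Y),\nm(X,\adv(Y)),Y),\,(U_{[M]},\nm(X,\adv(Y)),Y)\bigr)
= \sum_{y \in [D]} \Pr[Y=y] \cdot \Delta_y
= \expect_Y[\Delta_Y].
\]
This identity does not use any property of $Y$ beyond its independence from $X$; in particular, the same identity holds when $Y$ is replaced by the uniform distribution $U_{[D]}$.

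Second, I would apply the hypothesis that $\nm$ is a $(k,\eps)$-non-malleable extractor against a uniform seed. Setting $Y = U_{[D]}$ gives
\[
\frac{1}{D}\sum_{y \in [D]} \Delta_y \;=\; \expect_{U_{[D]}}[\Delta_{U_{[D]}}] \;\le\; \eps,
\]
hence $\sum_y \Delta_y \le D\eps$. Crucially, $\adv$ and $X$ are the same in both settings, so the pointwise quantities $\Delta_y$ are identical.

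Third, I would combine these two facts via the min-entropy bound $\Pr[Y=y] \le 2^{-k'}$ for all $y$. Then
\[
\expect_Y[\Delta_Y] \;=\; \sum_{y} \Pr[Y=y]\,\Delta_y \;\le\; 2^{-k'}\sum_y \Delta_y \;\le\; 2^{-k'}\cdot D\eps \;=\; \frac{D}{2^{k'}}\eps,
\]
which is exactly $\eps'$, completing the proof. There is no real obstacle here: the only subtle point is being careful that the inner statistical distance $\Delta_y$ (which depends on $\adv(y)$) is a well-defined, nonnegative function of $y$ alone, so that the same $\adv$ and the same pointwise bounds may be reused when switching from the uniform seed to the weak seed.
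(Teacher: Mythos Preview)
Your proposal is correct and follows essentially the same argument as the paper: define the per-seed error $\Delta_y$, use the uniform-seed guarantee to bound $\sum_y \Delta_y \le D\eps$, and then replace the uniform weight $1/D$ by the pointwise bound $2^{-k'}$ coming from the min-entropy of $Y$. The only cosmetic difference is that the paper keeps the fixed $y$ as a third coordinate in its definition of $\eps_y$, which is equivalent to your $\Delta_y$.
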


\begin{proof}
For $y \in [D]$, let $\eps_y = \Delta((\nm(X,y),\nm(X,\adv(y)),y), (U_{[M]},\nm(X,\adv(y)),y)).$
Then for $Y$ chosen uniformly from $[D]$,
\[
\eps \geq \Delta((\nm(X,Y),\nm(X,\adv(Y)),Y), (U_{[M]},\nm(X,\adv(Y)),Y)) = \frac{1}{D} \sum_{y \in [D]} \eps_y.
\]
Thus, for $Y'$ with $\hinf(Y') \geq k'$, we get
\begin{align*}
\Delta((\nm(X,Y'),\nm(X,\adv(Y')),&Y'), (U_{[M]},\nm(X,\adv(Y')),Y'))\\
&=\sum_{y \in [D]} \Pr[Y=y] \eps_y \leq 2^{-k'}\sum_{y \in [D]} \eps_y \leq (D/2^{k'}) \eps.
\end{align*}
\end{proof}

It is now simple to
analyze our non-malleable extractor as a function $\nm:\zo^n \times \zo^d \to \zo^m$.  Here we work over $\F_q$, where $q$ is the smallest prime
(or prime power) congruent to 1 modulo $M=2^m$.  We let $d=\floor{\log_2 q}$, which is $n+c\log n + O(1)$ under Conjecture~\ref{conj-primes}.
We could even let $d=n$ and the error would only grow by $n^c$.

\begin{theorem}
\label{main}
Under Conjecture~\ref{conj-primes} with constant $c$,
for any $n$, $k > n/2 + (c/2)\log n$, and $m < k/2 - n/4 -(c/4)\log n$,
the above function $\nm:\zo^n \times \zo^d \to \zo^m$ is a polynomial-time computable, $(k,\eps)$-non-malleable extractor for $\eps = O(n^{c/4} 2^{m+n/4-k/2})$.
\end{theorem}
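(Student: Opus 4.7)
The plan is to reduce the statement to Theorem~\ref{basic} together with Lemma~\ref{weekseed}, using Conjecture~\ref{conj-primes} to bound the size of the ambient field. First I would invoke the conjecture with $r = 2^n$, which is a power of two divisible by $M = 2^m$, to locate in polynomial time a prime $q \equiv 1 \pmod{2^n}$ with $q = O(2^n n^c)$. This guarantees simultaneously that $M \mid (q-1)$, as required by Theorem~\ref{basic}, and that $q \geq 2^n$, so that $\zo^n$ embeds naturally into $\F_q$ via the standard integer representation. Setting $d = \lfloor \log_2 q \rfloor$ gives $d = n + c\log n + O(1)$, and $\zo^d \subseteq \F_q$ as well. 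Polynomial-time computability then follows: $q$ is located by testing $1 + 2^n, 1 + 2\cdot 2^n, \ldots$ for primality, and since the largest prime factor of $M = 2^m$ is $2$, the Pohlig--Hellman algorithm computes the discrete logarithm modulo $M$ in polynomial time.

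Next, I would bridge Theorem~\ref{basic}, stated for seeds uniform on $\F_q$, with the present statement, where the seed is uniform on $\zo^d \subseteq \F_q$. Given an adversary $\adv:\zo^d \to \zo^d$ with $\adv(y) \neq y$ everywhere, I extend it to a fixed-point-free map $\tilde\adv:\F_q \to \F_q$ by, for instance, setting $\tilde\adv(y) = y + 1$ on $\F_q \setminus \zo^d$. Theorem~\ref{basic} then guarantees that $\nm$ is a $(k,\eps_0)$-non-malleable extractor on $\F_q^2$ against $\tilde\adv$, with $\eps_0 = Mq^{1/4}2^{1-k/2}$. A seed $Y$ uniform on $\zo^d$, viewed as an element of $\F_q$, has min-entropy exactly $d$; since $2^d \leq q < 2^{d+1}$, Lemma~\ref{weekseed} upgrades the guarantee to weak seeds of min-entropy $d$ at the cost of a factor $q/2^d < 2$ in the error. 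Because $\tilde\adv$ agrees with $\adv$ on $\zo^d$, restricting back to $Y \in \zo^d$ yields the desired non-malleability.

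The error bound then follows by direct substitution: $q^{1/4} = O(n^{c/4} 2^{n/4})$, so $\eps_0 = O(n^{c/4} 2^{m + n/4 - k/2})$, and the factor from Lemma~\ref{weekseed} is absorbed into the $O(\cdot)$. The hypothesis $m < k/2 - n/4 - (c/4)\log n$ is precisely what forces this error below $1$, while $k > n/2 + (c/2)\log n$ ensures the admissible range of $m$ is nonempty. I anticipate no substantive obstacle beyond this bookkeeping, since the analytic content resides entirely in the character sum estimate underlying Theorem~\ref{basic} and in the averaging argument of Lemma~\ref{weekseed}.
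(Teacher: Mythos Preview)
Your proposal is correct and follows the paper's approach: invoke Theorem~\ref{basic} over $\F_q$ with $q = O(n^c 2^n)$ obtained from Conjecture~\ref{conj-primes}, then apply Lemma~\ref{weekseed} to absorb the factor $q/2^d < 2$ coming from a seed uniform on $\zo^d$ rather than on all of $\F_q$, and substitute $q^{1/4} = O(n^{c/4}2^{n/4})$ to reach the stated error. Your choice of $r = 2^n$ in the conjecture is a clean way to guarantee simultaneously $M \mid (q-1)$ and $q \geq 2^n$, and your explicit extension of the adversary to a fixed-point-free map on $\F_q$ fills in a detail the paper leaves implicit.
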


\begin{proof}
Suppose that Conjecture~\ref{conj-primes} holds for the constant $c$.  Then $q=O(n^c 2^n)$, and the seed has min-entropy $k'=d$.  Applying \lemmaref{weekseed},
we obtain error
$$\eps = (q/2^d)Mq^{1/4}2^{1-k/2} = O(n^{c/4} 2^{m+n/4 - k/2}).$$
\end{proof}

After seeing \cite{CRS11}, we improved our character sum to handle weak seeds, using ideas related to their work and \cite{Raz05}.
In particular, we showed \theoremref{charsum}, which implies the following theorem.

\begin{theorem}
\label{main-weakseed}
Under Conjecture~\ref{conj-primes}, for $k \geq (1/2 + \alpha)n$ and $k' \geq (7/\alpha)(m+ \log \eps^{-1}) + 8 \log n$,
the above function is a $(k,k',\eps)$-non-malleable extractor.
\end{theorem}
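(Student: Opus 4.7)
The plan is to mirror the proof of \theoremref{basic}, but feed it the stronger weak-seed character sum estimate \theoremref{charsum} in place of Corollary~\ref{cor:char1}. This substitution bypasses the $D/2^{k'}$ overhead that \lemmaref{weekseed} incurs in the proof of \theoremref{main}: rather than reducing to the uniform-seed case and paying for it afterward, the weak-seed min-entropy requirement is built directly into the character-sum bound.

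As in \theoremref{basic}, I would set $M = 2^m$ and take $q$ to be the smallest prime with $q \equiv 1 \pmod M$; by Conjecture~\ref{conj-primes}, $q = O(n^c 2^n)$ so that $d = \lfloor \log_2 q \rfloor = n + O(\log n)$. A $d$-bit seed $Y$ of min-entropy at least $k'$ embeds into $\F_q$ losing only $O(\log n)$ bits of min-entropy, and without loss of generality $X$ is flat on $2^k$ points. Setting $G = \dbZ_M$ and $(W, W') = (\nm(X,Y), \nm(X, \adv(Y)))$, each nontrivial character $\phi$ of $G$ composed with $z \mapsto h(\dlog_g z)$ yields a nontrivial multiplicative character of $\F_q$, exactly as before. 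Combining \theoremref{charsum}, which controls
\[ \bigl| \dbE[\phi(W) \phi'(W')] \bigr| \leq \eta \]
when both $X$ and $Y$ are weak, with \lemmaref{special-case} (applied for each fixed $y$ and then averaged over $Y$, precisely as in the proof of \theoremref{basic}), gives that $(W, W', Y)$ is $M\eta$-close to $(U_M, W', Y)$.

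The remaining work is parameter bookkeeping: choose the constants in the hypothesis $k' \geq (7/\alpha)(m + \log \eps^{-1}) + 8 \log n$ so that $M\eta \leq \eps$ whenever $k \geq (1/2 + \alpha)n$. The factor $7/\alpha$ should swallow the additive $m = \log M$ term from \lemmaref{special-case}, the $\log \eps^{-1}$ margin, and the $1/\alpha$-type exponent coming from \theoremref{charsum} (which exchanges surplus $X$-entropy above $n/2$ for seed-entropy cancellation), while the $8 \log n$ summand covers the embedding loss $\zo^d \hookrightarrow \F_q$ together with the $n^c$ overhead from Conjecture~\ref{conj-primes}. The genuine difficulty, however, lies inside \theoremref{charsum} itself --- an analytic estimate, developed along the lines of \cite{CRS11} and \cite{Raz05}, that trades seed min-entropy against character-sum cancellation. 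Granted that estimate, the present theorem is a routine adaptation of the proof of \theoremref{basic} together with the arithmetic progression machinery used in \theoremref{main}.
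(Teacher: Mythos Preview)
Your proposal is correct and matches the paper's approach exactly: the paper's entire proof is the single sentence ``The theorem follows from \theoremref{charsum} in the same way that \theoremref{basic} follows from \theoremref{charsum-uniform},'' and your outline is a faithful (and more detailed) unpacking of precisely that substitution, including the reduction to flat sources, the passage from characters of $\dbZ_M$ to multiplicative characters of $\F_q$, the use of \lemmaref{special-case}, and the parameter matching that turns the hypotheses of \theoremref{charsum} into the stated bound on $k'$.
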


\begin{proof}
The theorem follows from \theoremref{charsum} in the same way that \theoremref{basic} follows from \theoremref{charsum-uniform}.
\end{proof}

\section{A Character Sum Estimate} \label{csum}

We now prove the necessary character sum estimate.
We prove a somewhat more general statement than is needed for the one-bit extractor, as the general statement is needed to output many bits.
Throughout this section, we take $\F=\F_q$ to be a
finite field with $q$ elements. In addition, we suppose that $\chi:\F^\times
\rightarrow \dbC^\times$ is a nontrivial character of order $d=q-1$, and we extend
 the domain of $\chi$ to $\F$ by taking $\chi(0)=0$.
The following lemma is a consequence of Weil's resolution of the Riemann Hypothesis
for curves over finite fields (see \cite{Wei1948}). In this context, we say that a
polynomial $f\in \F[x]$ {\it has $m$ distinct roots} when $f$ has $m$ distinct roots in
the algebraic closure ${\overline \F}$ of $\F$, or equivalently that such holds in a
splitting field for $f$.

\begin{lemma}\label{weil}
Suppose that $f\in \F[x]$ is a monic polynomial having $m$ distinct roots which is not
a $d$th power in $\F[x]$. Then
$$\Bigl| \sum_{x\in \F}\chi(f(x))\Bigr|\le (m-1)\sqrt{q}.$$
\end{lemma}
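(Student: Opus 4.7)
The plan is to reduce the character sum to a point count on an auxiliary algebraic curve over $\F$ and invoke the Hasse--Weil bound (Weil's resolution of the Riemann Hypothesis for curves). Let $e$ denote the exact order of the character $\chi$, so $e \mid d$. The relevant curve is the superelliptic (Kummer) cover $C : y^e = f(x)$ of the affine line.

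The first step is geometric. Because $f$ is not a $d$th power in $\F[x]$ and $e \mid d$, the binomial $y^e - f(x)$ is irreducible over $\overline{\F}(x)$, so $C$ is geometrically irreducible. A Riemann--Hurwitz calculation for the $e$-to-$1$ cover $C \to \dbP^1$, with ramification occurring only above (at most) the $m$ distinct roots of $f$ together with $\infty$, yields a bound of the form $2g \le (m-1)(e-1)$ for the genus $g$ of the smooth projective model $\widetilde C$ of $C$.

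The second step counts $\F$-points and extracts the character sum. For $a \in \F^\times$, orthogonality of characters gives $\#\{y \in \F : y^e = a\} = \sum_{\psi : \psi^e = 1} \psi(a)$, where the sum is over multiplicative characters $\psi$ of order dividing $e$ (with $\psi(0) := 0$ for nontrivial $\psi$). Summing over $x \in \F$ and absorbing the bounded contribution of the points at infinity yields
\[
\#\widetilde C(\F) - (q+1) \;=\; \sum_{\psi \ne \psi_0,\; \psi^e = 1} \sum_{x \in \F} \psi(f(x)) \;+\; O(m).
\]
Weil's theorem $|\#\widetilde C(\F) - (q+1)| \le 2g\sqrt{q}$ bounds the character sums collectively. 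To isolate $\sum_x \chi(f(x))$ itself, one uses the finer fact that the zeta function of $\widetilde C$ factors as a product of $L$-functions $L(\psi, T)$ indexed by the nontrivial $\psi$ with $\psi^e = 1$; each $L(\psi, T)$ is a polynomial of degree at most $m-1$ whose reciprocal roots have absolute value $\sqrt{q}$, and $\sum_x \psi(f(x))$ equals the negative of the sum of those reciprocal roots, yielding $|\sum_x \psi(f(x))| \le (m-1)\sqrt{q}$ for each $\psi$, and in particular for $\chi$.

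The main obstacle is the geometric input -- verifying geometric irreducibility of the binomial $y^e - f(x)$ from the hypothesis on $f$, and obtaining the Riemann--Hurwitz genus bound together with the $L$-function factorisation. Once those ingredients are in place, the claimed estimate is a direct consequence of Weil's theorem, which is why the paper cites \cite{Wei1948} rather than reproducing the argument.
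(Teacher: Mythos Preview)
The paper's own ``proof'' is a one-line citation: it says the lemma is immediate from Theorem~2C$'$ of Schmidt~\cite{Sch1976}, treating the Weil character-sum bound as a known black box. Your proposal goes further and sketches the standard argument behind that black box, via the Kummer cover $y^e=f(x)$, Riemann--Hurwitz, and the $L$-function factorisation. So in terms of route there is nothing to compare: the paper gives no argument, and you give (an outline of) the classical one.

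One technical point in your sketch is not correct as stated. You claim that ``$f$ is not a $d$th power'' together with $e\mid d$ forces $y^e-f(x)$ to be irreducible over $\overline{\F}(x)$. This fails already for $q=5$, $e=d=4$, $f(x)=x^2$: here $f$ is not a fourth power, yet $y^4-x^2=(y^2-x)(y^2+x)$. Geometric irreducibility of $y^e-f$ needs the stronger hypothesis that $f$ is not an $\ell$th power for any prime $\ell\mid e$ (with the usual extra condition when $4\mid e$). The fix is already implicit in your second step: one works directly with the individual $L$-function $L(\chi,T)$, which under the stated hypothesis that $f$ is not a $d$th power (with $d=\mathrm{ord}(\chi)$) is a polynomial of degree at most $m-1$ whose reciprocal roots have absolute value $\sqrt{q}$; this gives the bound for $\chi$ without requiring the full curve $y^d=f$ to be irreducible. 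That is how Schmidt's treatment (and the references behind it) proceed, so modulo this correction your outline is the right one.
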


\begin{proof} This is immediate from Theorem 2C$^\prime$ of Schmidt
\cite{Sch1976} (see page 43 of the latter source).\end{proof}

 We next consider two arbitrary characters, where the first is nontrivial; without loss of generality we may take
 these to be $\chi_a(x) = (\chi(x))^a$ and $\chi_b(x) = (\chi(x))^b$, where $0 < a < q-1$ and $0 \leq b < q-1$.
Now we establish the main character sum estimate.
Note that we need the assumption that $a \neq 0$:  if $a=0$ and $b = (q-1)/2$, we could take $\adv(y) = 0$ and let $\calS$ be the set of quadratic residues, and then one has no cancellation in the character sum.

\subsection{Character Sum for Uniform Seeds}

We begin by proving the character sum corresponding to uniformly random seeds.
Although this follows from the more general character sum \theoremref{charsum} below, the proof is simpler and gives intuition for the general character sum.  Moreover, this theorem came before \cite{CRS11}, whereas the more general \theoremref{charsum} came afterwards.

\begin{theorem}\label{charsum-uniform}
Suppose that $\calS$ is a non-empty subset of $\F$, and that $\adv:\F\rightarrow \F$
is any function satisfying the property that $\adv(y)\ne y$ for all $y\in \F$. Then
one has
$$\sum_{y\in \F}\Bigl| \sum_{s\in \calS}\chi_a(s+y)\chi_b(s+\adv(y))\Bigr| \le 11^{1/4}q^{5/4}
|\calS|^{1/2}.$$
\end{theorem}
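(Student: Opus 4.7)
The plan is a standard phase-linearization followed by Cauchy--Schwarz, reducing the bound to an inner character sum over $s$ controllable by Weil's bound (Lemma~\ref{weil}). The arbitrary function $\adv$ blocks any direct estimate of the inner $s$-sum for a fixed $y$, so one must first average over $y$ to eliminate it.

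For each $y\in\F$ pick a unit phase $\epsilon(y)\in\dbC$ with $|T_y|=\epsilon(y)T_y$, where $T_y:=\sum_{s\in\calS}\chi_a(s+y)\chi_b(s+\adv(y))$. Swapping summation and applying Cauchy--Schwarz (extending the outer sum to all of $\F$) gives
\[
\Bigl(\sum_{y\in\F}|T_y|\Bigr)^2 = \Bigl(\sum_{s\in\calS} F(s)\Bigr)^2 \le |\calS|\sum_{s\in\F}|F(s)|^2,
\]
where $F(s):=\sum_{y\in\F}\epsilon(y)\chi_a(s+y)\chi_b(s+\adv(y))$. Expanding the square and interchanging sums yields $\sum_{s\in\F}|F(s)|^2 = \sum_{y_1,y_2}\epsilon(y_1)\overline{\epsilon(y_2)}\,S(y_1,y_2)$, with $S(y_1,y_2):=\sum_s\chi_a(s+y_1)\overline{\chi_a(s+y_2)}\chi_b(s+\adv(y_1))\overline{\chi_b(s+\adv(y_2))}$. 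After reducing exponents modulo $d=q-1$, this inner sum is $\sum_s\chi(f_{y_1,y_2}(s))$ for a polynomial $f_{y_1,y_2}$ whose distinct roots lie among $\{-y_1,-y_2,-\adv(y_1),-\adv(y_2)\}$ with multiplicities drawn from $\{\pm a,\pm b\}\bmod d$.

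I would then split ordered pairs $(y_1,y_2)$ into three groups. The \emph{diagonal} $y_1=y_2$ forces $\adv(y_1)=\adv(y_2)$, giving $S=q-2$ and contributing at most $q(q-2)$. The \emph{swap} pairs, possible only when $a=b$, satisfy $y_1\ne y_2$, $y_2=\adv(y_1)$, and $y_1=\adv(y_2)$; all exponents cancel, $S=q-2$, and there are at most $q$ such ordered pairs (since $y_1$ determines $y_2=\adv(y_1)$), contributing at most $q(q-2)$. A short coincidence analysis for the remaining pairs, using crucially $a\ne 0$ and $\adv(y)\ne y$, shows $f_{y_1,y_2}$ is never a $d$-th power there; with at most $4$ distinct roots, Lemma~\ref{weil} gives $|S(y_1,y_2)|\le 3\sqrt{q}$. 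Combining,
\[
\sum_{s\in\F}|F(s)|^2 \le 2q(q-2) + 3q^{1/2}(q^2-q) \le \sqrt{11}\, q^{5/2},
\]
and taking square roots gives $\sum_y|T_y|\le 11^{1/4}q^{5/4}|\calS|^{1/2}$.

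The main obstacle is the coincidence analysis: one must check every possible identification pattern among the four points $-y_1,-y_2,-\adv(y_1),-\adv(y_2)$ and verify that, outside the diagonal and swap families, at least one reduced multiplicity of $f_{y_1,y_2}$ is nonzero modulo $d$. The hypothesis $a\ne 0$ (nontriviality of $\chi_a$) is essential: were $a$ allowed to vanish, swap-type configurations with $b\ne 0$ would produce an $\Omega(q^2)$ contribution outside the Weil range, destroying the $q^{5/4}$ estimate---exactly the obstruction flagged in the remark preceding the theorem.
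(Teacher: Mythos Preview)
Your argument is correct and lands on the same constant $11^{1/4}$, but the route differs from the paper's in one structural respect. The paper applies Cauchy--Schwarz \emph{twice}: first in $y$ to obtain $\Tet^2\le q\sum_{s,t\in\calS}\sum_y\psi_{s,t}(y)$, and then in $(s,t)$ to obtain $\Tet^4\le q^2|\calS|^2\sum_{s,t\in\F}\bigl|\sum_y\psi_{s,t}(y)\bigr|^2$, which after swapping sums becomes $q^2|\calS|^2\sum_{y,z}|\nu(y,z)|^2$. Your phase-linearization step replaces the first Cauchy--Schwarz entirely: writing $\Tet=\sum_{s\in\calS}F(s)$ and applying Cauchy--Schwarz once in $s$ gives $\Tet^2\le|\calS|\sum_{y_1,y_2}|S(y_1,y_2)|$. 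The inner sum $S(y_1,y_2)$ is exactly the paper's $\nu(y,z)$, and the coincidence analysis (diagonal, swap when $a=b$, Weil elsewhere) is identical in both proofs. The net effect is that you control $\sum_{y,z}|\nu|$ rather than $\sum_{y,z}|\nu|^2$; both collapse to the same endpoint because the diagonal/swap mass is $O(q^2)$ either way and the Weil contribution is $3q^{5/2}$ vs.\ $9q^3$, matching after the respective root is taken. Your variant is slightly more direct for this theorem; the paper's two-step scheme, on the other hand, is what generalizes cleanly to the weak-seed Theorem~\ref{charsum-param}, where the second Cauchy--Schwarz becomes H\"older with exponent $2r$. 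One small point: your displayed inequality $2q(q-2)+3q^{1/2}(q^2-q)\le\sqrt{11}\,q^{5/2}$ is asserted but not verified; it does hold for every prime power $q\ge 2$ (the difference $(\sqrt{11}-3)q^{5/2}-2q^2+3q^{3/2}+4q$ is increasing in $q$ and positive at $q=2$), but a one-line justification would make the argument self-contained.
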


\begin{proof} Write
$\Tet=\sum_{y\in \F}\Bigl| \sum_{s\in \calS}\chi_a(s+y)\chi_b(s+\adv(y))\Bigr|.$
We begin by applying Cauchy's inequality to obtain
$$\Tet^2\le q\sum_{y\in \F}\Bigl|\sum_{s\in \calS}\chi_a(s+y)\chi_b(s+\adv(y))
\Bigr|^2=q\sum_{s,t\in \calS}\sum_{y\in \F}\psi_{s,t}(y),$$
in which we have written
\begin{equation}\label{1-strong}
\psi_{s,t}(y)=\chi_a(s+y)\chi_b(s+\adv(y))\chibar_a(t+y)\chibar_b(t+\adv(y)).
\end{equation}
Applying Cauchy's inequality a second time, we deduce that
$$\Tet^4\le q^2|\calS|^2\sum_{s,t\in \calS}\Bigl|\sum_{y\in \F}\psi_{s,t}(y)
\Bigr|^2.$$
By positivity, the sum over $s$ and $t$ may be extended from $\calS$ to the
entire set $\F$, and thus we deduce that
\begin{equation}\label{2-strong}
\Tet^4\le q^2|\calS|^2\sum_{s,t\in \F}\sum_{y,z\in \F}\psi_{s,t}(y)
\psibar_{s,t}(z).
\end{equation}
On recalling the definition (\ref{1-strong}), we may expand the right hand side of
(\ref{2-strong}) to obtain the bound
\begin{equation}\label{3-strong}
\Tet^4\le q^2|\calS|^2\sum_{y,z\in \F}|\nu(y,z)|^2,
\end{equation}
where
$$\nu(y,z)=\sum_{s\in \F}\chi_a(s+y)\chi_b(s+\adv(y))\chibar_a(s+z)\chibar_b(s+\adv(z)).$$

\par Recall now the hypothesis that $y\ne \adv(y)$. It follows that, considered as
an element of $\F[x]$, the polynomial
$$h_{y,z}(x)=(x+y)^a(x+\adv(y))^b(x+z)^{q-1-a}(x+\adv(z))^{q-1-b}$$
can be a $d$th power only when $y=z$, or when $y=\adv(z)$, $a=b$ and $z=\adv(y)$. In order to confirm
this assertion, observe first that when $y\ne z$ and $y\ne \adv(z)$, then $h_{y,z}$ has a zero of multiplicity $a$ at
$-y$. Next, when $y=\adv(z)$, one has $z\ne y$, and so when $a\ne b$ the polynomial $h_{y,z}$ has a zero of
multiplicity $q-1+a-b$ at $-y$. Finally, when $y=\adv(z)$ and $a=b$, then provided that $z\ne \adv(y)$ one finds
that $h_{y,z}$ has a zero of multiplicity $q-1-a$ at $-z$. In all of these situations it follows that $h_{y,z}$ has a
zero of multiplicity not divisible by $d=q-1$. When $y\ne z$, and $(y,z)\ne (\adv(z),\adv(y))$, therefore,
the polynomial $h_{y,z}(x)$ is not a $d$th power in
$\F[x]$, and has at most $4$ distinct roots. In such a situation, it therefore
follows from \lemmaref{weil} that
$$\nu(y,z)=\sum_{s\in \F}\chi(h_{y,z}(s))$$
is bounded in absolute value by $3\sqrt{q}$. Meanwhile, irrespective of the
values of $y$ and $z$, the expression $\nu(y,z)$ is trivially bounded in
absolute value by $q$. Substituting these estimates into (\ref{3-strong}), we arrive at
 the upper bound
\begin{align*}
\Tet^4&\le q^2|\calS|^2\sum_{y\in \F}\Bigl( |\nu(y,y)|^2+|\nu(y,\adv(y))|^2+
\sum_{z\in \F\setminus\{y,\adv(y)\}}|\nu(y,z)|^2\Bigr)\\
&\le q^2|\calS|^2\sum_{y\in \F}(q^2+q^2+q(3\sqrt{q})^2)=11q^5|\calS|^2.
\end{align*}
We may thus conclude that $\Tet\le 11^{1/4}q^{5/4}|\calS|^{1/2}$.
\end{proof}

A direct computation yields the following corollary.

\begin{corollary}\label{cor:char1}
Under the hypotheses of the statement of \theoremref{charsum}, one has
$$\sum_{y\in \F}\Bigl| \sum_{s\in \calS}\chi_a(s+y)\chi_b(s+\adv(y))\Bigr| < \eta q|\calS|$$
where $\eta < 2q^{1/4}/|\calS|^{1/2}$.
\end{corollary}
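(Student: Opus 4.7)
The plan is to deduce this corollary as an immediate rescaling of the bound already established in Theorem~\ref{charsum-uniform}. Writing
\[
\Theta := \sum_{y\in \F}\Bigl| \sum_{s\in \calS}\chi_a(s+y)\chi_b(s+\adv(y))\Bigr|,
\]
that theorem gives $\Theta \le 11^{1/4} q^{5/4} |\calS|^{1/2}$. I would then divide both sides through by $q|\calS|$, so that defining
\[
\eta := \frac{11^{1/4} q^{1/4}}{|\calS|^{1/2}}
\]
one has $\Theta \le \eta\, q |\calS|$ automatically.

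The only quantitative observation left is the trivial numerical inequality $11^{1/4} = \sqrt[4]{11} < \sqrt[4]{16} = 2$, which gives $\eta < 2q^{1/4}/|\calS|^{1/2}$, exactly the estimate asserted in the corollary. (A strict inequality in $\Theta < \eta q |\calS|$, if desired, can be arranged by inflating $\eta$ by an arbitrarily small amount, since the slack $2 - 11^{1/4}$ leaves plenty of room below the $2q^{1/4}/|\calS|^{1/2}$ threshold.)

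There is no real obstacle here: the corollary is simply a repackaging of Theorem~\ref{charsum-uniform} into the normalized form $\eta \cdot q|\calS|$, which is what one wants when feeding the estimate into Lemma~\ref{special-case}. Indeed, in the proof of Theorem~\ref{basic} the quantity $\eta_y$ plays the role of the character-sum bound averaged over $y$, and the bound $\eta < 2q^{1/4}/|\calS|^{1/2}$ with $|\calS|=2^k$ is precisely what yields the final error $\eps = Mq^{1/4} 2^{1-k/2}$ in that theorem.
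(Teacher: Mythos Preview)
Your proposal is correct and matches the paper's approach: the paper merely states ``A direct computation yields the following corollary,'' and what you wrote is precisely that computation---divide the bound $\Theta \le 11^{1/4} q^{5/4}|\calS|^{1/2}$ from Theorem~\ref{charsum-uniform} by $q|\calS|$ and use $11^{1/4}<2$. (The corollary's reference to ``\theoremref{charsum}'' is evidently a labeling slip for Theorem~\ref{charsum-uniform}, as you correctly inferred from the sum being over $y\in\F$ and from its placement and use in Theorem~\ref{basic}.)
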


\subsection{Character Sum for Weak Seeds}

The work in this subsection came after \cite{CRS11}, and uses ideas related to their work and to \cite{Raz05}.

\begin{theorem}\label{charsum}
For $0<\alpha, \eta \leq 1/2$, suppose
that $\calS$ and $\calT$ are non-empty subsets of $\F$
with $|\calS| \geq q^{1/2 + \alpha}$ and $|\calT| \geq \max((1/\eta)^{7/\alpha}, (\log q)^8)$,
and that $\adv:\F\rightarrow \F$
is any function satisfying the property that $\adv(y)\ne y$ for all $y\in \F$.
Then for large enough~$q$, we have
\[ \sum_{y\in \calT}\Bigl| \sum_{s\in \calS}\chi_a(s+y)\chi_b(s+\adv(y))\Bigr| < \eta |\calT||\calS|. \]
\end{theorem}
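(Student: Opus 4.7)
The plan is to extend \theoremref{charsum-uniform} to small subsets $\calT \subsetneq \F$ via an iterated Cauchy--Schwarz/Weil argument, using ideas from \cite{CRS11} and \cite{Raz05} to handle the adversarial function $\adv$. Unlike the uniform-seed case, we cannot simply extend the outer sum to all of $\F$, since $\adv$ is only constrained to be fixed-point-free and may otherwise be arbitrary; we must keep the sum restricted to $\calT$ and compensate for its small size by a more delicate iteration.

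As a first step, I apply the same two rounds of Cauchy--Schwarz as in \theoremref{charsum-uniform}, but retain the restriction $y \in \calT$. Writing $f(s,y) = \chi_a(s+y)\chi_b(s+\adv(y))$ and $F(y) = \sum_{s \in \calS} f(s,y)$, and choosing unimodular phases $c_y$ with $\sum_{y \in \calT} c_y F(y) = \Theta$, Cauchy--Schwarz on $s$ (extended from $\calS$ to $\F$) gives
\[\Theta^2 \leq |\calS| \sum_{y, y' \in \calT} c_y \bar c_{y'}\, \mu(y, y'), \qquad \mu(y, y') = \sum_{s \in \F} f(s, y) \overline{f(s, y')}.\]
A further Cauchy--Schwarz on $(y, y') \in \calT^2$ yields $\Theta^4 \leq |\calS|^2 |\calT|^2 \sum_{(y,y') \in \calT^2} |\mu(y, y')|^2$. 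By Weil's bound (\lemmaref{weil}), exactly as in the uniform case, $|\mu(y, y')| \leq 3\sqrt{q}$ on generic pairs while only the trivial bound $|\mu(y, y')| \leq q$ holds on the $O(|\calT|)$ ``degenerate'' pairs (the diagonal $y = y'$ and the symmetric case involving $\adv$ with $a = b$).

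This immediately gives a useful estimate when $|\calT| \geq q^{1 - 2\alpha}/\mathrm{poly}(\eta)$, but is too weak for the polylogarithmic $|\calT|$ permitted in the theorem. To reduce the dominant ``diagonal'' contribution of order $|\calT| q^2$, I iterate the argument in the style of \cite{Raz05} and \cite{CRS11}: rewriting
\[\sum_{(y, y') \in \calT^2} |\mu(y, y')|^2 = \sum_{(s, t) \in \F^2} |B(s, t)|^2, \qquad B(s, t) = \sum_{y \in \calT} f(s, y) \overline{f(t, y)},\]
and applying a further round of Cauchy--Schwarz to $B$, one obtains a structurally similar expression with $\calT^2$ replaced by $\calT^4$ and with the inner character sum controlled by a Weil polynomial with roughly double the number of roots. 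After $k = O(1/\alpha)$ such iterations, the diagonal contribution has been reduced enough that $\Theta \leq \eta |\calT||\calS|$ holds whenever $|\calT| \geq (1/\eta)^{7/\alpha}$ and $|\calT| \geq (\log q)^8$; the polylogarithmic lower bound on $|\calT|$ absorbs the polynomial-in-$k$ factors coming from the iterated Weil bound.

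The main obstacle is the careful bookkeeping of degenerate configurations at each iteration. In \theoremref{charsum-uniform} only two simple types of degenerate pairs arise; after $k$ iterations the relevant Weil polynomial has $O(2^k)$ distinct roots, and one must enumerate all coincidence patterns among the tuple $\{(y_i, \adv(y_i))\}$ that would force this polynomial to be a $d$-th power. Showing that these configurations form a sparse subset of $\calT^{2^k}$, while simultaneously controlling the polynomial factors in $k$ arising from Weil, is where the hypothesis $\adv(y) \neq y$ must be invoked repeatedly (as in the case analysis of \theoremref{charsum-uniform}) and where the ideas from \cite{CRS11,Raz05} interact most delicately with the non-malleable structure.
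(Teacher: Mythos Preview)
Your proposal has the right ingredients (Cauchy--Schwarz reduction to a complete Weil sum, followed by a count of degenerate tuples), but the specific iteration you describe does not work, for two related reasons.

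First, the order of the two Cauchy--Schwarz steps is reversed. By applying Cauchy--Schwarz on $s$ first and immediately extending $\calS$ to $\F$, you have already ``spent'' the source variable; the resulting identity $\sum_{(y,y')\in\calT^2}|\mu(y,y')|^2=\sum_{(s,t)\in\F^2}|B(s,t)|^2$ is just a rewriting, and any further Cauchy--Schwarz on $(s,t)$ now costs a factor $q^2$ rather than $|\calS|^2$. Concretely, one step of your iteration gives $\Theta^8\le |\calS|^4|\calT|^4q^2\sum_{\mathbf y\in\calT^4}|\nu_4(\mathbf y)|^2$, whereas the correct bound is $\Theta^8\le |\calT|^4|\calS|^6\sum_{\mathbf y\in\calT^4}|\nu(\mathbf y)|^2$; the loss of $(q/|\calS|)^2=q^{1-2\alpha}$ per iteration is fatal for polylogarithmic $|\calT|$. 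The paper instead applies Cauchy--Schwarz on $y\in\calT$ first, then a single H\"older step with exponent $2r$ on $(s,t)\in\calS^2$, and only \emph{then} extends $(s,t)$ to $\F^2$. This directly yields $\Theta^{4r}\le |\calT|^{2r}|\calS|^{4r-2}\sum_{\mathbf y\in\calT^{2r}}|\nu(\mathbf y)|^2$ with no spurious $q$-factors.

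Second, your claim that $k=O(1/\alpha)$ iterations suffice is incorrect: when $|\calT|$ is as small as $(\log q)^8$, the H\"older exponent must be $r\approx 2\log q/\log|\calT|$, which grows with $q$. The paper sets $r=1+\lfloor 2\log q/\log|\calT|\rfloor$ and verifies the two conditions $|\calT|\ge (2r)^4q^{1/r}$ and $|\calS|\ge 4rq^{1/2}/\eta^{2r}$; the hypothesis $|\calT|\ge(\log q)^8$ is exactly what absorbs the $(2r)^4$ prefactor. Finally, the degenerate-tuple count is handled by a clean graph argument: one shows that the bad set $\calB\subseteq\calT^{2r}$ (tuples $\mathbf y$ for which every coordinate $y_i$ coincides with some $y_j$ or $\adv(y_j)$, $j\ne i$) satisfies $|\calB|\le (2r)^{4r}|\calT|^r$, by observing that the coincidence relations form a graph of maximum in/out-degree $1$ whose paths and cycles are determined by at most $r$ ``roots''. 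This is more elementary than your sketch suggests and does not require the full machinery of \cite{CRS11,Raz05}.
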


We prove this by choosing a suitable parameter $r$ in the following theorem.

\begin{theorem}\label{charsum-param}
Suppose that $\calS$ and $\calT$ are non-empty subsets of $\F$, and that $\adv:\F\rightarrow \F$
is any function satisfying the property that $\adv(y)\ne y$ for all $y\in \F$. Then for each natural
number $r$, one has
$$\sum_{y\in \calT}\Bigl| \sum_{s\in \calS}\chi_a(s+y)\chi_b(s+\adv(y))\Bigr| \le \lam_rq^{1/(4r)}
|\calS|^{1-1/(2r)}|\calT|,$$
where
$$\lam_r=\left( (4r-1)^2+(2r)^{4r}q|\calT|^{-r}\right)^{1/(4r)}.$$
\end{theorem}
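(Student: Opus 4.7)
The plan is to mimic the proof of \theoremref{charsum-uniform} but iterate the Cauchy/Hölder step so that the $|\calS|$- and $|\calT|$-exponents match the target, and to replace the degree-$4$ application of Weil by one of degree $4r$. Set $\sigma(y)=\sum_{s\in\calS}\chi_a(s+y)\chi_b(s+\adv(y))$ so that $\Tet=\sum_{y\in\calT}|\sigma(y)|$. First I would apply Cauchy--Schwarz to obtain
\[ \Tet^2 \le |\calT|\sum_{y\in\calT}|\sigma(y)|^2 = |\calT|\sum_{s,t\in\calS}g(s,t),\]
where $g(s,t)=\sum_{y\in\calT}\chi_a(s+y)\chi_b(s+\adv(y))\chibar_a(t+y)\chibar_b(t+\adv(y))$. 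Since $\sum g=\sum_y|\sigma|^2$ is a nonnegative real, $\sum g\le\sum|g|$; applying Hölder's inequality to $\sum|g|$ with exponent $r$ gives $(\sum|g|)^r\le|\calS|^{2r-2}\sum|g|^r$, so
\[ \Tet^{2r} \le |\calT|^r|\calS|^{2r-2}\sum_{s,t\in\calS}|g(s,t)|^r.\]
A second Cauchy--Schwarz on $\sum|g|^r$, followed by the standard trick of extending the $(s,t)$-sum from $\calS^2$ to $\F^2$ by positivity, then produces
\[ \Tet^{4r} \le |\calT|^{2r}|\calS|^{4r-2}\sum_{s,t\in\F}|g(s,t)|^{2r}.\]

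Next I would expand $|g|^{2r}=g^r\overline{g}^r$ as a sum over $(\mathbf{y},\mathbf{z})\in\calT^r\times\calT^r$ and interchange summations, so that the inner sums over $s$ and $t$ collapse into character sums of the single polynomial
\[ h_{\mathbf{y},\mathbf{z}}(x)=\prod_{i=1}^r (x+y_i)^a(x+\adv(y_i))^b(x+z_i)^{q-1-a}(x+\adv(z_i))^{q-1-b}.\]
Writing $\rho(\mathbf{y},\mathbf{z})=\sum_{s\in\F}\chi(h_{\mathbf{y},\mathbf{z}}(s))$, a direct calculation (paralleling the one that produced $\nu(y,z)$ in \theoremref{charsum-uniform}) gives $\sum_{s,t\in\F}|g|^{2r}=\sum_{\mathbf{y},\mathbf{z}\in\calT^r}|\rho(\mathbf{y},\mathbf{z})|^2$. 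Since $h_{\mathbf{y},\mathbf{z}}$ has at most $4r$ distinct roots, \lemmaref{weil} gives $|\rho|^2\le(4r-1)^2q$ whenever $h_{\mathbf{y},\mathbf{z}}$ is not a $d$th power; for the remaining ``bad'' pairs we fall back on the trivial bound $|\rho|^2\le q^2$.

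The main obstacle is bounding the number $B$ of bad pairs. Being a $d$th power forces, at every root $u$, the weighted multiplicity
\[ a\bigl(|\{i:y_i=u\}|-|\{j:z_j=u\}|\bigr)+b\bigl(|\{i:\adv(y_i)=u\}|-|\{j:\adv(z_j)=u\}|\bigr)\]
to vanish modulo $d$. For generic $(a,b)$ this forces $\mathbf{z}$ to be a permutation of $\mathbf{y}$ (and then automatically $\adv(\mathbf{z})$ permutes $\adv(\mathbf{y})$), contributing at most $r!\,|\calT|^r$ pairs; a case-analysis of the possible cancellation patterns at the $\le 4r$ roots, using the hypothesis $\adv(y)\ne y$, absorbs the remaining non-generic contributions and yields the deliberately loose bound $B\le(2r)^{4r}|\calT|^r$. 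Combining these estimates,
\[ \Tet^{4r}\le|\calT|^{2r}|\calS|^{4r-2}\Bigl((4r-1)^2 q\,|\calT|^{2r}+(2r)^{4r}q^2|\calT|^r\Bigr)=\lam_r^{4r}\,q\,|\calS|^{4r-2}|\calT|^{4r},\]
and taking the $4r$th root yields the claim. The two delicate points are (a) identifying this exact iterated Cauchy/Hölder recipe, since a single Cauchy or a naive Hölder loses either in $|\calS|$ or in $|\calT|$, and (b) the combinatorial count of bad pairs, which is where the extravagant constant $(2r)^{4r}$ originates.
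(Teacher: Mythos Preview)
Your overall architecture matches the paper's: Cauchy--Schwarz on the $y$-sum, then H\"older on the $(s,t)$-sum, extend to $\F^2$ by positivity, expand, and apply Weil. (The paper applies H\"older once with exponent $2r$ rather than your H\"older-with-exponent-$r$-then-Cauchy; these are equivalent.) The expansion and Weil steps are identical after relabeling your $(\mathbf{y},\mathbf{z})\in\calT^r\times\calT^r$ as a single $\mathbf{y}\in\calT^{2r}$.

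The genuine gap is your bound $B\le(2r)^{4r}|\calT|^r$ on bad tuples. Your sketch---observe that for generic $(a,b)$ the multiplicity equations force $\mathbf{z}$ to be a permutation of $\mathbf{y}$, then absorb non-generic $(a,b)$ by ``a case-analysis''---does not constitute a proof: the theorem carries no genericity hypothesis on $(a,b)$, and the case analysis is never performed. A direct attack along these lines is awkward because for particular $(a,b)$ (e.g.\ $b=0$, or $a=b$, or $a$ close to $d$) the cancellation patterns differ substantially.

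The paper sidesteps the dependence on $(a,b)$ entirely. The observation is: if some coordinate $y_i$ satisfies $y_i\ne y_j$ and $y_i\ne\adv(y_j)$ for all $j\ne i$, then (using $\adv(y_i)\ne y_i$) the root $-y_i$ of $h_{\mathbf{y}}$ has multiplicity exactly $a$ or exactly $q-1-a$, neither divisible by $d=q-1$; so $h_{\mathbf{y}}$ is not a $d$th power. Thus the bad set $\calB$ is contained in the set of $2r$-tuples in which \emph{every} coordinate collides (via equality or via $\adv$) with some other coordinate. The paper then bounds $|\calB|$ by a short graph argument: among the $l$ distinct values $v_1,\ldots,v_l$ of the $y_i$, draw a directed edge $v_j\to v_i$ whenever $\adv(v_j)=v_i$ (at most one incoming and one outgoing edge per vertex), so the graph is a disjoint union of isolated vertices, paths, and cycles. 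Any isolated vertex must have multiplicity $\ge 2$, whence (roots of paths/cycles) $+$ (isolated vertices) $\le r$; summing over multiplicity profiles and index arrangements gives $|\calB|\le(2r)^{4r}|\calT|^r$. This single combinatorial step is what your proposal is missing.
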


\begin{proof} Write
$$\Tet=\sum_{y\in \calT}\Bigl| \sum_{s\in \calS}\chi_a(s+y)\chi_b(s+\adv(y))\Bigr|.$$
We begin by applying Cauchy's inequality to obtain
$$\Tet^2\le |\calT|\sum_{y\in \calT}\Bigl|\sum_{s\in \calS}\chi_a(s+y)\chi_b(s+\adv(y))
\Bigr|^2=|\calT|\sum_{s,t\in \calS}\sum_{y\in \calT}\psi_{s,t}(y),$$
in which we have written
\begin{equation}\label{1}
\psi_{s,t}(y)=\chi_a(s+y)\chi_b(s+\adv(y))\chibar_a(t+y)\chibar_b(t+\adv(y)).
\end{equation}
Applying H\"older's inequality, we deduce that
$$\Tet^{4r}\le |\calT|^{2r}|\calS|^{4r-2}\sum_{s,t\in \calS}\Bigl|\sum_{y\in \calT}\psi_{s,t}(y)
\Bigr|^{2r}.$$
By positivity, the sum over $s$ and $t$ may be extended from $\calS$ to the
entire set $\F$, and thus we deduce that
\begin{equation}\label{2}
\Tet^{4r}\le |\calT|^{2r}|\calS|^{4r-2}\sum_{s,t\in \F}\sum_{{\mathbf y}\in \calT^{2r}}
\prod_{i=1}^r\psi_{s,t}(y_i)\psibar_{s,t}(y_{r+i}).
\end{equation}
On recalling the definition (\ref{1}), we may expand the right hand side of
(\ref{2}) to obtain the bound
\begin{equation}\label{3}
\Tet^{4r}\le |\calT|^{2r}|\calS|^{4r-2}\sum_{{\mathbf y}\in \calT^{2r}}|\nu({\mathbf y})|^2,
\end{equation}
where
$$\nu({\mathbf y})=\sum_{s\in \F}\prod_{i=1}^r\chi_a(s+y_i)\chi_b(s+\adv(y_i))
\chibar_a(s+y_{r+i})\chibar_b(s+\adv(y_{r+i})).$$

\par We now consider the circumstances in which, considered as an element of $\F [x]$,
the polynomial
$$h_{\mathbf y}(x)=\prod_{i=1}^r(x+y_i)^a(x+\adv(y_i))^b(x+y_{r+i})^{q-1-a}(x+\adv(y_{r+i}))^{q-1-b}$$
is a $d$th power. Consider a fixed $2r$-tuple ${\mathbf y}\in \calT^{2r}$ and an index
$i$ with $1\le i\le 2r$. If there is no index $j$ with $1\le j\le 2r$ and $j\ne i$ for which
\begin{equation}\label{w1}
y_i=y_j\quad \text{or}\quad y_i=\adv(y_j),
\end{equation}
then in view of our hypothesis that $y_i\ne \adv(y_i)$, it follows that the polynomial
$h_{\mathbf y}(x)$ has a zero of order precisely $a$ at $-y_i$ in the situation where
$1\le i\le r$, or $q-1-a$ at $-y_i$ in the situation where $r+1\le i\le 2r$. Write $\calB$
 for the set of $2r$-tuples ${\mathbf y}\in \calT^{2r}$ having the property that, for
each index $i$ with $1\le i\le 2r$, there exists an index $j$ with $1\le j\le 2r$ and
$j\ne i$ for which (\ref{w1}) holds. It follows that if $h_{\mathbf y}(x)$ is to be a
$d$th power in $\F[x]$, then one must have ${\mathbf y}\in \calB$. On the other
hand, when ${\mathbf y}\in \calT^{2r}\setminus \calB$, the polynomial
$h_{\mathbf y}(x)$ is not a $d$th power in $\F[x]$, and has at most $4r$ distinct
roots. In such a situation, we therefore deduce from \lemmaref{weil} that
$$\nu({\mathbf y})=\sum_{s\in \F}\chi(h_{\mathbf y}(s))$$
is bounded in absolute value by $(4r-1)\sqrt{q}$. Meanwhile, irrespective of the
value of ${\mathbf y}$, the expression $\nu({\mathbf y})$ is trivially bounded in
absolute value by $q$. Substituting these estimates into (\ref{3}), we arrive at
 the upper bound
\begin{align}
\Tet^{4r}&\le |\calT|^{2r}|\calS|^{4r-2}\Bigl(\sum_{{\mathbf y}\in \calT^{2r}
\setminus \calB}|\nu({\mathbf y})|^2+\sum_{{\mathbf y}\in \calB}
|\nu({\mathbf y})|^2\Bigr)\notag \\
&\le |\calT|^{2r}|\calS|^{4r-2}\left( (4r-1)^2q|\calT|^{2r}+q^2|\calB|\right).\label{w0}
\end{align}

\par It remains now only to bound $|\calB|$. We establish shortly that the
$2r$-tuples ${\mathbf y}$ lying in $\calB$ are generated via the relations (\ref{w1})
from at most $r$ of the coordinates $y_i$ of ${\mathbf y}$. With this in mind, we
begin by bounding the number of $2r$-tuples ${\mathbf y}$ generated from one such
$r$-tuple. Suppose that there are $l$ distinct values amongst $y_1,\ldots ,y_{2r}$, say
$y_{i_1}=v_1,\ldots ,y_{i_l}=v_l$, with respective multiplicities $a_1,\ldots ,a_l$.
Considering a fixed choice of $v_1,\ldots ,v_l$, the number of ways in which
$y_1,\ldots ,y_{2r}$ may be selected to satisfy the multiplicity condition is
$(2r)!/(a_1!a_2!\ldots a_l!)$.

\par Next we identify a directed graph with vertices labelled by the distinct elements
$v_1,\ldots ,v_l$ of $\F$ as follows. We consider the vertices $v_1,v_2,\ldots ,v_l$
in turn. At stage $i$ we consider all elements $v_j$ with $\adv(v_j)=v_i$. If no such
element $v_j$ exists, then we add no edge. If one or more exist, on the other hand,
then we select one such element $v_j$ at random, and add a directed vertex from $v_j$
to $v_i$. Notice that, since $v_1,\ldots ,v_l$ are distinct, it follows that there can be at
most one directed edge leaving any given vertex. Also, by construction, there is at most
one directed edge arriving at any given vertex. Furthermore, in view of the criterion
(\ref{w1}), any vertex $v_k$ which possesses no edges must necessarily have multiplicity
$a_k\ge 2$. In this way, we see that the graph constructed in this manner consists of at
most a union of isolated vertices, non-branching paths of the shape
\begin{equation}\label{w2}
v_{i_1}\rightarrow v_{i_2}\rightarrow \cdots \rightarrow v_{i_k},
\end{equation}
and cycles of the shape
\begin{equation}\label{w3}
v_{i_1}\rightarrow v_{i_2}\rightarrow \cdots \rightarrow v_{i_k}\rightarrow v_{i_1}.
\end{equation}
In the latter two cases, of course, one has $k\ge 2$. For each non-branching path of type
(\ref{w2}), we call the element $v_{i_1}$ the {\it root of the path}. For each cycle of type
(\ref{w3}), we call the element $v_{i_1}$ a {\it root of the cycle}, though of course which
element we label as $v_{i_1}$ is unimportant. Notice that since
$v_{i_{m+1}}=\adv(v_{i_m})$ for each $m<k$, roots uniquely determine all elements
in the repective paths and cycles by repeated application of $\adv$. Consequently, all of the
elements $v_1,\ldots ,v_l$ are uniquely determined by the identities of the roots, and the
indices defining the paths, cycles and isolated vertices of the graph.\par

Denote by $z$ the number of paths and cycles in the graph, and by $w$ the number of
isolated vertices in the graph. Then on considering the multiplicities associated with the
elements $v_1,\ldots ,v_l$, one finds that
$$2z+2w\le a_1+\ldots +a_l=2r.$$
The number of elements from $\calT$ that can occur as roots and isolated vertices is
consequently at most $|\calT|^{z+w}\le |\calT|^r$. We estimate the number of possible
arrangements of indices defining the paths, cycles and isolated vertices as follows. Given
each element $v_i$, we can attach to it a directed path going to another element $v_j$ in
at most $l-1$ ways, or choose not to attach a directed path from it. Thus there are in total
at most $((l-1)+1)^l$ possible arrangements of indices defining the paths, cycles and
isolated vertices amongst the $l$ elements $v_1,\ldots ,v_l$. Combining the estimates that
we have assembled thus far, we conclude that
$$|\calB|\le \sum_{\substack{1\le a_1,\ldots ,a_l\le 2r\\ a_1+\ldots +a_l=2r}}
\frac{(2r)!}{a_1!a_2!\ldots a_l!}l^l|\calT|^r\le (1+\ldots +1)^{2r}l^l|\calT|^r\le
(2r)^{4r}|\calT|^r.$$

\par Finally, on substituting this estimate into (\ref{w0}), we obtain
$$\Tet^{4r}\le |\calT|^{4r}|\calS|^{4r-2}q\left( (4r-1)^2+q(2r)^{4r}|\calT|^{-r}\right),$$
and the conclusion of the theorem follows on extracting $4r$-th roots.
\end{proof}

A direct computation yields the following corollary.

\begin{corollary}
\label{charsum-cor}
Let $\eta$ be a positive number with $\eta\le 1$. Then under
the hypotheses of the statement of \theoremref{charsum-param}, one has
\begin{equation}\label{w4}
\sum_{y\in \calT}\Bigl| \sum_{s\in \calS}\chi_a(s+y)\chi_b(s+\adv(y))\Bigr| < \eta |\calT||\calS|
\end{equation}
whenever $|\calT| \geq (2r)^4q^{1/r}$ and $|\calS| \geq 4rq^{1/2}/\eta^{2r}$.
\end{corollary}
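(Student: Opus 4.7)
The plan is to invoke Theorem~\ref{charsum-param} with the prescribed $r$ and verify that the two hypotheses on $|\calS|$ and $|\calT|$ suffice to push its upper bound below $\eta|\calT||\calS|$. Writing $\Tet$ for the left-hand side of~(\ref{w4}), Theorem~\ref{charsum-param} gives $\Tet\le\lam_r q^{1/(4r)}|\calS|^{1-1/(2r)}|\calT|$, so after dividing through by $|\calS|^{1-1/(2r)}|\calT|$ it suffices to establish the single inequality $\lam_r q^{1/(4r)}<\eta|\calS|^{1/(2r)}$.

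First I would use the lower bound $|\calT|\ge(2r)^4 q^{1/r}$ to control $\lam_r$. Raising to the $r$th power gives $|\calT|^r\ge(2r)^{4r}q$, so the second term inside $\lam_r^{4r}$ satisfies $(2r)^{4r}q|\calT|^{-r}\le 1$. Consequently,
$$\lam_r^{4r}\le (4r-1)^2+1<(4r)^2,$$
so that $\lam_r^{2r}<4r$. Next I would combine this with the lower bound $|\calS|\ge 4rq^{1/2}/\eta^{2r}$: taking $(2r)$th roots and multiplying through by $\eta$ gives
$$\eta|\calS|^{1/(2r)}\ge (4r)^{1/(2r)}q^{1/(4r)}>\lam_r q^{1/(4r)},$$
where the strict inequality on the right comes from $\lam_r^{2r}<4r$. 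This is precisely the inequality we needed, so substituting back into the bound of Theorem~\ref{charsum-param} yields $\Tet<\eta|\calT||\calS|$, completing the argument.

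The ``hard part'' of the corollary is really nothing more than careful bookkeeping; all the structural input (the Cauchy/H\"older ladder, the Weil bound, and the combinatorial count of the ``bad'' tuples in $\calB$) has already been absorbed into Theorem~\ref{charsum-param}, and no further combinatorial or number-theoretic ideas are needed. The only delicate point to watch is that the first hypothesis is chosen precisely so that the two terms defining $\lam_r^{4r}$ are of comparable size, so that $\lam_r^{2r}$ can be absorbed into the constant $4r$ in the second hypothesis; once that is observed, the corollary drops out immediately.
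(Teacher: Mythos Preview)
Your proposal is correct and follows essentially the same route as the paper: bound $\lam_r^{4r}\le (4r-1)^2+1<16r^2$ using the hypothesis on $|\calT|$, and then verify that the hypothesis on $|\calS|$ forces $\lam_r q^{1/(4r)}<\eta|\calS|^{1/(2r)}$. If anything, your write-up is slightly tidier about the strict inequality, which the paper glosses over.
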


\begin{proof} Recall the notation of the statement of \theoremref{charsum-param}. When
$|\calT|>(2r)^4q^{1/r}$, we find that
$$\lam_r^{4r}=(4r-1)^2+(2r)^{4r}q|\calT|^{-r} \leq (4r-1)^2+1<16r^2.$$
But then the upper bound (\ref{w4}) follows from \theoremref{charsum} provided only that
$$(16r^2)^{1/(4r)}q^{1/(4r)}|\calS|^{1-1/(2r)}|\calT| \leq \eta |\calT||\calS|,$$
as is the case whenever $|\calS| \geq 4rq^{1/2}/\eta^{2r}$.
\end{proof}

\begin{proof}[Proof of \theoremref{charsum}]
We verify that the hypotheses of \theoremref{charsum} imply the conditions on $|\calS|$ and $|\calT|$
in \corollaryref{charsum-cor}, for $r=1+\floor{(2\log q)/\log |\calT|} \geq 3$.  We then have $q^{2/r} \leq |\calT| \leq q^{2/(r-1)}$, and
for large enough $q$ we get $2r \leq \log q$.
Therefore $|\calT| \geq (\log q)^4 |\calT|^{1/2} \geq (2r)^4 q^{1/r}$.

Moreover,
$1/\eta^{2r} \leq |T|^{2 \alpha r/7} \leq q^{(4 \alpha/7)( r/(r-1))} \leq q^{(4 \alpha /7) (3/2)}$, and hence for large enough $q$ we have
\[ |\calS| \geq 4(\log q)q^{1/2+ 6\alpha/7} \geq 4rq^{1/2}/\eta^{2r},\]
as required.
\end{proof}

\section{Application to Privacy Amplification}\label{privacy}

Following \cite{kr:agree-close}, we define a privacy amplification protocol $(P_A, P_B)$, executed by two parties Alice and Bob sharing a secret $X\in \bits^n$, in the presence of an active, computationally unbounded adversary Eve, who might have some partial information $E$ about $X$ satisfying $\thinf(X|E)\ge k$.
Informally, this means that whenever a party (Alice or Bob) does not reject, the key $R$ output by this party is random and statistically independent of Eve's view. Moreover, if both parties do not reject, they must output the same keys $R_A=R_B$ with overwhelming probability.

More formally, we assume that Eve is in full control of the communication channel between Alice and Bob, and can arbitrarily insert, delete, reorder or modify messages sent by Alice and Bob to each other. In particular, Eve's strategy $P_E$ actually defines two correlated executions $(P_A,P_E)$ and $(P_E,P_B)$ between Alice and Eve, and Eve and Bob, called ``left execution'' and ``right execution'', respectively. We stress that the message scheduling for both of these executions is completely under Eve's control, and Eve might attempt to execute a run with one party for several rounds before resuming the execution with another party. However, Alice and Bob are assumed to have fresh, private and independent random tapes $Y$ and $W$, respectively, which are not known to Eve (who, by virtue of being unbounded, can be assumed deterministic). At the end of the left execution $(P_A(X,Y),P_E(E))$, Alice outputs a key $R_A\in \bits^m \cup \{\perp\}$, where $\perp$ is a special symbol indicating rejection. Similarly, Bob outputs a key $R_B \in \bits^m \cup \{\perp\}$ at the end of the right execution $(P_E(E),P_B(X,W))$. We let $E'$ denote the final view of Eve, which includes $E$ and the communication transcripts of both executions $(P_A(X,Y),P_E(E))$ and $(P_E(E),P_B(X,W)$. We can now define the security of $(P_A,P_B)$. Our definition is based on \cite{kr:agree-close}.

\BD An interactive protocol $(P_A, P_B)$, executed by Alice and Bob on a communication channel fully controlled by an active adversary Eve, is a $(k, m, \e)$-\emph{privacy amplification protocol} if it satisfies the following properties whenever $\thinf(X|E) \geq k$:
\begin{enumerate}
\item \underline{Correctness.} If Eve is passive, then $\Pr[R_A=R_B \land~ R_A\neq \perp \land~ R_B\neq \perp]=1$.
\item \underline{Robustness.} We start by defining the notion of {\em pre-application} robustness, which states that even if Eve is active, $\Pr[R_A \neq R_B \land~ R_A \neq \perp \land~ R_B \neq \perp]\le \e$.

The stronger notion of {\em post-application} robustness is defined similarly, except Eve is additionally given the key $R_A$ the moment she completed the left execution $(P_A,P_E)$, and the key $R_B$ the moment she completed the right execution $(P_E,P_B)$. For example, if Eve completed the left execution before the right execution, she may try to use $R_A$ to force Bob to output a different key $R_B\not\in\{R_A,\perp\}$, and vice versa.
\item \underline{Extraction.} Given a string $r\in \bits^m\cup \{\perp\}$, let $\purify(r)$ be $\perp$ if $r=\perp$, and otherwise replace $r\neq \perp$ by a fresh $m$-bit random string $U_m$:  $\purify(r)\leftarrow U_m$. Letting $E'$ denote Eve's view of the protocol, we require that
\[\Delta((R_A, E'),(\purify(R_A), E')) \leq \e
~~~~\mbox{and}~~~~
\Delta((R_B, E'),(\purify(R_B), E')) \leq \e\]
Namely, whenever a party does not reject, its key looks like a fresh random string to Eve.
\end{enumerate}
The quantity $k-m$ is called the \emph{entropy loss} and the quantity $\log (1/\e)$ is called the \emph{security parameter} of the protocol.
\ED

\subsection{Case of $k > n/2$}
Given a security parameter $s$, Dodis and Wichs showed that a non-malleable extractor, which extracts at least $2\log n+2s+4$ number of bits with error $\e=2^{-s-2}$, yields a two-round protocol for privacy amplification with optimal entropy loss, which also uses any (regular) extractor $\Ext$ with optimal entropy loss and any asymptotically good one-time message-authentication code $\mac$ (see Definition~\ref{def:mac}), is depicted in Figure~\ref{fig:AKA1}.


\begin{figure}[htb]
\begin{center}
\begin{small}
\begin{tabular}{l c l}
Alice:  $X$ & Eve: $E$ & ~~~~~~~~~~~~Bob: $X$ \\

\hline\\
Sample random $Y$.& &\\
 & $Y \llrightarrow[\rule{1.5cm}{0cm}]{} Y'$ & \\
&& Sample random $W'$.\\
&&  $R' = \nmExt(X;Y')$.\\
&&  $T' = \mac_{R'}(W')$.\\
&&  Set final $R_B = \Ext(X;W')$.\\
 & $(W,T) \llleftarrow[\rule{1.5cm}{0cm}]{} (W',T')$ & \\
 $R = \nmExt(X;Y)$&&\\
{\bf If} $T \neq \mac_{R}(W)$ {\em reject}.&&\\
Set final $R_A = \Ext(X;W)$.&&\\
\hline
\end{tabular}
\end{small}
\caption{\label{fig:AKA1}
$2$-round Privacy Amplification Protocol for $\thinf(X|E)>n/2$.
}
\end{center}
\end{figure}


Using the bound from Theorem~\ref{main} and setting $\eps=2^{-s}$ and $m=s$, we get the following theorem.
\BT \label{thm:nmext}
Under Conjecture~\ref{conj-primes} with constant $c$, for any $s>0$ there is a polynomial time computable $(k, \e)$-non-malleable extractor with $m=s$ and $\e=2^{-s}$, as long as $k \geq n/2+(c/2)\log n+4s+O(1)$.
\ET

Using this theorem, we obtain the following.

\begin{theorem}\label{thm:2round}
Under Conjecture~\ref{conj-primes} with constant $c$,
there is a polynomial-time, two-round protocol for privacy amplification with security parameter $s$ and entropy loss $O(\log n+s)$, when the min-entropy~$k$ of the $n$-bit secret
satisfies $k \geq n/2 + (c/2+8)\log n + 8s + O(1)$.
\end{theorem}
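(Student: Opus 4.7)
The plan is to instantiate the two-round protocol of Dodis and Wichs (Figure~\ref{fig:AKA1}) with our non-malleable extractor, a standard strong extractor $\Ext$ with optimal entropy loss (e.g., from the Leftover Hash Lemma), and a standard information-theoretic one-time MAC.

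First I would fix the parameters. A one-time MAC authenticating a message of length $\ell$ with security $2^{-s}$ can be built with key length $O(\ell + s)$. The authenticated message is the extractor seed $W$, which can be chosen of length $O(\log n + s)$; hence the MAC key, which is the output $R$ of $\nmExt$, needs only $m_{\text{nm}} = O(\log n + s)$ bits. Concretely, take $m_{\text{nm}} = 2\log n + 2s + O(1)$ with non-malleable error $\eps_{\text{nm}} = 2^{-m_{\text{nm}}} \le 2^{-s}$. Invoking Theorem~\ref{thm:nmext} with its security parameter set to $m_{\text{nm}}$ yields such a non-malleable extractor, under the entropy condition
\[ k \geq n/2 + (c/2)\log n + 4 m_{\text{nm}} + O(1) = n/2 + (c/2 + 8)\log n + 8s + O(1), \]
matching the hypothesis of the theorem.

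Next I would apply the Dodis and Wichs analysis of Figure~\ref{fig:AKA1}, which proceeds by case analysis on whether Eve's forwarded seed $Y'$ equals $Y$. If $Y' = Y$, then $R = R'$ is $\eps_{\text{nm}}$-close to uniform given $Y$ (a non-malleable extractor is in particular a strong extractor), so the MAC key is fresh to Eve and she can force $W \ne W'$ only with probability $O(\eps_{\text{nm}} + 2^{-s}) = 2^{-\Omega(s)}$. If $Y' \ne Y$, non-malleability gives that $R$ is $\eps_{\text{nm}}$-close to uniform even conditioned on $(R', Y)$, so again Eve produces a valid tag on any $W \ne W'$ only with probability $2^{-\Omega(s)}$. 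Hence $W = W'$ and $R_A = R_B$ whenever Bob does not reject.

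For the extraction step, the only part of Eve's view that depends on $X$ beyond her prior side-information $E$ and the independently chosen seeds $Y, W$ is the tag $T$, of length $O(\log n + s)$. By Lemma~\ref{lem:amentropy} and the assumption $\thinf(X \mid E) \ge k$, we get $\thinf(X \mid E, Y, W, T) \ge k - O(\log n + s)$, and then $\Ext(X, W)$ outputs a string of length $k - O(\log n + s)$ that is $2^{-s}$-close to uniform in Eve's view. Polynomial running time is inherited from Theorem~\ref{thm:nmext} under Conjecture~\ref{conj-primes}. The main ``obstacle'' is really parameter bookkeeping: ensuring that $\eps_{\text{nm}}$ is small enough to preserve MAC security, that $m_{\text{nm}}$ is large enough to key a MAC on messages of length $|W|$, and that the residual min-entropy of $X$ after conditioning on the transcript still meets the threshold required by $\Ext$; all three constraints fit together under the stated bound on $k$.
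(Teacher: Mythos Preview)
Your proposal is correct and follows essentially the same approach as the paper: the paper does not even give an explicit proof of this theorem, but simply remarks that Dodis and Wichs need a non-malleable extractor with at least $2\log n + 2s + 4$ output bits and error $2^{-s-2}$, and then invokes Theorem~\ref{thm:nmext} with its parameter set to this value, yielding exactly the entropy bound $k \ge n/2 + (c/2+8)\log n + 8s + O(1)$ that you derived. One minor remark: you need not assume the final-extractor seed $W$ has length $O(\log n + s)$; even with $|W|\le n$ the MAC of Theorem~\ref{thm:mac} requires key length only $2(\log |W| + s) = O(\log n + s)$, so the same parameter calculation goes through.
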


\paragraph{Using Weak Local Randomness.} We notice that we can use
Theorem~\ref{main-weakseed} to argue that Alice does not need perfect local randomness $Y$ to run the protocol in Figure~\ref{fig:AKA1}. Indeed, since the output of the non-malleable extractor is only $O(s)$-bit long, we only need the min-entropy of $Y$ to be $O(s)$. Similarly, Bob could use a two-source extractor $\Ext$ with a weak seed $W$ constructed by Raz~\cite{Raz05}. Assuming the entropy rate of $X$ is above $1/2+\alpha$ for some $\alpha>0$, this extractor extracts $\Omega(n)$ bits from $X$, and only needed the min-entropy of $W$ to be $O(s)$ as well. To summarize, Alice and Bob can each use local sources of randomness of min-entropy only $O(s)$, and still extract $\Omega(n)$ secret bits from $X$.

\subsection {Case of $k=\delta n$}
Here we give our protocol for arbitrary positive entropy rate. We first give some preliminaries.

\subsubsection{Prerequisites from previous work}


\BD
An {\em elementary somewhere-$k$-source} is a vector of sources $(X_1, \cdots, X_C)$, such that some $X_i$ is a $k$-source. A {\em somewhere $k$-source} is a convex combination of elementary somewhere-$k$-sources.
\ED

\BD
A function $\Cond: \bits^n \to (\bits^{n'})^C$ is a {\em $(k \to k', \e)$-somewhere-condenser} if for every $k$-source $X$, the vector $(X_1,\ldots,X_C) = \Cond(X)$ is $\e$-close to a somewhere-$k'$-source. When convenient, we call $\Cond$ a {\em rate-$(k/n \to k'/n', \e)$-somewhere-condenser}.
\ED

We are going to use condensers recently constructed based on the sum-product theorem. Specifically, we have the following theorem.

\BT [\cite{BarakKSSW05, Raz05, Zuc07}] \label{thm:swcondenser}
For any $\delta>0$ and constant $\beta>0$, there is an efficient family of rate-$(\delta \to 1-\beta, \e=2^{-\Omega(\delta n)})$-somewhere condensers $\Cond: \bits^n \to (\bits^{n'})^C$, where $C=\poly(1/\delta)$ and $n'=\poly(\delta)n$.

\ET






One-time message authentication codes (MACs) use a shared random key to authenticate a message in the information-theoretic setting.
\begin{definition} \label{def:mac}
A function family $\{\mac_R : \bits^{d} \to \bits^{v} \}$ is a $\e$-secure one-time MAC for messages of length $d$ with tags of length $v$ if for any $w \in \bits^{d}$ and any function (adversary) $A : \bits^{v} \to \bits^{d} \times \bits^{v}$,

\[\Pr_R[\mac_R(W')=T' \wedge W' \neq w \mid (W', T')=A(\mac_R(w))] \leq \e,\]
where $R$ is the uniform distribution over the key space $\bits^{\ell}$.
\end{definition}

\begin{theorem} [\cite{kr:agree-close}] \label{thm:mac}
For any message length $d$ and tag length $v$,
there exists an efficient family of $(\lceil  \frac{d}{v} \rceil 2^{-v})$-secure
$\mac$s with key length $\ell=2v$. In particular, this $\mac$ is $\eps$-secure when
$v = \log d + \log (1/\e)$.\\
More generally, this $\mac$ also enjoys the following security guarantee, even if Eve has partial information $E$ about its key $R$.
Let $(R, E)$ be any joint distribution.
Then, for all attackers $A_1$ and $A_2$,
\[\Pr_{(R, E)} [\mac_R(W')=T' \wedge W' \neq W \mid W = A_1(E),~(W', T') = A_2(\mac_R(W), E)] \leq \left \lceil  \frac{d}{v} \right \rceil 2^{v-\thinf(R|E)}.\]
(In the special case when $R\equiv U_{2v}$ and independent of $E$, we get the original bound.)
\end{theorem}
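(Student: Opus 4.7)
The plan is to use the standard polynomial-evaluation MAC. Parse the key as $R=(R_1,R_2)\in\mathbb{F}_{2^v}\times\mathbb{F}_{2^v}$, so the total key length is $\ell=2v$; parse the message $w\in\B^d$ as $D:=\lceil d/v\rceil$ blocks over $\mathbb{F}_{2^v}$ (padding if needed) to form a polynomial $p_w(x)\in\mathbb{F}_{2^v}[x]$ of degree at most $D$; and define $\mac_R(w)=p_w(R_1)+R_2\in\mathbb{F}_{2^v}$. This is computable in time $\poly(d,v)$ with tag length exactly $v$.

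The heart of the argument is a combinatorial counting lemma: for any two distinct messages $w\ne w'$ and any tags $t,t'\in\mathbb{F}_{2^v}$, the set $\{r=(r_1,r_2)\in\mathbb{F}_{2^v}^2:\mac_r(w)=t\text{ and }\mac_r(w')=t'\}$ has size at most $D$. Indeed, subtracting the two defining equations eliminates $r_2$ to yield $p_w(r_1)-p_{w'}(r_1)=t-t'$; since $p_w-p_{w'}$ is a nonzero polynomial over the field $\mathbb{F}_{2^v}$ of degree at most $D$, there are at most $D$ admissible values of $r_1$, each pinning down $r_2=t-p_w(r_1)$.

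Given this lemma, the uniform-key bound is immediate. For any fixed $w$, uniformity of $R_2$ makes the tag $T=\mac_R(w)$ uniform on $\mathbb{F}_{2^v}$, and exactly $2^v$ of the $2^{2v}$ keys are consistent with each outcome $T=t$. Among these $2^v$ keys, the lemma gives at most $D$ that also produce the forged tag on the message $W'\ne w$ chosen by the attacker as a deterministic function of $t$. Hence the forgery probability is at most $D\cdot 2^v/2^{2v}=\lceil d/v\rceil\cdot 2^{-v}$, and choosing $v=\log d+\log(1/\eps)$ gives $\eps$-security.

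The partial-information bound is obtained by the same decomposition. Fix an outcome $E=e$ and let $w=A_1(e)$; for each $t\in\mathbb{F}_{2^v}$ write $(w_t',t_t')=A_2(t,e)$. Then
\[ \Pr[\text{forge}\mid E=e]=\sum_{t\in\mathbb{F}_{2^v}}\Pr[R:\mac_R(w)=t\land\mac_R(w_t')=t_t'\mid E=e],\]
and by the combinatorial lemma each summand is the probability that $R$ lies in a specified set of at most $D$ keys, hence is at most $D\cdot 2^{-\hinf(R\mid E=e)}$. Summing over the $2^v$ values of $t$ and then taking expectation over $E$ yields $\lceil d/v\rceil\cdot 2^v\cdot\mathbb{E}_E[2^{-\hinf(R\mid E=e)}]=\lceil d/v\rceil\cdot 2^{v-\thinf(R\mid E)}$ directly from the definition of $\thinf$. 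There is no real obstacle in this proof; the only subtlety worth flagging is that the final bound loses a full factor of $2^v$ beyond the entropy of the key, which is unavoidable because the adversary learns one tag and can therefore rule out all but $2^v$ of the $2^{2v}$ keys before choosing the forgery — so the combinatorial lemma must be applied after conditioning on each possible first tag $t$, rather than only once.
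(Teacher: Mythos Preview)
The paper does not prove this theorem; it is quoted from \cite{kr:agree-close} as a known result and used as a black box. Your argument is correct and is precisely the standard construction and analysis underlying that citation: the polynomial-evaluation MAC over $\mathbb{F}_{2^v}$ with key $(R_1,R_2)$, the counting lemma that at most $D=\lceil d/v\rceil$ keys are simultaneously consistent with two (message,\,tag) pairs on distinct messages, and the extension to leaky keys by summing over the $2^v$ possible observed tags and then averaging over $E$ to recover $\thinf(R\mid E)$. There is nothing further to compare.
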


Finally, we will also need to use any strong seeded $(k,\e)$-extractor with optimal entropy loss $O(\log(1/\e))$. A simple extractor that achieves this is the one from the leftover hash lemma, which uses a linear-length seed. We can also use more sophisticated constructions such as those in \cite{GuruswamiUV09, DvirKSS09}, and the non-malleable extractor with short seed length \cite{CRS11} to reduce the communication complexity of the protocol.




\subsubsection{The protocol}

Now we give our privacy amplification protocol for the setting when $\thinf(X|E) = k \ge \delta n$.
We assume that the error $\e$ we seek satisfies $2^{-\Omega(\delta n)}< \e < 1/n$. In the description below, it will be convenient to introduce an ``auxiliary'' security parameter $s$. Eventually, we will set $s=\log(C/\e)+O(1)=\log(1/\e)+O(1)$, so that $O(C)/2^s<\e$, for a sufficiently large $O(C)$ constant related to the number of ``bad'' events we will need to account for. We will need the following building blocks:


\begin{itemize}
\item Let $\Cond:\zo^n \rightarrow (\zo^{n'})^C$ be a rate-$(\delta \to 0.9, 2^{-s})$-somewhere-condenser. Specifically, we will use the one from \theoremref{thm:swcondenser}, where $C=\poly(1/\delta)=O(1)$, $n'=\poly(\delta)n=\Omega(n)$ and
$2^{-s}\gg 2^{-\Omega(\delta n)}$.

\item Let $\nmExt:\bits^{n'}\times \bits^{d'}\rightarrow \bits^{m'}$ be a $(0.9n',2^{-s})$-non-malleable extractor. Specifically, we will use the one from \theoremref{thm:nmext} (which is legal since $0.9n'\gg n'/2 + O(\log n') + 8s +O(1)$)
and set the output length $m' = 4s$ (see the description of $\mac$ below for more on $m'$.)

\item Let $\Ext:\bits^{n}\times \bits^{d}\rightarrow \bits^{m}$ be a $(k',2^{-s})$-extractor with optimal entropy loss $k'-m = O(s)$. Specifically, we will set $k' = k - (7C+11)s= k - O(s)$,
which means that $m = k-O(s)$ as well. We will use the notation $\Ext_{a..b}(X;W)$, where $1\le a\le b\le m$, to denote the sub-string of extracted bits from bit position $a$ to bit position $b$. We assume the seed length $d\le n$ (e.g., by using a universal hash function, but more seed-efficient extractors will work too, reducing the communication complexity).

\item Let $\mac$ be the one-time, $2^{-s}$-secure MAC for $d$-bit messages, whose key length $\ell'=m'$ (the output length of $\nmExt$). Using the construction from \theoremref{thm:mac},
we set the tag length $v' = s + \log d \le 2s$ (since $d\le n \le 1/\e \le 2^s$), which
means that the key length $\ell' = m' = 2v' \le 4s$.

\item Let $\lrmac$ be the another one-time (``leakage-resilient'') MAC for $d$-bit messages, but with tag length $v=2v'\le 4s$ and key length $\ell = 2v \le  8s$. We will later use the second part of \theoremref{thm:mac} to argue good security of this MAC even when $v'$ bits of partial information about its key is leaked to the attacker. To not confuse the two MACs, we will use $Z$ (instead of $R$) to denote the key of $\lrmac$ and $L$ (instead of $T$) to denote the tag of $\lrmac$.
\end{itemize}
Using the above building blocks, the protocol is given in Figure~\ref{fig:AKA2}. To emphasize the presence of Eve, we will use `prime' to denote all the protocol values seen or generated by Bob; e.g., Bob picks $W_1'$, but Alice sees potentially different $W_1$, etc. Also, for any random variable $G$ used in describing our protocol, we use the notation $G=\perp$ to indicate that $G$ was never assigned any value, because the party who was supposed to assign $G$ rejected earlier. The case of final keys $R_A$ and $R_B$ becomes a special case of this convention.


\begin{figure}[htbp]
\begin{center}
\begin{small}
\begin{tabular}{l c l}
Alice:  $X$ & Eve: $E$ & ~~~~~~~~~~~~Bob: $X$ \\

\hline\\
 $(X_1,\ldots X_C) = \Cond(X)$. & \fbox{{\bf Phase $1$}} &  $(X_1,\ldots X_C) = \Cond(X)$.\\
Sample random $Y_1$.& &\\
 & $Y_1 \llrightarrow[\rule{1.5cm}{0cm}]{} Y_1'$ & \\
&& Sample random $W_1'$.\\
&&  $R_1' = \nmExt(X_1;Y_1')$.\\
&&  $T_1' = \mac_{R_1'}(W_1')$.\\
 & $(W_1,T_1) \llleftarrow[\rule{1.5cm}{0cm}]{} (W_1',T_1')$ & \\
 $R_1 = \nmExt(X_1;Y_1)$&&\\
{\bf If} $T_1 \neq \mac_{R_1}(W_1)$ {\em reject}.&&\\
 $Z_1 = \Ext_{s+1..s+\ell}(X;W_1)$.&&\\
&\fbox{{\bf Phases $2$..$C$}}&\\
{\bf For} $i=2$ {\bf to} $C$&  &{\bf For} $i=2$ {\bf to} $C$\\
~~~Sample random $Y_i$.& &~~~Sample random $W_i'$.\\
~~~$S_{i-1} = \Ext_{1..s}(X;W_{i-1})$. & & \\
 & $(S_{i-1},Y_i) \llrightarrow[\rule{1.5cm}{0cm}]{} (S_{i-1}',Y_i')$ & \\
& & ~~~{\bf If} $S_{i-1}'\neq \Ext_{1..s}(X;W_{i-1}')$ {\em reject}.\\
& & ~~~$Z_{i-1}'= \Ext_{s+1..s+\ell}(X;W_{i-1}')$.\\
& & ~~~$L_i'= \lrmac_{Z_{i-1}'}(W_i')$.\\
& & ~~~$R_i' = \nmExt(X_i;Y_i')$.\\
& & ~~~$T_i' = \mac_{R_i'}(W_i')$.\\
 & $(W_i,T_i,L_i) \llleftarrow[\rule{1.5cm}{0cm}]{} (W_i',T_i',L_i')$ & \\
~~~{\bf If} $L_i\neq \lrmac_{Z_{i-1}}(W_i)$ {\em reject}.\\
~~~$R_i = \nmExt(X_i;Y_i)$.&&\\
~~~{\bf If} $T_i \neq \mac_{R_i}(W_i)$ {\em reject}.&&\\
~~~$Z_i = \Ext_{s+1..s+\ell}(X;W_i)$.&&\\
{\bf EndFor}&&{\bf EndFor}\\
&\fbox{{\bf Phase $C+1$}}&\\
Re-assign $Z_C = \Ext_{1..m'}(X;W_C)$.&&$Z_C'= \Ext_{1..m'}(X;W_C')$\\
Sample random $W_{C+1}$.& &\\
$S_C = \mac_{Z_C}(W_{C+1})$ &&\\
 & $(S_C,W_{C+1}) \llrightarrow[\rule{1.5cm}{0cm}]{} (S_C',W_{C+1}')$ & \\
&&{\bf If} $S_C' \neq \mac_{Z_C'}(W_{C+1}')$ {\em reject}.\\
Set final $R_A = \Ext(X;W_{C+1})$.&&Set final $R_B = \Ext(X;W_{C+1}')$.\\
\hline
\end{tabular}
\end{small}
\caption{\label{fig:AKA2}
$(2C+1)$-round Privacy Amplification Protocol for $\thinf(X|E)>\delta n$.
}
\end{center}
\end{figure}


Our protocol proceeds in $C+1$ Phases. During the first $C$ Phases, we run $C$ sequential copies of the two-round protocol for the entropy-rate greater than $1/2$ case (see Figure~\ref{fig:AKA1}), but use the derived secret $X_i$ (output by the somewhere-condenser) instead of $X$ during the $i$-th run. Intuitively, since one of the values $X_i$ is expected to have entropy rate above $1/2$, we hope that the key $Z_i$ extracted in this Phase is secret and uniform. However, there are several complications we must resolve to complete this template into a secure protocol.

The first complication is that Eve might not choose to execute its run with Alice in a ``synchronous'' manner with its execution with Bob. We prevent such behavior of Eve by introducing ``liveness tests'', where after each Phase Alice has to prove that she participated {\em during} that Phase. Such tests were implicit in the original paper of Renner and Wolf~\cite{RW03}, and made explicit by Khanakurthi and Reyzin~\cite{kr:agree-close}. Each liveness test (except for the last one in Phase $C+1$, to be discussed) consists of Bob sending Alice a seed $W_i'$ for the extractor $\Ext$ (which is anyway sent during the $i$-th Phase), and Alice responding with the first $s$ bits of the extracted output. Intuitively, although Eve may choose to maul the extracted seed (which might be possible for all Phases, where the entropy rate of $X_i$ is below $1/2$), Eve cannot predict the correct output without asking Alice {\em something}. And since Bob does uses a new liveness test between every two Phases, this effectively forces Eve to follow a natural ``synchronous'' interleaving between the left and the right executions.

The second complication comes from the fact that after a ``good'' (rate above $1/2$) Phase $i$ is completed, the remaining phases might use low-rate sources $X_{i+1},\ldots,X_C$. Hence, one needs a mechanism to make sure that once a good key is extracted in some {\em a-priori unknown} phase, good keys will be extracted in future phases as well, even if the remaining derived sources $X_i$ have low entropy-rate. This is done by using a second message authentication code $\lrmac$, keyed by a value $Z_{i-1}'$ extracted by Bob in the previous Phase $(i-1)$, to authenticated the seed $W_i'$ sent in Phase $i$. The only subtlety is that Bob still sends the original MAC of $W_i'$, and this MAC might be correlated with the previous extracted key $Z_{i-1}$ (especially if the Phase $i$ uses ``bad-rate'' $X_i$). Luckily, by using the ``leakage-resilient'' property of our second MAC (stated in \theoremref{thm:mac}), and setting the parameters accordingly, we can ensure that $Z_{i-1}'$ has enough entropy to withstand the ``leakage'' of the original MAC of $W_i'$.

The template above already ensures the {\em robustness} of the protocol, if we were to extract the key $Z_C$ (or $Z_C'$ for Bob) derived at the end of Phase $C$. Unfortunately, it does not necessarily ensure that Alice outputs a {\em random} key (i.e., it does not guarantee the extraction property for Alice). Specifically, by making Alice's execution run faster than Bob's execution, it might be possible for Eve to make Alice successfully accept a non-random seed $W_C$, resulting in non-random key $Z_C$.
Intuitively, since all the $X_i$'s except for one might have low entropy rate, our only hope to argue security should come from the non-malleability on $\nmExt$ in the ``good'' Phase $i$. However, since Bob is behind (say, at Phase $j<i$) Alice during the good Phase $i$,
Bob will use a wrong source $X_j$ for the non-malleable extractor, and we cannot use the non-malleability of $\nmExt$ to argue that Eve cannot fool Alice into accepting a wrong seed $W_i$ (and, then, wrong $W_{i+1},\ldots,W_C$). Of course, in this case we know Bob will eventually reject, since Eve won't be able to answer the remaining liveness tests. However, Alice's key $Z_C$ is still non-random, violating extraction.

This is the reason for introducing the last Phase $C+1$. During this phase Alice (rather than Bob) picks the last seed $W_{C+1}$ and uses it to extract her the final key $R_A$. Therefore, $R_A$ is now guaranteed to be random. However, now we need to show how to preserve robustness and Bob's extraction. This is done by Alice sending the MAC of $W_{C+1}$ using they key $Z_C$ she extracted during the previous round. (We call this MAC $S_C$ rather than $T_{C+1}$, since it also serves as a liveness test for Alice during Phase $(C+1)$.) From the previous discussion, we know that, with high probability, (a) either $Z_C$ is non-random from Eve's perspective, but then Bob will almost certainly reject (ensuring robustness and preserving Bob's extraction); or (b) $Z_C=Z_C'$ is random and secret from Eve, in which case the standard MAC security suffices to ensure both robustness and Bob's extraction.

We detail the formal proof following the above intuition in the next section, which also establishes the desired parameters promised by Theorem~\ref{thm:privacy}.

\subsubsection{Security Proof of Our Protocol (Proof of Theorem~\ref{thm:privacy})}

We start
by noticing that our protocol takes $2C+1 = \poly(1/\delta) = O(1)$ rounds and achieves entropy loss $k-m = O(Cs) = O(\poly(1/\delta)\log(1/\e))$, as claimed. Also, the protocol obviously satisfies the correctness requirement.

We will also assume that the side information $E$ is empty (or fixed to a constant), since by Lemma~\ref{entropies}, with probability $1-2^{-s}$, $\hinf(X|E=e)\ge \delta n - s$, which will not affect any of our bounds. Before proving robustness and extraction properties of our protocol, we start with the following simple observation.

\BL\label{lem:counting}
Let $E'$ be Eve's view at the end of her attack (without the keys $R_A$ and $R_B$ used in the post-application robustness experiment). Then, for any deterministic functions $f$ and $g$, we have
$$\thinf(f(X)~|~g(E')) \ge \hinf(f(X)) - (7C-3)s$$
In particular, recalling that $k' = \hinf(X) - (7C+11)s$, we have $\thinf(X|g(E')) \ge k'+14s$.
\EL
\begin{proof}
Clearly, if it sufficient to prove the claim for $g$ being identity, as it gives the predictor
the most information to guess $f(X)$. Also notice that, at best, if neither party rejects, Eve's view $E'=(\vec{Y},\vec{S},\vec{W'},\vec{T'},\vec{L'},W_{C+1})$, where $\vec{Y} = \{Y_1,\ldots,Y_C\}$,
$\vec{S} = \{S_1,\ldots,S_C\}$, $\vec{W'} = \{W_1',\ldots,W_C'\}$, $\vec{T'} = \{T_1',\ldots,T_C'\}$ and $\vec{L'} = \{L_2',\ldots,L_C'\}$. Since $\vec{Y}$, $\vec{W'}$ and $W_{C+1}$ are independent of $X$ (and, thus, $f(X)$), using Lemma~\ref{lem:amentropy} and recalling $|S_i|=s$ for $i< C$, $|S_C|=|T_i'|=v' \le 2s$, $|L_i'|=v\le 4s$, we have
\begin{eqnarray*}
\thinf(f(X)|E') &\ge& \thinf(f(X)|(\vec{Y},\vec{W'},W_{C+1})) - |\vec{S}| - |\vec{T'}| - |\vec{L'}|\\
&=& \hinf(f(X)) - (C-1)s - v' - Cv' - (C-1)v \\
&\ge& \hinf(f(X)) - (C-1)s - 2(C+1)s - (C-1)4s\\
&=& \hinf(f(X)) - (7C-3)s
\end{eqnarray*}
\end{proof}

Next, we will argue the extraction property for Alice.

\BL\label{lem:robustA}
$$\Delta((R_A,E'),(\purify(R_A),E'))\le 2^{-s+1}$$
\EL
\begin{proof}
Since $\purify(R_A) = R_A$ when Alice rejects (i.e., $R_A = \perp$), it is sufficient to show that Alice's key is close to uniform conditioned on Alice not rejecting, i.e.
\begin{equation}\label{eq:extA}
\Delta((\Ext(X;W_{C+1}),E'),(U_m,E'))\le 2^{-s+1}
\end{equation}

By Lemma~\ref{lem:counting}, $\thinf(X|E') \ge k'+14s$.
Using Lemma~\ref{entropies}, we get that $\Pr_{e'\leftarrow E'}[\hinf(X|E'=e')\ge k'] \ge 1-2^{-s}$.
Since $\Ext$ is $(k',2^{-s})$-extractor,
Equation~(\ref{eq:extA}) immediately follows the triangle inequality and the security of the extractor, by conditioning on whether or not $\hinf(X|E'=e')\ge k'$.
\end{proof}

Next, we notice that in order to violate either robustness of Bob's extraction, Eve must make Bob accept (i.e., $R_B\neq \perp$). Therefore, we start by examining how Eve might cause Bob to accept. Notice, since Alice sends $C+1$ messages, including the first and the last message, Eve can make $C+1$ calls to Alice, which we call $Alice_1,\ldots,Alice_{C+1}$, where, for each call $Alice_i$,
$1\le i\le C+1$, Eve gets back the message sent by Alice during Phase $i$. Additionally, Alice also computes her key $R_A$ in response to $Alice_{C+1}$ (and gives $R_A$ to Eve, in addition to $S_C$ and $W_{C+1}$, for post-application robustness). Similarly, Eve can also make $C+1$ calls to Bob, denoted $Bob_1,\ldots,Bob_{C+1}$, where each call $Bob_i$ expects as input the message that
Alice supposedly sent to Bob in Phase $i$. When $i\le C$, Bob responds to such a message with his own message in Phase $i$. When $i=C+1$, Bob computes his key $R_B$ (and gives $R_B$ to Eve for post-application robustness).
Clearly, the $(C+1)$ calls to Alice must be made in order, and the same the $(C+1)$ calls to Bob. However, a malicious Eve might attempt to interleave the calls in some adversarial manner to make Bob accept. We say that Eve is {\em synchronous} if he makes his oracle calls in the (``synchronous'') order $Alice_1,Bob_1,Alice_2,Bob_2,\ldots,Alice_{C+1},Bob_{C+1}$. We notice that, without loss of generality, Eve always starts by making the $Alice_1()$ call, since this call has no inputs Eve needs to provide. Namely, Eve must as well find out the values $Y_1$ first, and, if she wants, delay using this value until later. With this convention in mind, we show that Eve {\em must be synchronous in order to make Bob accept}.

\BL\label{lem:synchrony}
\begin{equation}\label{eq:robust}
\Pr[R_B\neq \perp \land \mbox{\rm{~Eve~is~not~synchronous}}]\le \frac{3C}{2^s}
\end{equation}
\EL
\begin{proof}
As we said, we assume Eve always makes the call $Alice_1$ first. After that, Eve makes $C+1$ calls to Bob and $C$ calls to Alice in some order. We claim that for every $1\le i\le C$, Eve must make  at least one call to some $Alice_j$ in between two successive calls $Bob_i$ and $Bob_{i+1}$. If we show this (with total failure probability from Equation~(\ref{eq:robust})), Eve must be synchronous, since the synchronous scheduling is the only scheduling that starts with $Alice_1$ and has a fresh call to Alice between $Bob_1$ and $Bob_2$, $Bob_2$ and $Bob_3$, $\ldots$, $Bob_{C}$ and $Bob_{C+1}$.

Given $1\le i\le C$, let $F_i$ denote the event that Eve's scheduling of calls made two successive calls $Bob_i$ and $Bob_{i+1}$ without a fresh call to some $Alice_j$, and Bob does not reject after the call $Bob_{i+1}$. We claim that $\Pr[F_i]\le 3/2^s$. The bound from Equation~(\ref{eq:robust}) then follows by simply taking the union bound over all $i$. We consider two cases:

{\bf Case 1: $1\le i<C$}. In this case, after the call $Bob_i(\cdot,\cdot)$ is made, Bob picks a
fresh seed $W_i'$, and returns it as part of the output. By assumption, Eve immediately makes a call $Bob_{i+1}(S_i',\cdot)$, without any intermediate calls to Alice, and Bob rejects if $S_i'\neq \Ext_{1\ldots s}(X;W_i')$. Thus, to establish our claim it is sufficient to show that
$\Pr[S_i'\neq \Ext_{1\ldots s}(X;W_i')]\le 3/2^s$. Intuitively, the bound on $\Pr[F_i]$ now follows from the fact that $\Ext$ is a good (strong) $(k',2^{-s})$-extractor, since, conditioned on Eve's information so far,
the $s$-bit value $\Ext_{1\ldots s}(X;W_i')$ is $2^{-s}$-close to random, and, hence, cannot be predicted with probability better that $2^{-s}+2^{-s}$ (the third $2^{-s}$ is due to Lemma~\ref{entropies}, since our extractor is assumed to be worst case, and is not needed for universal hash function extractors~\cite{dors}).

A bit more formally, let $E_i$ be Eve's view before
the call to $Bob_i$ is made, and $E_i' = (E_i,W_i',T_i',L_i')$ be Eve's view after
the call to $Bob_i$ is made. We notice that $E_i'$ is a deterministic function of
$E_i^* = (E_i,Z_{i-1}',R_i')$ and $W_i'$, since $L_i'=\lrmac_{Z_{i-1}'}(W_i')$ and
$T_i' = \mac_{R_i'}(W_i)$. Moreover, $W_i'$ is freshly chosen even conditioned on $E_i^*$.
Thus, $\Pr[F_i]\le \Pr[Eve(E_i^*,W_i') = \Ext_{1..s}(X;W_i')]$, where $W_i'$ is independent of $(X,E_i^*)$. We also note that $\thinf(X|E_i)) \ge k'+14s$, by Lemma~\ref{lem:counting}, since $E_i$ is a function of $E'$. Thus, $\thinf(X|E_i^*)\ge \thinf(X|E_i) - |Z_{i-1}'| - |R_i'| \ge k'+14s - 4s-8s = k'+2s$. Using  Lemma~\ref{entropies},
$\Pr_{e_i^*}[\hinf(X|E_i^*=e_i^*)\ge k']\ge 1-2^{-s}$, and the rest follows from the fact that in this case $(W_i',\Ext_{1..s}(X;W_i'))$ is $2^{-s}$-close to $(W_i',U_s)$, as mentioned earlier.

{\bf Case 2: $i=C$}. In this case, after the call $Bob_C(\cdot,\cdot)$ is made, Bob picks a
fresh seed $W_C'$, and returns it as part of the output. By assumption, Eve immediately makes a call $Bob_{i+1}(S_C',W_{C+1}')$, without any intermediate calls to Alice, and Bob rejects if
$S_C'\neq \mac_{Z_C'}(W_{C+1}')$, where $Z_C' = \Ext_{1\ldots m'}(X;W_i')$.
Thus, to establish our claim it is sufficient to show that
$\Pr[S_C'\neq \mac_{Z_C'}(W_{C+1}')]\le 3/2^s$. Completely similar to the previous case, we can argue that the value $Z_C'$ used by Bob is $2^{1-s}$-close to $U_{m'}$ conditioned on Eve's view so far. Moreover, the $2^{-s}$-security of $\mac$ ensures that, when the key $Z_C'$ is truly uniform, Eve cannot successfully forge a valid tag $\mac_{Z_C'}(W_{C+1}')$ of any (even adversarially chosen) message $W_{C+1}'$ with probability greater than $2^{-s}$, completing the proof of this case as well.
\end{proof}

Therefore, from now on {\em we assume that Eve is indeed synchronous}. Moreover, since Eve must make Bob accept, we assume Eve finishes the both left and right execution (with the last call to $Bob_{C+1}$, hoping that Bob will accept). Also, by \theoremref{thm:swcondenser}, we have that $(X_1, \cdots, X_C)$ is $2^{-\Omega(\delta n)}$-close to a somewhere rate-$0.9$ source. Thus, we will ignore the error and think of $(X_1, \cdots, X_C)$ as indeed being a somewhere rate-$0.9$ source, as it only adds $2^{-\Omega(\delta n)}\ll 2^{-s}$ to the total probability of error. Also, it is sufficient to show robustness and extraction for Bob properties assuming that  $(X_1, \cdots, X_C)$ is an {\em elementary} somewhere rate-$0.9$ source, since $(X_1, \cdots, X_C)$ is a convex combination of elementary somewhere rate-$0.9$ sources. Hence, from now one we assume that some ``good'' index $1\le g\le C$ satisfies $\hinf(X_g) \ge 0.9n'$. We stress that this index $g$ is not known to Alice and Bob, but could be known to Eve. We start by showing that, with high probability, Eve must forward a correct seed $W_g=W_g'$ to Alice in the ``good'' Phase $g$.

\BL \label{lem:good-round}
Assuming Eve is synchronous,
\begin{equation}\label{eq:good-round}
\Pr[R_B\neq \perp \land~ W_g\neq W_g']\le \frac{3}{2^s}
\end{equation}
\EL
\begin{proof}
Let $E_{g-1}'$
be Eve's view before the call to $Alice_g$.
Note that $X_g$ is a deterministic function of $X$, and $(E_{g-1}',S_{g-1},L_g')$ is a deterministic function of Eve's transcript $E'$. Thus, by Lemma~\ref{lem:counting},
\begin{eqnarray*}
\thinf(X_g|(E_{g-1}',S_{g-1},L_g'))&\ge& \hinf(X_g) - (7C-3)s\\
&\ge& 0.9n' - (7C-3)s\\
&=& (n'/2 +O(\log n')+8s+O(1)) + s - (0.4n'-O(Cs+\log n))\\
&\ge& (n'/2 +O(\log n')+8s+O(1)) + s
\end{eqnarray*}
where the last inequality follows since $n' = \poly(1/\delta) n \gg O(Cs+\log n))$.
By \lemmaref{entropies}, with probability $1-2^{-s}$ over the fixings of
$E_{g-1}',S_{g-1},L_g'$,
the min-entropy of $X_g$ conditioned on these fixings is at least $n'/2+O(\log n')+8s+O(1)$.
Notice also that the seed $Y_g$ is independent of $E_{g-1}',S_{g-1},L_g'$.
Moreover, for the argument in this lemma, we will ``prematurely'' give Eve the value $L_g'$ already after the call to $Alice_{g}$ (instead of waiting to get it from the call to $Bob_g$). Let us now summarize the resulting task of Eve in order to make $W_g\neq W_g'$, and argue that Eve is unlikely to succeed.

After the call to $Alice_g$, with high probability the min-entropy of $X_g$ conditioned on Eve's view is greater than $n'/2+O(\log n')+8s+O(1)$, so that we can apply the non-malleability guarantees of $\nmExt$ given by \theoremref{thm:nmext}. Alice then picks a random seed $Y_g$ for $\nmExt$ and gives it to Eve. (Synchronous) Eve then forwards some related seed $Y_g'$ to $Bob_g$ (and another value $S_{g-1}'$ that we ignore here), and learns some message $W_g'$ and the tag $T_g'$ of $W_g'$, under key $R_g' = \nmExt(X_g;Y_g')$ (recall, we assume Eve already knows $L_g'$ from before).
To win the game, Eve must produce a value $W_g\neq W_g'$ and a valid tag $T_g$ of $W_g$ under the original key $R_g=\nmExt(X_g;Y_g)$.

We consider two cases. First, if Eve sets $Y_g'=Y_g$, then $R_g=R'_g$ is $2^{-s}$-close to uniform by \theoremref{thm:nmext}. Now, if $R_g$ was truly uniform, by the one-time unforgeability of $\mac$, the probability that Eve can produce a valid tag $T_g$ of a new message $W_g\neq W_g'$ is at most $2^{-s}$. Hence, Eve cannot succeed with probability more that $2^{-s+1}$ even with $R_g$ which is $2^{-s}$-close to uniform, implying the bound stated in the lemma (since we also lost $2^{-s}$ by using \lemmaref{entropies} at the beginning).

On the other hand, if Eve makes $Y_g'\neq Y_g$, \theoremref{thm:nmext} implies that
$\Delta((R_g, R_g'), (U_{m'}, R_g'))\le 2^{-s}$. Thus, the tag $T_g'$ under $R_g'$ is almost completely useless in predicting the tag of $W_g$ under (nearly random) $R_g$. Therefore, by $2^{-s}$ security of $\mac$, once again the probability that Eve can successfully change $W_g'$ without being detected is at most $2^{-s+1}$ (giving again the final bound $3/2^s$).
\end{proof}

Now, we want to show that, once Eve forwards correct $W_g=W_g'$ to Alice in Phase $g$, Eve must forward correct seeds $W_i=W_i'$ in all future phases $i=g+1,\ldots,C$. We start by the following observation, which states that the derived keys $Z_{i-1}'$ used by Bob in $\lrmac$ look random to Eve {\em whenever Eve forwards a correct key $W_{i-1}=W_{i-1}'$} to Alice.

\BL\label{lem:correctW}
Assume Eve is synchronous, $2\le i\le C$, and Eve forwards a correct value $W_{i-1}=W_{i-1}'$ to Alice during her call to $Alice_{i}$. Also, let $E_i$ be Eve's view after the call to $Alice_i(W_{i-1},\cdot,\cdot)$. Then
\begin{equation}\label{eq:randomZ}
\Delta((Z_{i-1}',E_i),(U_{\ell},E_i))\le \frac{3}{2^s}
\end{equation}
\EL
\begin{proof}
Notice that $E_i = (E_{i-1}, W_{i-1}', T_{i-1}', L_{i-1}', S_{i-1}, Y_i)$, where $E_{i-1}$ is Eve's view after the call to $Alice_{i-1}$. For convenience, we replace the two tags $T_{i-1}', L_{i-1}'$ of $W_{i-1}'$ by the corresponding MAC keys $R_{i-1}', Z_{i-2}'$, respectively, since this gives Eve only more information. Also, since $W_{i-1}=W_{i-1}'$, we know that the value $S_{i-1} = \Ext_{1..s}(X;W_{i-1}) = \Ext_{1..s}(X;W_{i-1}')$. Recalling that $Z_{i-1}' = \Ext_{s+1..s+\ell}(X;W_{i-1}')$, and denoting ``side information'' by $E_i^*=(E_{i-1}, R_{i-1}', Z_{i-2}',Y_i)$, it is enough to argue
\begin{equation}\label{eq:random1}
\Delta((E_i^*, W_{i-1}', \Ext_{1..s}(X;W_{i-1}'), \Ext_{s+1..s+\ell}(X;W_{i-1}'))~,~
       (E_i^*, W_{i-1}', \Ext_{1..s}(X;W_{i-1}'), U_{\ell}))\le \frac{3}{2^s}
\end{equation}
where we notice that $E_i^*$ is {\em independent} of the choice of random $W_{i-1}'$.
In turn, Equation~(\ref{eq:random1}) follows from the fact that $\Ext$ is $(k',2^{-s})$-extractor provided we can show that $\thinf(X|E_i^*)\ge k+s$. Indeed, the first error term $2^{-s}$ comes from Lemma~\ref{entropies} to argue that $\Pr_{e_i^*}[\hinf(X|E_i^*=e_i^*)\ge k]\ge 1-2^{-s}$, and the other two error terms follow from the triangle inequality and the security of the extractor
(first time applies on the first $s$ extracted bits, and then on all $s+\ell$ extracted bits).

So we show that $\thinf(X|E_i^*)\ge k+s$.
\begin{eqnarray*}
\thinf(X|E_i^*) &=& \thinf(X|E_{i-1}, R_{i-1}', Z_{i-2}',Y_i)\\
&\ge& \thinf(X|E_{i-1},Y_i) - |R_{i-1}'| - |Z_{i-2}'|\\
&=& \thinf(X|E_{i-1}) - m' - \ell\\
&\ge& k' + 14s - 4s - 8s\\
&=& k'+2s
\end{eqnarray*}
where the first inequality used Lemma~\ref{lem:amentropy}, the second equality used the fact that $Y_i$ is independent of  $(X,E_{i-1})$, and the second inequality used Lemma~\ref{lem:counting}, since $E_{i-1}$ is deterministic function of $E'$.
\end{proof}

Next, we use Lemma~\ref{lem:good-round} and Lemma~\ref{lem:correctW} to show that, with high probability, Alice and Bob must agree on the same key $Z_C=Z_C'$ when they reach the last Phase $(C+1)$.
\BL \label{lem:final-Z}
Assuming Eve is synchronous,
\begin{equation}\label{eq:last-round}
\Pr[R_B\neq \perp \land~ Z_C\neq Z_C']\le \frac{4C}{2^s}
\end{equation}
\EL
\begin{proof}
Since $Z_C=\Ext_{1\ldots m'}(X;W_C)$ and $Z_C'=\Ext_{1\ldots m'}(X;W_C')$, we get
\begin{eqnarray*}
\Pr[R_B\neq \perp \land~ Z_C\neq Z_C']&\le& \Pr[R_B\neq \perp \land W_C\neq W_C']\\
&\le& \Pr[R_B\neq \perp \land ~W_g\neq W_g'] + \sum_{i=g+1}^C
\Pr[R_B\neq \perp \land~ W_{i-1}=W_{i-1}' \land~ W_i\neq W_i']\\
&\le& \frac{3}{2^s} + (C-1)\cdot \max_{i>g} Pr[R_B\neq \perp \land~ W_{i-1}=W_{i-1}' \land~ W_i\neq W_i']
\end{eqnarray*}
where the second inequality states that in order for $W_C\neq W_C'$, either we must already have $W_g\neq W_g'$ (which, by Lemma~\ref{lem:good-round}, happens with probability at most $3/2^s$), or there must be some initial Phase $i>g$ where $W_{i-1}=W_{i-1}'$ still, but $W_i\neq W_i'$.
Thus, to establish Equation~(\ref{eq:last-round}), it suffices to show that, for any Phase $g<i\le C$,
\begin{equation}\label{eq:induction}
Pr[R_B\neq \perp\land ~ W_{i-1}=W_{i-1}' \land~ W_i\neq W_i']\le \frac{4}{2^s}
\end{equation}

Intuitively, this property follows from the unforgeability of $\lrmac$, since Eve must be able to forge a valid tag $L_i$ of $W_i\neq W_i'$, given a valid tag of $W_i'$ (under the same $Z_{i-1}=Z_{i-1}'$ since $W_{i-1}=W_{i-1}'$). The subtlety comes from the fact that Eve also learns the $v'$-bit value $T_i' = \mac_{R_i'}(W_i')$, which could conceivably be correlated with the key $Z_{i-1}'$ for $\lrmac$. Luckily, since the tag length $v$ of $\lrmac$ is twice as large as $v'$, \theoremref{thm:mac} states that $\lrmac$ is still unforgeable despite this potential ``key leakage''.

More formally, if Eve forwards a correct value $W_{i-1}=W_{i-1}'$, both Alice and Bob use the same key $Z_{i-1}'=Z_{i-1}=\Ext_{s+1..s+\ell}(X;W_{i-1}')$ to $\lrmac$ during Phase $i$.
Moreover, by Lemma~\ref{lem:correctW}, we know that this key $Z_{i-1}$ looks random to Eve right before the call to $Bob_i$: $\Delta((Z_{i-1}',E_i),(U_{\ell},E_i))\le \frac{3}{2^s}$, where
$E_i$ is Eve's view after the call to $Alice_i(W_{i-1},\cdot,\cdot)$.
After the call to $Bob_i$, Eve learns the tag $L_i'$ of $W_i'$, and also a $v'$-bit value $T'$, which, for all we know, might be correlated with the key $Z_{i-1}'$. Therefore, to argue the bound in Equation~(\ref{eq:induction}), it is sufficient to argue that Eve can succeed with probability at most $2^{-s}$ in the following ``truncated'' experiment. After the call to $Alice_i$, the actual key $Z_{i-1}'$ is replaced by uniform $Z_{i-1}^* \leftarrow U_{\ell}$. Then a random message $W_i'$ is chosen, its tag $L_i'$ is given to Eve, and Eve is also allowed to obtain
arbitrary $v'$ bits of information about $Z_{i-1}^*$. Eve succeeds if she can produce a valid tag $L_i$ (under $Z_{i-1}^*$) of a different message $W_i\neq W_i'$. This is precisely the precondition of the second part of \theoremref{thm:mac}, where $\thinf(Z_{i-1}^*|E)\ge \ell - v' = 2v - v/2 = 3v/2$. Hence, Eve's probability of success is at most $d2^{v-3v/2} = d2^{-v/2} =d2^{-v'} \le 2^{-s}$.
\end{proof}

We need one more observation before we can finally argue Bob's extraction and robustness.
Namely, at the end of Phase $C$, (synchronous) Eve has almost no information about the authentication key $Z_C'$ used by the Bob (and Alice, by Lemma~\ref{lem:final-Z}) in the
final Phase $C+1$.

\BL\label{lem:ZC}
Assume Eve is synchronous, and let $E_C'$ be Eve's view after the call to $Bob_C$. Then
\begin{equation}\label{eq:randomZC}
\Delta((Z_C',E_C' \mid R_B\neq \perp),(U_{m'},E_C'\mid R_B\neq \perp))\le \frac{2}{2^s}
\end{equation}
Additionally, $\thinf(X|(E_C',Z_C'))\ge k'+10s$.
\EL
\begin{proof}
The proof is similar to, but simpler than, the proof of Lemma~\ref{lem:correctW}. We notice that $E_C' = (E_C, W_C', T_{C}', L_{C}')$, where $E_C$ is Eve's view after the call to $Alice_C$. For convenience, we replace the two tags $T_{C}', L_{C}'$ of $W_C'$ by the corresponding MAC keys $R_{C}', Z_{C-1}'$, respectively, since this gives Eve only more information.
Recalling that $Z_{C}' = \Ext_{1..m'}(X;W_{C}')$, and denoting ``side information'' by $E_C^*=(E_{C}, R_{C}', Z_{C-1}')$, it is enough to argue
\begin{equation}\label{eq:random2}
\Delta((E_C^*, W_{C}', \Ext_{1..m'}(X;W_{C}'))~,~
       (E_C^*, W_{C}', U_{m'}))\le \frac{2}{2^s}
\end{equation}
where we notice that $E_C^*$ is {\em independent} of the choice of random $W_C'$.
In turn, Equation~(\ref{eq:random2}) follows from the fact that $\Ext$ is $(k',2^{-s})$-extractor provided we can show that $\thinf(X|E_C^*)\ge k+s$, where the extra error term $2^{-s}$ comes from Lemma~\ref{entropies} to argue that $\Pr_{e_C^*}[\hinf(X|E_C^*=e_C^*)\ge k]\ge 1-2^{-s}$.

So we show that $\thinf(X|E_C^*)\ge k+s$.
\begin{eqnarray*}
\thinf(X|E_C^*) &=& \thinf(X|E_{C}, R_{C}', Z_{C-1}')\\
&\ge& \thinf(X|E_{C}) - |R_{C}'| - |Z_{C-2}'|\\
&=& \thinf(X|E_C) - m' - \ell\\
&\ge& k' + 14s - 4s - 8s\\
&=& k'+2s
\end{eqnarray*}
where the first inequality used Lemma~\ref{lem:amentropy}, and the second inequality used Lemma~\ref{lem:counting}, since $E_{C}$ is deterministic function of $E'$.

The final claim $\thinf(X|(E_C',Z_C'))\ge k'+10s$ follows from Lemma~\ref{lem:amentropy} and fact that $\thinf(X|E_C')\ge k'+14s$ (Lemma~\ref{lem:counting}) and $|Z_C'|=m'\le 4s$.
\end{proof}

Lemma~\ref{lem:final-Z} and Lemma~\ref{lem:ZC} imply that, in order for the synchronous Eve to have a non-trivial chance to make Bob accept, at the end of Phase $C$ Alice and Bob must agree on a key $Z_C=Z_C'$ which looks random to Eve. Moreover, $X$ still has a lot of entropy given $Z_C'$ and Eve's view so far. Thus, to show both (post-application) robustness and extraction for Bob, it is sufficient to show these properties for a very simply one-round key agreement protocol, which emulates the final Phase $(C+1)$ of our protocol with Alice and Bob sharing a key $Z_C=Z_C'$ which is assumed to be random and independent from Eve's view so far. We start with post-application robustness.

{\bf Post-Application Robustness:} To cause Bob output a different key than Alice in Phase $(C+1)$, Eve must modify Alice seed $W_{C+1}$ to $W_{C+1}'\neq W_{C+1}$, and then forge a valid tag $S_C'$ of $W_{C+1}'$ under the shared key $Z_C=Z_C'$. For pre-application robustness, the unforgeability of $\mac$ immediately implies that Eve's probability of success is at most $2^{-s}$. However, in the post-application robustness experiment, Eve is additionally given Alice's final key
$R_A=\Ext(X;W_{C+1})$. Luckily, since $X$ has more than $k'+s$ bits of min-entropy {\em even conditioned of the MAC key $Z_C$}, security of the extractor implies that that the joint distribution of $Z_C$ and $R_A$ looks like a pair of independent random strings. In particular, Eve still cannot change the value of the seed $W_{C+1}$ in Phase $(C+1)$, despite being additionally given Alice's key $R_A$, since that key looks random and independent of the MAC key $Z_C=Z_C'$.

{\bf Extraction for Bob:} We just argued (pre-application) robustness of our protocol, which --- for synchronous Eve --- means that if Bob does not reject, then, with high probability, he outputs the same key $R_B = \Ext(X;W_{C+1}')$ as Alice's key $R_A = \Ext(X;W_{C+1})$. Thus, Bob's extraction is implied by Alice's extraction, which was already argued in Lemma~\ref{lem:robustA}. Alternatively, Alice's extraction can be seen directly, as she chooses a fresh seed $W_{C+1}$ and
$\thinf(X|E_C',Z_C)\ge k'+10s$.

This concludes the proof of Theorem~\ref{thm:privacy}.

\section{Future Directions}

There are several natural open questions.
First, can we give a non-malleable extractor which outputs even one bit for entropy rate below $1/2$?
As far as we know, it is possible that our extractor works for lower min-entropy
(although the Cohen-Raz-Segev extractor \cite{CRS11} in full generality does not).
Second, can we achieve optimal round complexity (2 rounds) and entropy loss ($O(s)$) for weak secrets with arbitrarily linear entropy $\delta n$? In principle, this problem would be solved if an efficient non-malleable extractor is constructed for entropy rate below $1/2$.
Finally, can we generalize our techniques to sublinear entropy?

\section*{Acknowledgments}

We are grateful to Gil Segev for finding an error in an earlier version of this paper, and to Salil Vadhan
for a helpful discussion. We would also like to thank the anonymous referees for useful comments.

\appendix

\bibliographystyle{alpha}

\bibliography{refs}

\section{Generalizing the Non-Malleable Extractor}\label{gen}

We now generalize our earlier results to show that we get a non-malleable extractor even if $M$ does not divide $q-1$.
We still use the same function $\nm(x,y) = h(\dlog_g(x+y))$, with $h:\dbZ_{q-1} \to \Z_M$ given by $h(x) = x\mod M$.

\begin{theorem}
\label{arbitraryM}
There exists a constant $c>0$ such that for any $n$, $k>n/2 + \log n + c$, and $m \leq k/2 - n/4 -c$, if we let $h$ be as above for $M=2^m$, then the following holds.
The function $\nm(x,y) = h(\dlog_g(x+y))$ is a $(k,\eps)$-non-malleable extractor for $\eps = O(n2^{m+n/4-k/2})$.
\end{theorem}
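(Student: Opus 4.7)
Plan: The proof will follow the structure of \theoremref{basic}, the only difference being that the function $\chi(z):=\phi(h(\dlog_g z))$ appearing there is no longer a multiplicative character of $\F_q$ when $M\nmid(q-1)$, since $h(x)=x\bmod M$ fails to be a homomorphism $\dbZ_{q-1}\to\dbZ_M$ in that case. I would remove this obstruction by representing $\chi$ as a short linear combination of genuine multiplicative characters via Fourier analysis on $\dbZ_{q-1}$, and then apply \theoremref{charsum-uniform} term by term.

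Concretely, for a nontrivial character $\phi$ of $\dbZ_M$ written as $\phi(u)=e^{2\pi iau/M}$, I would first expand $f_\phi:\dbZ_{q-1}\to\dbC$, $f_\phi(x)=\phi(x\bmod M)$, in additive characters of $\dbZ_{q-1}$:
\[
f_\phi(x)=\sum_{b=0}^{q-2}c_{\phi,b}\,e^{2\pi ibx/(q-1)}.
\]
Composing with $\dlog_g$ turns each additive character of $\dbZ_{q-1}$ into a \emph{multiplicative} character of $\F_q$: letting $\chi_b$ denote the multiplicative character with $\chi_b(g^j)=e^{2\pi ibj/(q-1)}$, we obtain $\phi(\nm(x,y))=\sum_b c_{\phi,b}\chi_b(x+y)$. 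A direct evaluation of the geometric sum defining $c_{\phi,b}$, together with the observation that the phases $\theta_b=a/M-b/(q-1)$ are equispaced in $\dbR/\dbZ$ with spacing $1/(q-1)$, yields $\sum_b|c_{\phi,b}|=O(\log q)$ by a standard Dirichlet-kernel estimate. The critical additional bound is $|c_{\phi,0}|\le M/(q-1)$: since the fibers $h^{-1}(a)\cap\{0,\ldots,q-2\}$ differ in size by at most one and $\phi$ is nontrivial (so $\sum_u\phi(u)=0$), only a residue of at most $M$ elements contributes to $c_{\phi,0}$.

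With this decomposition in hand, the proof of \theoremref{basic} goes through with minor modifications. For each $y$ I would apply \lemmaref{special-case} with $G=\dbZ_M$, which calls for a bound on $|\expect_X[\phi(\nm(X,y))\phi'(\nm(X,\adv(y)))]|$ for nontrivial $\phi$ and arbitrary $\phi'$. Expanding both factors and applying the triangle inequality reduces this to a double sum $\sum_{b,b'}c_{\phi,b}c_{\phi',b'}\cdot\expect_X[\chi_b(X+y)\chi_{b'}(X+\adv(y))]$ of shifted multiplicative-character sums. For terms with $b\ne 0$ the ``first'' character is nontrivial, so \theoremref{charsum-uniform} (which permits $b'=0$) bounds the $y$-averaged inner sum by $O(q^{1/4}|\calS|^{-1/2})$; summing against the coefficients gives the main contribution, of order $(\log q)^2 q^{1/4}|\calS|^{-1/2}$, which after the $|G|=M$ loss of \lemmaref{special-case} accounts for the stated error.

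The main obstacle is controlling the boundary terms with $b=0$: here $\chi_0$ is the trivial character, so \theoremref{charsum-uniform} does not apply, and in fact the inner sum $\sum_y|\sum_s\chi_{b'}(s+\adv(y))|$ can be as large as $q|\calS|$ (for example if $\adv$ is constant), exhibiting no cancellation whatsoever. These terms are salvaged exclusively by the smallness $|c_{\phi,0}|\le M/(q-1)$ established above: the total boundary contribution is bounded by $|c_{\phi,0}|\sum_{b'}|c_{\phi',b'}|=O((M/q)\log q)$, which after the $M$-multiplication of \lemmaref{special-case} is comfortably dominated by the main term in the stated parameter regime. Collecting all contributions and setting $q=\Theta(2^n)$, $|\calS|=2^k$, $M=2^m$ gives the claimed bound $\eps=O(n\cdot 2^{m+n/4-k/2})$ after absorbing polylogarithmic slack into the additive constant $c$ in the hypotheses on $k$ and $m$.
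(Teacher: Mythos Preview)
Your proof is essentially the paper's argument, unpacked. The paper packages the same idea as a generalized XOR lemma (Lemma~\ref{genxor} via Lemma~\ref{lem:rao-plus}): it first applies the character-sum estimate at the $\dbZ_{q-1}$ level to the pair $(W,W')=(\dlog_g(X+Y),\dlog_g(X+\adv(Y)))$, and then passes through $h$ using Rao's bound $\lone{\widehat{\phi\circ h}}=O((q-1)\log q)$, which is exactly your Dirichlet-kernel estimate $\sum_b|c_{\phi,b}|=O(\log q)$. The only packaging difference is in the $b=0$ term: the paper works with the centered difference $f=(W,W')-(U,W')$ on $\dbZ_{q-1}\times\dbZ_{q-1}$, whose Fourier transform vanishes whenever the first character is trivial, so no separate boundary case arises (the residual non-uniformity of $h(U_{\dbZ_{q-1}})$ on $\dbZ_M$ is then handled by a final $O(M/q)$ correction); you instead keep the uncentered quantity and bound $|c_{\phi,0}|\le M/(q-1)$ directly, which is equally valid.

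One quantitative remark: because you expand both $\phi$ and $\phi'$ and sum $|c_{\phi,b}||c_{\phi',b'}|$, your main term is of order $(\log q)^2 q^{1/4}|\calS|^{-1/2}$, which after the $M$-loss of \lemmaref{special-case} gives $\eps=O(n^2\cdot 2^{m+n/4-k/2})$ rather than the stated $O(n\cdot 2^{m+n/4-k/2})$. A polynomial factor in $n$ cannot be ``absorbed into the additive constant $c$'' as you write. This is a minor discrepancy that does not affect the qualitative conclusion, and in fact the paper's own displayed identity in the proof of Lemma~\ref{lem:rao-plus} (equating $\lone{\widehat{(\phi\circ h,\phi'\circ h)}}$ with $|\langle\widehat{\phi\circ h},\widehat{\phi'\circ h}\rangle|$) is not literally correct either, so the single-$\log$ claim there is also not fully justified as written.
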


The main ingredient in our proof is Rao's generalization of Vazirani's XOR lemma
\cite{Rao:bourgain}.

\subsection{A generalized XOR lemma}

We now extend Rao's generalization of Vazirani's XOR lemma.  We need to modify his lemma because our output won't necessarily be uniform.

\begin{lemma}
\label{genxor}
For every positive integers $M \leq N$, the function
$h:\dbZ_N \to H=\dbZ_M$ defined above satisfies the following property.
Let $W,W'$ be any random variables on $\dbZ_N$ such that for all characters $\phi,\phi'$ on $\dbZ_N$ with $\phi$ nontrivial,
we have $|\expect_{(W,W')}[\phi(W)\phi'(W')]| \leq \alpha$.  Then $(h(W),h(W'))$ is $O(\alpha M \log N + M/N)$-close to the distribution
$(U,W')$, where $U$ is the uniform distribution on $H$ which is independent of~$W'$.
\end{lemma}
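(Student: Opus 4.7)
The plan is to bound the statistical distance via Fourier analysis on $H\times H=\dbZ_M\times\dbZ_M$. Let $f$ denote the signed measure $(h(W),h(W'))-(U,h(W'))$ on $H\times H$. The bound in equation~(\ref{fourier-bound}) gives $\lone{f}\le M\linfty{\widehat f}$, so it suffices to control $|\widehat f(\psi,\psi')|$ uniformly over pairs of characters $\psi,\psi'$ of $\dbZ_M$. Since $\expect[\psi(U)]=0$ for $\psi$ nontrivial, and $\widehat f(\psi,\psi')=0$ whenever $\psi$ is trivial, this reduces to bounding $|\widehat f(\psi,\psi')|=|\expect[\psi(h(W))\psi'(h(W'))]|$ for nontrivial $\psi$.

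The heart of the argument is to transport characters of $\dbZ_M$ through $h$ into linear combinations of characters of $\dbZ_N$. Parametrizing the characters of $\dbZ_M$ by $\psi_j(a)=e^{2\pi i ja/M}$ for $j\in\dbZ_M$, the congruence $h(x)\equiv x\pmod{M}$ yields $\psi_j(h(x))=e^{2\pi i jx/M}$ for every $x\in\{0,1,\dots,N-1\}$. I would then expand this function of $x\in\dbZ_N$ in the character basis $\phi_k(x)=e^{2\pi i kx/N}$ of $\dbZ_N$:
\[ e^{2\pi i jx/M}=\sum_{k\in\dbZ_N} c_k^{(j)}\,\phi_k(x),\qquad c_k^{(j)}=\frac{1}{N}\sum_{x=0}^{N-1}e^{2\pi i x(jN-kM)/(MN)}.\]
Each $c_k^{(j)}$ is thus a finite geometric sum, and two bounds drive the analysis: (i) the ``mean'' coefficient satisfies $|c_0^{(j)}|=O(M/N)$ for $j\ne 0$, computed directly from the geometric sum; (ii) in general $|c_k^{(j)}|\le\min\!\bigl(1,\,M/(2\beta_k)\bigr)$, where $\beta_k$ is the distance from $jN-kM$ to the nearest multiple of $MN$.

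With these expansions in hand, I would substitute into
\[ \widehat f(\psi_j,\psi_{j'})=\sum_{k,k'\in\dbZ_N}c_k^{(j)}c_{k'}^{(j')}\,\expect[\phi_k(W)\phi_{k'}(W')]\]
and split off the row $k=0$. That row reassembles (via the $c_{k'}^{(j')}$ expansion on the $W'$-side) into $c_0^{(j)}\,\expect[\psi_{j'}(h(W'))]$, whose modulus is at most $|c_0^{(j)}|=O(M/N)$. The remaining terms with $k\ne 0$ are controlled by the hypothesis $|\expect[\phi_k(W)\phi_{k'}(W')]|\le\alpha$, and the triangle inequality bounds their total contribution by $\alpha\bigl(\sum_{k\ne 0}|c_k^{(j)}|\bigr)\bigl(\sum_{k'}|c_{k'}^{(j')}|\bigr)$. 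Multiplying the resulting bound on $\linfty{\widehat f}$ by $M$ then delivers the claimed closeness, modulo careful tracking of the logarithmic factors coming from the coefficient sums.

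The main technical obstacle is the $L^1$ estimate $\sum_{k}|c_k^{(j)}|=O(\log N)$ for the ``incompatible'' exponential $e^{2\pi i jx/M}$ on $\dbZ_N$, which is what allows the two coefficient sums above to absorb only a polylogarithmic factor rather than something like $\sqrt{N}$ from Cauchy--Schwarz. This rests on the observation that as $k$ ranges over $\dbZ_N$, the residues $(jN-kM)\bmod MN$ form an arithmetic progression of common difference $M$ inside $\dbZ_{MN}$: at most one $\beta_k$ is anomalously small, where one uses the trivial bound $|c_k^{(j)}|\le 1$, while the remaining residues are separated by successive multiples of $M$ and contribute a harmonic-type sum $\sum_{i\ge 1}1/i=O(\log N)$ through the bound $|c_k^{(j)}|\le M/(2\beta_k)$.
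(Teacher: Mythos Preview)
Your approach is essentially the paper's: expand $\psi_j\circ h$ in the $\dbZ_N$-character basis, invoke the $\ell^1$ estimate $\sum_k|c_k^{(j)}|=O(\log N)$ (this is exactly Rao's lemma, which the paper quotes rather than reproves), and feed in the hypothesis on the nontrivial bicharacters. The one structural difference is that you compare $(h(W),h(W'))$ directly to $(U_M,h(W'))$, whereas the paper goes in two steps: first to $(h(U_N),h(W'))$ via the Fourier argument (Lemma~\ref{lem:rao-plus}), and only then replaces $h(U_N)$ by $U_M$.

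That difference costs you quantitatively. Your $k=0$ row contributes $|c_0^{(j)}|=O(M/N)$ to every $\widehat f(\psi_j,\psi_{j'})$, and after the blow-up $\lone{f}\le M\linfty{\widehat f}$ this becomes $O(M^2/N)$, a factor of $M$ worse than the stated $O(M/N)$. The paper avoids this: in the first step the $k=0$ row vanishes identically because the intermediate comparison is to $h(U_N)$, for which $\widehat{(W,W')-(U_N,W')}(\phi_0,\cdot)=0$; in the second step one bounds $\lone{(h(U_N),h(W'))-(U_M,h(W'))}=\lone{h(U_N)-U_M}\le 2M/N$ directly in $\ell^1$, with no Fourier loss. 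Your ``careful tracking of logarithmic factors'' will not fix this; the fix is the two-step decomposition. On the $\alpha$ term, your triangle-inequality gives $\alpha\bigl(\sum_{k\ne 0}|c_k^{(j)}|\bigr)\bigl(\sum_{k'}|c_{k'}^{(j')}|\bigr)=O(\alpha\log^2 N)$ per coefficient, hence $O(\alpha M\log^2 N)$ overall, one logarithm more than stated; the paper's proof of Lemma~\ref{lem:rao-plus} does not transparently do better (the line $\lone{\widehat{(\phi\circ h,\phi'\circ h)}}=|\langle\widehat{\phi\circ h},\widehat{\phi'\circ h}\rangle|$ there is not an identity), so this discrepancy is not unique to your write-up.
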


To prove Theorem~\ref{arbitraryM} assuming Lemma~\ref{genxor}, we set $N = q-1$, $(W,W') = (\dlog_g (X+Y), \dlog_g (X+\adv(Y)))$, and we condition on
$Y=y$.  
Note that for $\phi$ an additive character, the function $\chi(x) = \phi(\dlog_g (x))$ is a multiplicative character.  Therefore,
Theorem~\ref{charsum} shows that $((W,W')|Y=y)$ satisfies the hypotheses of Lemma~\ref{genxor} for some $\alpha_y$,
where $\expect_{y \leftarrow Y} [\alpha_y] \leq \alpha$ for $\alpha < q^{1/4} 2^{1-k/2} < 2^{n/4 + 2 - k/2}$.
Thus, by Lemma~\ref{genxor}, one finds that $((h(W),h(W'))|Y=y)$ is $O(\alpha_y M \log N + M/N)$-close to $((U,h(W'))|Y=y)$ for every $y$.
Since this expression is linear in $\alpha_y$, we conclude that $(h(W),h(W'),Y)$ is $O(\alpha M \log N + M/N)$-close to $(U,h(W'),Y)$, as required.

We now turn to the proof of Lemma~\ref{genxor}.
First note that Lemma~\ref{special-case} is a special case.
To handle $h$ in the case that $M\not | (q-1)$,
note that a character on a group $G$ has one Fourier coefficient $|G|$ and the rest 0.  We show that if the $\ell_1$-norm
of $\phi \circ h$ is not much bigger than this, then we get the desired conclusion.

\begin{lemma}
\label{lem:rao-plus}
Let $G$ and $H$ be finite abelian groups.  Let $(W,W')$ be a distribution on $G \times G$ with $|\expect_{(W,W')}[(\psi,\psi')(W,W')]| \leq \alpha$
for all nontrivial characters $\psi$ and all characters $\psi'$.  Let $h:G \to H$ be a function such that for every character $\phi$ of $H$, we have that
\[ \lone{\widehat{\phi \circ h}} \leq b|G|. \]
Then $\lone{(h(W),h(W')) - (h(U),h(W'))} \leq b\alpha |H|$.
\end{lemma}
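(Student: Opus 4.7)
The plan is to control each Fourier coefficient $\hat f(\phi,\phi')$ of $f=(h(W),h(W'))-(h(U),h(W'))$ viewed as a function on $H\times H$, and then convert a pointwise bound on $\hat f$ into an $\ell_1$ bound on $f$. First I will write out
\[ \hat f(\phi,\phi') = \expect[\overline{\phi(h(W))\phi'(h(W'))}] - \expect[\overline{\phi(h(U))}]\,\expect[\overline{\phi'(h(W'))}], \]
and then substitute the Fourier inversion $\phi(h(x)) = |G|^{-1}\sum_\psi \widehat{\phi\circ h}(\psi)\psi(x)$ (and the analogous identity for $\phi'\circ h$) to rewrite everything in terms of $\expect[\psi(W)\psi'(W')]$ for characters $\psi,\psi'$ of $G$. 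This reduces the problem to a setting where the hypothesis is directly applicable.

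The key observation is a cancellation: the contribution from $\psi=1_G$ (trivial character of $G$) in the expansion of $\expect[\phi(h(W))\phi'(h(W'))]$ equals exactly $\expect[\phi(h(U))]\expect[\phi'(h(W'))]$, since $\widehat{\phi\circ h}(1_G)=\sum_x\phi(h(x))=|G|\expect[\phi(h(U))]$. Hence these terms cancel against the product-of-expectations subtracted in $\hat f(\phi,\phi')$, leaving a sum only over $\psi\neq 1_G$. In particular, $\hat f(\phi,\phi')=0$ whenever $\phi$ is the trivial character of $H$. On the remaining sum I will apply the hypothesis $|\expect[\psi(W)\psi'(W')]|\leq\alpha$ together with the $\ell_1$-Fourier bound $\lone{\widehat{\phi\circ h}}\leq b|G|$ to estimate $|\hat f(\phi,\phi')|$ pointwise, and conclude via $\lone{f}\leq\ltwo{\hat f}\leq|H|\linfty{\hat f}$ (Cauchy--Schwarz plus Parseval on $H\times H$).

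The main obstacle is obtaining the sharp constant $b\alpha|H|$. A symmetric application of the $\ell_1$-Fourier bound to both $\widehat{\phi\circ h}$ and $\widehat{\phi'\circ h}$ yields only $|\hat f(\phi,\phi')|\leq\alpha b^2$ and hence $\lone{f}\leq b^2\alpha|H|$, which is weaker than claimed. To recover $b\alpha|H|$ the Fourier lift must be applied asymmetrically: one expands in characters of $G$ only the factor $\phi\circ h$ coming from the coordinate where $D_1$ and $D_2$ actually differ, while $\phi'(h(W'))$ is handled as a bounded test function multiplying $\psi(W)$, so that only one factor of $b$ is incurred. Making this asymmetric bookkeeping precise---essentially, passing from $\expect[\psi(W)\phi'(h(W'))]$ to something controlled by $\alpha$ rather than by $\alpha b$ via the structure of the second coordinate being common to both distributions---is the crux of the argument.
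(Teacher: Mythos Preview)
Your approach is the same as the paper's: bound each Fourier coefficient $\widehat g(\phi,\phi')$ of
$g=(h(W),h(W'))-(h(U),h(W'))$ on $H\times H$, then apply $\lone{g}\le |H|\,\linfty{\widehat g}$.
The paper carries this out by writing
\[
\widehat g(\phi,\phi')=\langle (\phi\circ h,\phi'\circ h),\,f\rangle
=\frac{1}{|G|^2}\,\bigl\langle \widehat{(\phi\circ h,\phi'\circ h)},\widehat f\bigr\rangle,
\]
where $f=(W,W')-(U,W')$ on $G\times G$ and $\linfty{\widehat f}\le\alpha$ by Lemma~\ref{special-case}. Your cancellation at the trivial $\psi$ and the observation that $\widehat g(\phi,\phi')=0$ for trivial $\phi$ are exactly the content of that lemma.

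Your worry about $b$ versus $b^{2}$ is well founded, and the paper's resolution of it is precisely the ``asymmetric'' step you anticipate. The paper bounds
\[
\lone{\widehat{(\phi\circ h,\phi'\circ h)}}
\;\le\;\lone{\widehat{\phi\circ h}}\cdot\linfty{\widehat{\phi'\circ h}}
\;\le\;(b|G|)\cdot|G|,
\]
using $\linfty{\widehat{\phi'\circ h}}\le|G|$ trivially. However, this displayed inequality is problematic as written: since $\widehat{(\phi\circ h,\phi'\circ h)}(\psi,\psi')=\widehat{\phi\circ h}(\psi)\,\widehat{\phi'\circ h}(\psi')$, one has
\[
\lone{\widehat{(\phi\circ h,\phi'\circ h)}}=\lone{\widehat{\phi\circ h}}\cdot\lone{\widehat{\phi'\circ h}},
\]
and the $\ell^{1}$ norm is not in general bounded by the $\ell^{\infty}$ norm. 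So the step that saves a factor of $b$ in the paper's proof does not appear to be justified, and your instinct that the direct argument yields only $|\widehat g(\phi,\phi')|\le b^{2}\alpha$, hence $\lone{g}\le b^{2}\alpha|H|$, seems to be the correct conclusion of this method. Your proposed fix (bounding $\bigl|\expect[\psi(W)\,\phi'(h(W'))]\bigr|$ by $\alpha$ rather than $b\alpha$) would require the hypothesis for arbitrary bounded test functions in the second coordinate, not just characters, which is strictly stronger than what is assumed.

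For the application (Lemma~\ref{genxor} and Theorem~\ref{arbitraryM}) this is harmless: there $b=O(\log N)$, so the extra factor only turns $\log N$ into $\log^{2}N$ in the error, which is absorbed by the stated asymptotics. So your write-up with the honest $b^{2}\alpha|H|$ bound is a correct proof of a slightly weaker (but equally usable) statement; you should not expect to recover the sharp constant $b\alpha|H|$ by the route you or the paper take.
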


\begin{proof}
Let $g:H \times H \to \dbC$ be the difference of distributions $(h(W),h(W'))-(h(U),h(W'))$,
and let $f:G \times G \to \dbC$ be the difference of distributions $(W,W')-(U,W')$.
By Lemma~\ref{special-case}, we have $|\linfty{f}| \leq \alpha$.
Let $\phi$ and $\phi'$ be any characters of $H$, with $\phi$ nontrivial.
Then
\begin{eqnarray*}
|\widehat{g}(\phi,\phi')| &=&
|\angles{(\phi,\phi'),g=(h(W),h(W'))-(h(U),h(W'))}|\\
&=& |\angles{(\phi,\phi') \circ h, f=(W,W') - (U,W')}|\\
&=& |\angles{\widehat{(\phi,\phi') \circ h}, \widehat{f}}|/|G|^2\\
&\leq& \lone{\widehat{(\phi,\phi') \circ h}} \linfty{\widehat{f}}/|G|^2\\
&\leq&  \lone{\widehat{(\phi \circ h,\phi' \circ h)}} \cdot \alpha/|G|^2.
\end{eqnarray*}
But now
\[ \lone{\widehat{(\phi \circ h,\phi' \circ h)}} = |\angles{\widehat{\phi \circ h},\widehat{\phi' \circ h}}| \leq \lone{\widehat{\phi \circ h}} \linfty{\widehat{\phi' \circ h}}
\leq (b |G|) |G|. \]
Putting these together yields $|\widehat{g}(\phi,\phi')| \leq b \alpha$.
When $\phi$ is trivial, as in Lemma~\ref{special-case}, one has $\widehat{g}(\phi,\phi') = 0$.
By (\ref{fourier-bound}), this implies $\lone{g} \leq |H| b \alpha$, as required.
\end{proof}

We bound $b$ using the following lemma by Rao, renormalized to our setting.

\begin{lemma}
Let $M<N$ be integers, and let $h:\dbZ_N \to \dbZ_M$ be the function $h(x) = x\mod M$.  Then for every character $\phi$ of $\dbZ_M$,
we have $\lone{\phi \circ h} = O(N\log N)$.
\end{lemma}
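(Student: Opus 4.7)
My approach is to exploit a key algebraic identity that dramatically simplifies $\phi \circ h$, and then to evaluate the resulting Fourier coefficients as geometric sums. Write $\phi = \phi_a$ where $\phi_a(y) = e^{2\pi i ay/M}$ for an integer $a \in \{0,1,\ldots,M-1\}$. Since $a\lfloor x/M\rfloor \in \dbZ$ for every $x \in \dbZ$, we have $e^{-2\pi i a\lfloor x/M\rfloor}=1$, and hence
\[(\phi \circ h)(x) = e^{2\pi i a(x\bmod M)/M} = e^{2\pi i ax/M} \cdot e^{-2\pi i a\lfloor x/M\rfloor} = e^{2\pi i ax/M}\]
for every $x\in\dbZ$. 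In other words, $\phi\circ h$ is just the pure exponential $x\mapsto e^{2\pi i ax/M}$ restricted to $\dbZ_N=\{0,\ldots,N-1\}$: it is no longer a character of $\dbZ_N$ (unless $M\mid aN$), but it is still a geometric progression.

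Next, for any character $\chi_b(x)=e^{2\pi i bx/N}$ of $\dbZ_N$, the Fourier coefficient is the geometric sum
\[\widehat{\phi\circ h}(\chi_b)=\sum_{x=0}^{N-1}e^{2\pi i\theta_b x}, \qquad \theta_b := \frac{a}{M}-\frac{b}{N}.\]
Standard evaluation gives $|\widehat{\phi\circ h}(\chi_b)| = |\sin(\pi N\theta_b)/\sin(\pi\theta_b)|$ when $\theta_b \notin \dbZ$, and in general
\[|\widehat{\phi\circ h}(\chi_b)|\le \min\!\left(N,\,\frac{1}{|\sin(\pi\theta_b)|}\right).\]

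Finally, I would sum over $b\in\dbZ_N$. As $b$ ranges over $\dbZ_N$, the points $\theta_b \pmod 1$ form an arithmetic progression on the circle with spacing $1/N$, so their representatives in $(-1/2,1/2]$ are $N$ equally-spaced real numbers at spacing $1/N$. Using $|\sin(\pi t)|\ge 2|t|$ for $|t|\le 1/2$, the contribution from the single term nearest zero is at most the trivial bound $N$, and the remaining terms contribute at most $2\sum_{k=1}^{N/2} \tfrac{N}{k} = O(N\log N)$. This yields $\lone{\widehat{\phi\circ h}} = O(N\log N)$, as required. (Note that in view of the hypothesis of \lemmaref{lem:rao-plus}, the notation $\lone{\phi \circ h}$ in the statement must mean $\lone{\widehat{\phi \circ h}}$.) The only delicate step is the spacing argument that converts the bound $1/|\sin(\pi\theta_b)|$ into the harmonic sum; once the algebraic identity collapses $\phi\circ h$ to a clean exponential, everything else is a standard Dirichlet-kernel calculation.
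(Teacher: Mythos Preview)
Your proof is correct. The paper does not supply its own proof of this lemma; it simply attributes the statement to Rao \cite{Rao:bourgain} (renormalized), so there is no argument in the paper to compare against. Your observation that $(\phi_a\circ h)(x)=e^{2\pi i a x/M}$ (since $a\lfloor x/M\rfloor\in\dbZ$) is the key simplification, after which the Fourier coefficients are exactly Dirichlet-kernel values and the standard $\min(N,\,1/|\sin(\pi\theta_b)|)$ bound together with the equal spacing of the $\theta_b$'s modulo $1$ gives the $O(N\log N)$ harmonic sum. You are also right that the quantity in the statement must be $\lone{\widehat{\phi\circ h}}$ rather than $\lone{\phi\circ h}$ (the latter is trivially $N$ since $|\phi\circ h|\equiv 1$), consistent with how the lemma is invoked in Lemma~\ref{lem:rao-plus}.
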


Thus, we may take $b=O(\log N)$ in Lemma~\ref{lem:rao-plus}.
Finally, we use the following simple lemma from Rao.

\begin{lemma}
Let $M<N$ be integers, and let $h:\dbZ_N \to \dbZ_M$ be the function $h(x) = x\mod M$.  Then for $U$ the uniform distribution on $\Z_N$,
we have that $h(U)$ is $2M/N$-close to the uniform distribution on $\Z_M$.
\end{lemma}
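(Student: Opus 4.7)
The plan is a direct counting argument: write $N = qM + r$ with $0 \le r < M$, enumerate the preimages of $h$ on $\{0,1,\ldots,N-1\}$, and compare pointwise to $1/M$.

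For any $y \in \dbZ_M$, the set $h^{-1}(y) \cap \{0,1,\ldots,N-1\}$ is exactly the arithmetic progression $\{y, y+M, y+2M, \ldots\}$ intersected with $\{0,\ldots,N-1\}$, which has cardinality $q+1$ when $0 \le y < r$ and cardinality $q$ when $r \le y < M$. Consequently $\Pr[h(U) = y] \in \{q/N,\ (q+1)/N\}$. Since $N/M = q + r/M \in [q, q+1]$, the uniform probability $1/M = (N/M)/N$ lies between $q/N$ and $(q+1)/N$, so in either case $\bigl|\Pr[h(U) = y] - 1/M\bigr| \le 1/N$.

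Summing these pointwise deviations over the $M$ values of $y$ yields $\lone{h(U) - U_{\dbZ_M}} \le M/N$, so the statistical distance satisfies $\Delta(h(U), U_{\dbZ_M}) \le M/(2N) < 2M/N$, matching (in fact beating) the claimed bound. A sharper version of the same computation splits by the two cases to give deviation $(M-r)/(MN)$ on $r$ points and $r/(MN)$ on $M-r$ points, producing $\Delta = r(M-r)/(MN) \le M/(4N)$; the weaker $2M/N$ is enough for downstream use.

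There is no real obstacle here; the statement is elementary. It is included only so that the additive $M/N$ term in the error of \lemmaref{genxor} can be accounted for, capturing the small discrepancy between the distribution $h(U)$ obtained by reducing a uniform element of $\dbZ_N$ modulo $M$ and the truly uniform distribution on $\dbZ_M$ when $M$ does not divide $N$.
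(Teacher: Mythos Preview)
Your argument is correct; the counting is clean and the sharper $r(M-r)/(MN)\le M/(4N)$ bound is also right. Note that the paper does not actually prove this lemma---it is quoted from Rao without proof---so there is no ``paper's own proof'' to compare against, and your direct counting is exactly the natural justification.
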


\end{document}